\newcommand{\authorstrut}{\vphantom{$^{*\alpha\dagger}$}} %
\newcommand{\A}[1]{\authorstrut #1}
\renewcommand*{\@fnsymbol}[1]{%
    \ensuremath{%
        \ifcase#1\or *\or \alpha\or \mathsection\or \ddagger\or
            \dagger\or \mathparagraph\or \|\or **\or
            \dagger\dagger \or \ddagger\ddagger \else\@ctrerr\fi}}
\title{Determination of the fifth Busy Beaver value}
\author{
The bbchallenge Collaboration\thanks{\url{https://bbchallenge.org}}\and
\A{Justin Blanchard}\thanks{Alphabetical ordering.}\and
\A{Daniel Briggs}\and
\A{Konrad Deka}\and
\A{Nathan Fenner}\and
\A{Yannick Forster}\and
\A{Georgi Georgiev (Skelet)}\and
\A{Matthew L. House}\and
\A{Rachel Hunter}\and
\A{Iijil}\and
\A{Maja Kądziołka}\and
\A{Pavel Kropitz}\and
\A{Shawn Ligocki}\and
\A{mxdys}\and
\A{Mateusz Na\'{s}ciszewski}\and
\A{savask}\and
\A{Tristan Stérin}\thanks{Corresponding author: \texttt{tristan@prgm.dev}.}\and
\A{Chris Xu}\and
\A{Jason Yuen}\and
\A{Théo Zimmermann}
}
\definecolor{graySymb1}{RGB}{85,85,85}
\definecolor{graySymb2}{RGB}{170,170,170}
\theoremstyle{definition} %
\newtheorem{theorem}{Theorem}[section]
\newtheorem{conjecture}{Conjecture}[section]
\newtheorem{problem}{Problem}[section]
\newtheorem{definition}{Definition}[section]
\newtheorem{lemma}{Lemma}[section]
\numberwithin{equation}{section}
\theoremstyle{definition} %
\newtheorem{example}{Example}[section]
\newcommand{\ts}[1]{{\color{red}#1}}
\newcommand{\tabi}{\hspace{\algorithmicindent}}
\newcommand{\N}{\mathbb{N}}
\newcommand{\Z}{\mathbb{Z}}
\newcommand{\tm}[1]{\href{https://bbchallenge.org/#1}{\texttt{\nolinkurl{#1}}}}
\definecolor{colorA}{RGB}{255,0,0}
\definecolor{colorB}{RGB}{255,128,0}
\definecolor{colorC}{RGB}{0,0,255}
\definecolor{colorD}{RGB}{0,255,0}
\definecolor{colorE}{RGB}{255,0,255}
\newcommand{\stateA}{{\textcolor{colorA}{A}}\xspace}
\newcommand{\stateB}{{\textcolor{colorB}{B}}\xspace}
\newcommand{\stateC}{{\textcolor{colorC}{C}}\xspace}
\newcommand{\stateD}{{\textcolor{colorD}{D}}\xspace}
\newcommand{\stateE}{{\textcolor{colorE}{E}}\xspace}
\newcommand{\stateAx}{{\textcolor{colorA}{A}}}
\newcommand{\stateBx}{{\textcolor{colorB}{B}}}
\newcommand{\stateCx}{{\textcolor{colorC}{C}}}
\newcommand{\stateDx}{{\textcolor{colorD}{D}}}
\newcommand{\stateEx}{{\textcolor{colorE}{E}}}
\newcommand{\szero}{\texttt{0}\xspace}
\newcommand{\sone}{\texttt{1}\xspace}
\def\@fnsymbol#1{\ensuremath{\ifcase#1\or *\or \dagger\or \ddagger\or
            \mathsection\or \mathparagraph\or \|\or **\or \dagger\dagger
        \or \ddagger\ddagger \else\@ctrerr\fi}}
\newcommand{\ssymbol}[1]{^{\@fnsymbol{#1}}}
\newcommand{\BBtheFourth}{107}
\newcommand{\BBtheFourthTNF}{858{,}909}
\newcommand{\partialto}{\to}
\newcommand{\BBtheFifth}{47{,}176{,}870}
\newcommand{\BBtheFifthTNF}{181{,}385{,}789}
\newcommand{\SigmaTheFifth}{4{,}098}
\newcommand{\BBTxF}{3{,}932{,}964}
\newcommand{\BBTxFTNF}{2{,}154{,}217}
\newcommand{\numBBsholdouts}{1{,}214}
\newcommand{\radofull}{Tibor Rad\'o\xspace}
\newcommand{\rado}{Rad\'o\xspace}
\newcommand{\cycler}{Cycler\xspace}
\newcommand{\cyclers}{Cyclers\xspace}
\newcommand{\TC}{Translated Cycler\xspace}
\newcommand{\TCs}{Translated Cyclers\xspace}
\newcommand{\headpos}{head-position\xspace}
\newcommand{\headposs}{head-positions\xspace}
\newcommand{\states}{\mathcal{S}}
\newcommand{\alphabet}{\mathcal{A}}
\newcommand{\balphabet}{\left\{\szero,\sone\right\}}
\newcommand{\symbolzero}{\texttt{0}}
\newcommand{\numSporadic}{13\xspace}
\newcommand{\ssp}{state-symbol pair\xspace}
\newcommand{\ssps}{state-symbol pairs\xspace}
\newcommand{\HALT}{\texttt{HALT}\xspace}
\newcommand{\NONHALT}{\texttt{NONHALT}\xspace}
\newcommand{\UNKNOWN}{\texttt{UNKNOWN}\xspace}
\newcommand{\CoqBB}{Coq-BB5\xspace}
\newcommand{\TMstep}{\to}
\begin{document}
\date{}

\maketitle

\vspace{-1em}
\begin{abstract}
    The Busy Beaver value $S(n)$ is the maximum~number of steps that an n-state 2-symbol Turing machine can perform from the all-zero tape before halting. $S$ was historically introduced by \radofull in 1962 as one of the simplest examples of an uncomputable function.

    We prove that $S(5) = 47,176,870$ using the Coq proof assistant. The proof enumerates 181,385,789 Turing machines with 5 states and, for each machine, decides whether it halts or not.
    Our result marks the first determination of a new Busy Beaver value in over 40 years and the first Busy Beaver value ever to be formally verified, attesting to the effectiveness of massively collaborative online research (\url{bbchallenge.org}).
\end{abstract}

\setcounter{tocdepth}{2}
\begingroup
\renewcommand{\baselinestretch}{0.85}\normalsize
\tableofcontents
\endgroup

\newpage

\begin{center}

\end{center}
\vspace{-20pt}
\setlength{\epigraphwidth}{0.7\textwidth}
\epigraph{In any case, even though skilled mathematicians and experienced programmers attempted to evaluate $\Sigma(3)$ and S(3), there is no evidence that any presently known approach will yield the answer, even if we avail ourselves of high-speed computers and elaborate programs. As regards $\Sigma(4)$, $S(4)$ the situation seems to be entirely hopeless at present.}{\radofull, 1963 \cite{Rado_1963}}
\vspace{-1em}
\epigraph{\textit{Prediction 5}. It will never be proved that $\Sigma(5) = \SigmaTheFifth$ and $S(5) = \BBtheFifth$.}{Allen H. Brady, 1990 \cite{BradyMeaningOfLife}}
\vspace{2em}

\vspace{-30pt}

\section{Introduction}\label{sec:intro}

\subsection{Main Result}\label{sec:intro:mainresults}

\newcommand{\noncomput}{uncomputable\xspace}
\newcommand{\BBfull}{Busy Beaver\xspace}
\newcommand{\Coq}{Coq\xspace}
\newcommand{\CoqProofReleaseURL}{\url{https://github.com/ccz181078/Coq-BB5}}

\newcommand{\ie}{i.e.~}
\newcommand{\eg}{e.g.~}

Are there simple \noncomput functions? What is the \textit{smallest} open problem in mathematics? What do algorithms look like, \textit{in the wild}?

Introduced by \radofull in 1962, \textit{the \BBfull game} gives a framework to answer these seemingly independent questions, starting with the first one: \rado's original goal was to ``present some very simple instances of non-computable functions'' \cite{Rado_1962}. The game is as such: (i) run all $n$-state 2-symbol Turing machines (see Section~\ref{sec:TMs}) from the all-zero tape; (ii) consider the set of machines that eventually halt; (iii) the winner of the game is the halting machine that has the most \sone symbols on its tape when it halts. This maximum number of \sone symbols on final tape among $n$-state halting machines is called $\Sigma(n)$. \rado also introduced $S(n)$, the maximum number of steps made by a halting $n$-state Turing machine from the all-zero tape.\footnote{We avoid using notation $\text{BB}(n)$ in this work as it historically meant $\Sigma$ \cite{Rado_1962, 10.5555/1151785.1151794} and later shifted to mean $S$ \cite{BusyBeaverFrontier,sterin_2022_14955828}.} Both functions $\Sigma$ and $S$ are \noncomput and this is most obvious in the case of $S$: if an $n$-state machine runs for more than $S(n)+1$ steps, we know it will never halt, giving an algorithm to decide Turing's halting problem\footnote{In the variant where machines are given no input and instead start from the all-zero tape.} if $S$ were computable. Because of this tight link between $S$ and the halting problem, we take the liberty to focus our work on $S$.

While there is no algorithm to compute $S$ for \textit{all} $n$, we can certainly try to compute $S$ for \textit{some} $n$. Prior to this work, only the first four values of $S$ had been proved: $S(1)=1$, $S(2)=6$ \cite{Rado_1962}, $S(3) = 21$ \cite{Lin1963}, and $S(4) = 107$ \cite{Brady83}. With some early attempts in the 1960s and 1970s, the $S(5)$ quest started in earnest in 1983 with a 2-day competition organised at the University of Dortmund\footnote{Report of the competition: \url{https://docs.bbchallenge.org/other/lud20.pdf}.} with the sole goal of finding new 5-state champions -- \ie 5-state machines achieving higher step-count than any previously known machines \cite{PMichel_website,michel2019busy}. The winning machine in Dortmund, found by Uwe Schult, achieved $134{,}467$ steps, establishing $S(5) \geq 134{,}467$. In 1989, significant progress was made when Heiner Marxen and J\"urgen Buntrock found a new champion achieving $\BBtheFifth$ steps \cite{Marxen_1990}, showing $S(5) \geq \BBtheFifth$; this machine is given in Figure~\ref{fig:bb5win}. However, it remained unknown if no other machine could beat it, \ie whether Marxen and Buntrock's machine was the actual 5-state Busy Beaver winner or not. In 2020, based on the lack of a new 5-state champion in 30 years, Scott Aaronson conjectured that it was the winner, and thus, that $S(5) = \BBtheFifth$ \cite{BusyBeaverFrontier}.

Our main result is to prove this conjecture, using the \Coq proof assistant \cite{the_coq_development_team_2024_14542673}, see Theorem~\ref{th:BB5}. The \Coq proof is called \CoqBB and is available at \href{https://github.com/ccz181078/Coq-BB5}{\texttt{github.com/ccz181078/Coq-BB5}} \cite{mxdys_2025_17061968}. We also prove $\Sigma(5) = \SigmaTheFifth$; see Section~\ref{sec:results}, Theorem~\ref{th:Sigma5}. The goal of this paper is to serve as a ``human readable'' version of \CoqBB.

\begin{restatable}[\CoqBB: \texttt{Lemma BB5\_value}]{theorem}{thBBTheFifth}\label{th:BB5}
    $S(5) = \BBtheFifth$.
\end{restatable}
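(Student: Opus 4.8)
The plan is to establish the equality through two matching inequalities: the lower bound $S(5) \ge \BBtheFifth$ and the upper bound $S(5) \le \BBtheFifth$. The lower bound is the easy direction: it suffices to exhibit a single 5-state machine --- Marxen and Buntrock's champion of Figure~\ref{fig:bb5win} --- and certify, by direct simulation on its fixed transition table, that it halts from the all-zero tape after exactly $\BBtheFifth$ steps. In \CoqBB this is a concrete (if long) computation, so essentially all of the mathematical content lies in the upper bound, \ie in showing that \emph{no} 5-state machine runs longer.

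For the upper bound I would reduce the problem to a finite, exhaustive case analysis. Rather than enumerate the astronomically many transition tables directly, I would generate machines in a normalized tree form: partial machines are built state-by-state, a transition is instantiated only when the head first reads a cell whose behaviour is not yet defined, and states and symbols are named canonically as they are discovered. This collapses the search to the $\BBtheFifthTNF$ representatives that must be examined, and one must separately verify that this enumeration is complete --- that every 5-state machine is represented by some leaf, so that settling all representatives settles all machines. The upper bound then amounts to proving, for each representative, that it either halts within $\BBtheFifth$ steps or never halts at all.

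The core difficulty is that ``never halts'' is undecidable in general, so no single algorithm can settle every machine; instead I would deploy a pipeline of sound \NONHALT deciders of increasing strength, each proved in Coq to return \NONHALT only on genuinely non-halting machines (every machine not resolved to \HALT or \NONHALT remaining \UNKNOWN and passed to the next decider). Running each machine up to the step bound immediately disposes of the halting machines and of the \cyclers, whose entire configuration recurs; a \TC decider handles machines whose configuration recurs up to a shift along the tape; and progressively stronger deciders --- backward/closed-set reasoning over reachable configurations, together with finite-automaton and weighted-automaton abstractions of the reachable tape language --- clear the overwhelming bulk of the remaining cases under accelerated (macro-machine) simulation.

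The hard part will be the last handful of \emph{sporadic} machines --- the \numSporadic holdouts that defeat every general decider, such as the \cryptid machines --- whose non-halting resists uniform treatment and must instead be established by a bespoke inductive invariant on their tape configurations and then certified in Coq. Assembling these ingredients --- proving the tree enumeration exhaustive, proving each decider sound, and discharging the sporadic machines individually --- yields that no 5-state machine exceeds $\BBtheFifth$ steps, which combined with the lower bound gives $S(5) = \BBtheFifth$.
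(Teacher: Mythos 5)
Your proposal is correct and follows essentially the same route as the paper: lower bound by direct simulation of the Marxen--Buntrock champion, upper bound by a completeness-verified TNF enumeration of the $\BBtheFifthTNF$ representatives, a pipeline of Coq-verified sound deciders (loop detection, closed-set/tape-language abstractions, finite- and weighted-automaton methods), and individual nonhalting proofs for the \numSporadic Sporadic Machines. One terminological correction: the Sporadic Machines are \emph{not} Cryptids --- Cryptids are machines whose halting problem remains open and is believed mathematically hard, and a key point of the result is that no 5-state Cryptids exist; the Sporadic Machines were all resolved by bespoke invariants exactly as you describe.
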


The function $S$ can naturally be extended to Turing machines using more than two alphabet symbols \cite{BradyMeaningOfLife}; for instance, $S(2,3) = 38$ is the value of $S$ for 2-state, 3-symbol machines \cite{BradyMeaningOfLife, MICHEL200445, LafittePapazian2007}. We prove, using Coq, that $S(2,4) = \BBTxF$, see Theorem~\ref{th:BB2x4} and Figure~\ref{fig:bb2x4}:

\begin{restatable}[\CoqBB: \texttt{Lemma BB2x4\_value}]{theorem}{thBBTxF}\label{th:BB2x4}
    $S(2,4) = \BBTxF$.
\end{restatable}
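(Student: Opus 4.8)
The plan is to reuse, essentially unchanged, the two-sided strategy behind Theorem~\ref{th:BB5}: establish the lower bound $S(2,4) \geq \BBTxF$ by exhibiting a single halting machine, and the matching upper bound $S(2,4) \leq \BBTxF$ by an exhaustive, machine-checked case analysis over all 2-state, 4-symbol Turing machines. Since the same formal notion of a Turing machine underlies both results, I expect the bulk of the \CoqBB infrastructure to carry over once the state and alphabet bounds are changed.

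For the lower bound I would take the champion machine displayed in Figure~\ref{fig:bb2x4}, simulate it in \Coq from the all-zero tape, and verify that it halts after exactly $\BBTxF$ steps. This is a purely finite computation --- a single deterministic simulation whose halting step is checked against the claimed value --- and it immediately certifies $S(2,4) \geq \BBTxF$.

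For the upper bound I would enumerate every 2-state, 4-symbol machine up to the symmetries captured by Tree Normal Form, cutting the naive count down to the $\BBTxFTNF$ representatives that must actually be examined. Each representative is then routed through the decider pipeline: first a bounded simulator that either detects halting at some step $\le \BBTxF$ (recording the step count, so the overall maximum can be read off) or exhausts its budget; then the \cyclers and \TCs deciders; and finally the stronger non-halting certificates developed for $S(5)$, each emitting a \Coq-checkable proof that its target machine never halts. Soundness of every decider, already formalized for Theorem~\ref{th:BB5}, guarantees that any machine declared non-halting truly is, and that no halting machine survives with step-count exceeding $\BBTxF$; combining this exhaustive classification with the lower bound gives the desired equality.

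The main obstacle I anticipate is not the scale of the enumeration, which is far smaller than the 5-state case, but whether the inherited deciders suffice to resolve every 2-state, 4-symbol machine. The larger alphabet could in principle produce holdout machines whose non-halting behaviour is not captured by any decider in the pipeline. Should such holdouts arise, each would require its own individually verified non-halting argument, in the spirit of the hardest cases underlying Theorem~\ref{th:BB5}; confirming that the standard deciders leave no such residue is the step I would expect to demand the most care.
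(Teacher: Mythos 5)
Your proposal matches the paper's proof: \CoqBB enumerates the $\BBTxFTNF$ machines in (quasi-)Tree Normal Form, runs them through a decider pipeline (Loops, NGramCPS, RepWL, plus direct simulation of the 24 long halters up to $\BBTxF$ steps), and reads off the maximum halting step count, with the champion of Figure~\ref{fig:bb2x4} attaining it. The one question you rightly flag --- whether holdouts requiring individual nonhalting proofs arise --- is answered in the negative: unlike the 5-state case, the regular deciders suffice and no sporadic machines or FAR/WFAR certificates are needed.
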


\CoqBB provides formal proofs for $S(5)$ and $S(2,4)$ --- as well as for previously known $S(2),\,S(3),\,S(4)$ and $S(2,3)$. The lists of all the Turing machines enumerated by these proofs, together with their Coq-verified halting status, are available at \url{https://docs.bbchallenge.org/CoqBB5_release_v1.0.0/}.

As a result of our work, we now have a clearer view of the landscape of small Busy Beaver values; see Table~\ref{table:landscape}.

\setlength{\fboxrule}{1.2pt}
\begin{table}[h]
    \centering
    \small
    \renewcommand{\arraystretch}{1.3}
    \setlength{\tabcolsep}{5pt}  %
    \begin{tabular}{c|ccccc}
        \hline
        \textbf{Symbols} & \textbf{2-State}                                               & \textbf{3-State} & \textbf{4-State} & \textbf{5-State} & \textbf{6-State} \\
        \hline
        2                & \cellcolor{green!20}$S(2) = 6$ \cite{Rado_1962}
                         & \cellcolor{green!20}$S(3) = 21$ \cite{Lin1963}
                         & \cellcolor{green!20}$S(4) = 107$ \cite{Brady83}
                         & \cellcolor{green!50}{$S(5) = \BBtheFifth$}
                         & \cellcolor{orange!50}{$S(6) > 2 \uparrow \uparrow \uparrow 5$}                                                                             \\
        \hline
        3                & \cellcolor{green!20}$S(2,3) = 38$ \cite{LafittePapazian2007}
                         & \cellcolor{orange!50}{$S(3,3) > 10^{17}$}
                         & \cellcolor{orange!20}$S(4,3) > 10 \uparrow^4 4 $
                         & --                                                             & --                                                                        \\
        \hline
        4                & \cellcolor{green!50}{$S(2,4) = \BBTxF$}
                         & \cellcolor{orange!20}$S(3,4) > 2 \uparrow^{15} 5 $
                         & --                                                             & --               & --                                                     \\
        \hline
        5                & \cellcolor{orange!50}{$S(2,5) > 10 \uparrow \uparrow 4$}
                         & --                                                             & --               & --               & --                                  \\
        \hline
    \end{tabular}
    \caption{Landscape of small Busy Beaver values.
        \cellcolor{green!20}Cells highlighted in green (that is, those strictly to the left of $S(6)$, $S(3,3)$, and $S(3,4)$) correspond to values for which we provide \Coq proofs. Bright green indicates the new results: $S(5)$ and $S(2,4)$, original to this work.
        All remaining highlighted cells are in orange and indicate the existence of a Cryptid (\ie machines whose halting problem is currently open and we believe to be mathematically hard; see Section~\ref{sec:intro:discuss} and Appendix~\ref{app:cryptids}). Lighter orange means that the existence of a Cryptid comes trivially from reusing a known 3-state 3-symbol Cryptid and ignoring the available additional state or symbol. Lower bounds (see Appendix~\ref{app:lowerbounds}), which come from exhibiting Turing machines achieving at least the given step-counts, are expressed using Knuth's ``up-arrow notation'', which is a way to express iterated exponentiation:
        $a \uparrow b = a^b$ is exponentiation,
        $a \uparrow\uparrow b = a^{a^{\dots^a}}$, called \textit{tetration}, is a tower of powers of $a$ with $b$ occurrences of $a$,
        and higher arrows indicate further levels of iteration \cite{Knuth1976Coping}; \eg three arrows is referred to as \textit{pentation}, so $2 \uparrow \uparrow \uparrow 5 = 2 \uparrow\uparrow \bigl( 2 \uparrow\uparrow ( 2 \uparrow\uparrow ( 2 \uparrow\uparrow 2 ) ) \bigr)$.
    }
    \label{table:landscape}
\end{table}

\vspace{-2ex}
\paragraph{Challenges.} Proving $S(5) = \BBtheFifth$ required analysing the behaviour of $\BBtheFifthTNF$ Turing machines\footnote{Knowing this exact number is only possible after the proof is done: it is as hard as computing $S(5)$; see Section~\ref{sec:enum}.} -- evidently requiring computer assistance. The challenge for analysing a halting machine occurs when it halts after a number of steps that is too enormous to be simulated step-by-step (e.g. the current 6-state champion halts after more than $2 \uparrow \uparrow \uparrow 5$ steps, see Appendix~\ref{app:lowerbounds}); this challenge was not encountered for 5-state machines since they halt in at most $\BBtheFifth$ steps -- this could not have been known for certain in advance but was believed -- which is easy to simulate on modern computers. The challenge for analysing a nonhalting machine is that proving that it does not halt can be hard. How hard?

Dauntingly, any $\Pi_1^0$ mathematical statement\footnote{A $\Pi_1^0$ statement is a statement of first-order logic of the form ``$\forall x, \phi(x)$'' where $\phi$ is a sentence using only bounded quantifiers, implying that for a given $x$, $\phi(x)$ can always be verified by a computer in finite time.} can be encoded as the halting problem of a Turing machine (from all-zero tape). Such statements are common in mathematics and include famous open problems such as Goldbach's conjecture and the Riemann hypothesis \cite{Davis-Matiyasevich-Robinson-1976}. Goldbach's conjecture, formulated in 1742, is one of the oldest open problems in mathematics and states that ``every even positive integer greater than 2 is the sum of two prime numbers''. We can build a Turing machine that halts iff the conjecture is false: by enumerating all even positive integers, and for each, testing all pairwise sums of smaller primes and halting iff we cannot express it as such a sum. A machine performing this procedure has been built using only 25 states, and the construction was formally verified using the Lean theorem prover \cite{GoldbachTM27, GoldbachTM25, DeMouraKongAvigadVanDoornvonRaumer}.

This means that proving the value of $S(25)$ is at least as hard as solving Goldbach's conjecture for two reasons: (i) assuming $S(25)$ is known, we could, impractically, simulate the machine for $S(25)$ steps to see if it has halted to settle the conjecture -- this is unrealistic because $S(25) > f^2_{\omega^2}(4 \uparrow \uparrow 341)$, where $f$ refers to the \textit{Fast Growing Hierarchy} \cite{wikiChampions,wainer1970classification} and (ii) intuitively, determining $S(25)$ requires arguments justifying the halting status of each 25-state machine, including this particular one. Similarly, the Riemann hypothesis has been encoded in a 744-state machine \cite{RiemannTM,Yedidia2016,BusyBeaverFrontier}. As few as 15 states are enough to encode a hard conjecture in number theory by Erd\H{o}s \cite{BB15}. Worse, the consistency of common axiomatic systems such as Peano Arithmetic (PA) or Zermelo–Fraenkel set theory (ZF) is also $\Pi_1^0$ since one can enumerate proofs in these systems until the proof of a contradiction is found. This has been done in practice for ZF, using $748$ states \cite{BB748Thesis,Yedidia2016,BusyBeaverFrontier}. By G\"odel's second incompleteness theorem, this means that proving the value of $S(748)$ cannot be done using ZF. This result has known further improvements to 636 states \cite{RidenourZF636}, and even 432 states \cite{wade2025alignment}, pending verification. Aaronson conjectures that as low as $S(20)$ cannot be proved in ZF and $S(10)$ cannot be proved in PA \cite{BusyBeaverFrontier}.

\enlargethispage{\baselineskip}
Hence, while 5-state halting machines were not feared, it remained unknown how hard proving 5-state machines nonhalting could get. This article settles the question: the smallest open problem in mathematics (on the Busy Beaver scale) does not arise from 5-state machines -- but we have good contenders among 6-state machines; see Section~\ref{sec:intro:discuss} and Cryptids (Appendix~\ref{app:cryptids}).

\paragraph{Related Work.} In 1983, Brady published the proof of $S(4) = \BBtheFourth$ \cite{Brady83}. One of the proof's main innovations was the introduction of a method to solve the halting problem of a category of machines the author calls \textit{Xmas Trees} based on a conversation he had with Shen Lin who had proved $S(3) = 21$ together with \rado. A caveat of the proof resides in the following quote from the paper: ``All of the remaining holdouts were examined by means of voluminous printouts of their histories along with some program extracted features. It was determined to the author's satisfaction that none of these machines will ever stop.'' A \textit{holdout} is a machine still needing a proof of halting/nonhalting. Using Coq, we bring additional confirmation that $S(4) = \BBtheFourth$; see Theorem~\ref{th:BB4}.

\newcommand{\SkeletHoldoutsSporadic}{\ts{XX}\xspace}

We know of two attempts at solving $S(5)$: in 2003, Georgiev (Skelet) published the program \texttt{bbfind} \cite{Skelet_bbfind} which enumerates and decides the halting behaviour of 5-state Turing machines, claiming to leave unsolved 43 holdouts.\footnote{\url{https://skelet.ludost.net/bb/nreg.html} and \url{https://bbchallenge.org/skelet}} Additionally, \texttt{bbfind} left no 4-state holdouts and computed $S(4) = 107$, agreeing with \cite{Brady83}. However, attesting to the validity of these results is difficult as Skelet's program consisted of about $6{,}000$ lines of undocumented Pascal code. That said, it turned out to be instrumental to solving $S(5)$, as \texttt{bbfind}'s ``Closed Position Set'' technique (see Section~\ref{sec:n-gramCPS}) was used, simplified, and improved in order to decide slightly more than 99.87\% of the 5-state Turing machines excluding loops (see Section~\ref{sec:loops}). Also, all of our \numSporadic \textit{Sporadic Machines}, \ie machines for which we needed individual proofs of nonhalting, were either among Skelet's 43 holdouts or claimed to have been manually solved by him -- Section~\ref{sec:sporadic} is dedicated to these longstanding holdouts. Some of Skelet's 43 holdouts were analysed by hand by Briggs starting in 2010 \cite{DanBriggs}. The second known attempt at solving $S(5)$ was in 2008 with Joachim Hertel's ``Symbolic induction prover'', which claimed to leave only $1{,}000$ holdouts, $900$ of which were manually proved not to halt, allegedly leaving only $100$ proper holdouts \cite{Hertel}. In contrast to Skelet's work, the method is documented but, to the best of our knowledge, neither the code nor the 900 claimed manual proofs are made available, making verification of the result difficult apart from attempting to reproduce it from scratch. Arguably, verification would still be tedious if the 900 manual proofs were given.

The \BBfull problem was also studied in models of computation other than \rado's (see Section~\ref{sec:TMs}), including (i) the \textit{quadruple} variation of Turing machines where each transition may move or write a new symbol, but not both \cite{Ross2003,Ross2005}; (ii) \textit{turmites}, which are Turing machines that operate in 2D \cite{BradyMeaningOfLife}; and (iii) lambda calculus \cite{tromp_oeis}. For additional historical perspective on the \BBfull problem, we refer the reader to Pascal Michel's survey and website \cite{michel2019busy,PMichel_website}.

\paragraph{Structure of the proof.} The proof of our main result, Theorem~\ref{th:BB5}, is given in Section~\ref{sec:results}. The structure of the proof is as follows: machines are enumerated arborescently in \textit{Tree Normal Form} (TNF) \cite{Brady64} -- which drastically reduces the search space's size: from $16{,}679{,}880{,}978{,}201$ 5-state machines to ``only'' $\BBtheFifthTNF$; see Section~\ref{sec:enum}. Each enumerated machine is fed through a \textit{pipeline} of proof techniques, mostly consisting of \textit{deciders}, which are algorithms trying to decide whether the machine halts or not. Because of the uncomputability of the halting problem, there is no \textit{universal} decider and all the craft resides in creating deciders able to decide large families of machines in reasonable time. Almost all of our deciders are instances of an abstract interpretation framework that we call \textit{Closed Tape Language} (CTL), which consists in approximating the set of configurations visited by a Turing machine with a more convenient superset, one that contains no halting configurations and is closed under Turing machine transitions (see Section~\ref{sec:deciders-overview}). The $S(5)$ pipeline is given in Table~\ref{tab:pipelineBB5} -- see Table~\ref{tab:pipelineBB2x4} for $S(2,4)$. All the deciders in this work were crafted by The bbchallenge Collaboration; see Section~\ref{sec:deciders}.

In the case of 5-state machines, \numSporadic \textit{Sporadic Machines} were not solved by deciders and required individual proofs of nonhalting, see Section~\ref{sec:sporadic}. These machines include surprising behaviours, such as eventually reaching an infinite loop after more than $5.41\times10^{51}$ steps of chaos (machine ``\href{https://bbchallenge.org/1RB1RD_1LC0RC_1RA1LD_0RE0LB_---1RC}{Skelet \#1}''), base-Fibonacci double counter (machine ``\href{https://bbchallenge.org/1RB0RA_0LC1RA_1RE1LD_1LC0LD_---0RB}{Skelet \#10}''), or obfuscated Gray code (machine ``\href{https://bbchallenge.org/1RB---_0LC1RE_0LD1LC_1RA1LB_0RB0RA}{Skelet \#17}'', \cite{xu2024skelet17fifthbusy}) and they are beautiful examples of \textit{algorithms in the wild}: non-human-engineered algorithms that, like deep sea life, were only found by means of exploration. In that spirit, a coarse \textit{zoology} of 5-state Turing machines is proposed in Section~\ref{sec:zoo}.

\newpage
\paragraph{Collaboratively solving the problem: bbchallenge.org.} In 2022, Stérin created ``The Busy Beaver Challenge'', \url{bbchallenge.org}, an online platform dedicated to collaboratively solving ``$S(5)=\BBtheFifth$'' \cite{sterin_2022_14955828}. Collaboration was motivated by the great amount of Turing machines to study in order to solve the problem. The \texttt{bbchallenge} platform essentially consists of the website, an instant chat \textit{Discord} server\footnote{\url{https://discord.gg/wuZhtTvYU3}}, and a wiki.\footnote{\url{wiki.bbchallenge.org}} The website serves as an entry point to the problem and Turing machine visualisation tools both for studying purposes and for piquing the curiosity of visitors with ``eye candy''. The website also provided a browsable \textit{seed database}\footnote{\url{https://github.com/bbchallenge/bbchallenge-seed}} containing a sufficient set of 5-state Turing machines to prove nonhalting in order to solve $S(5)$. Using this database, the task of contributors was to design deciders (see above). For both the seed database and deciders, trust in the results required a strict validation process: (i) any algorithm had to be reproduced at least once by an independent contributor, with matching results; (ii) a proof of correctness had to be provided (in natural language, as in a regular mathematical article). The use of proof assistants, such as Coq (see after), was merely dreamed of when the project started.

With the surprise release of \CoqBB in the spring of 2024 (see after), both the seed database and the validation process described above were made obsolete (because both the enumeration of machines and the verification of deciders were performed directly by \Coq), but almost all the collaborative work performed on \url{bbchallenge.org} during these 2 years was embedded in the Coq proof -- \CoqBB also contains many original innovations. Also, although now obsolete, the seed database provided during these 2 years a clear indicator of progress with the number of its machines remaining to be decided, which clearly stimulated collaboration. Deciders that were developed by The bbchallenge Collaboration but that were not used in the Coq proof\footnote{Or, in the case of FAR (Section~\ref{sec:FAR}), only partially used.} have been described in \cite{bbchallenge_part1}.

The bbchallenge Collaboration roughly comprises all who participated in the discussions across all our channels (Discord chat, forum, wiki, GitHub, emails), who total hundreds of people; collaborators whose contributions were key to solving $S(5)$ co-author this work, and we acknowledge many others; see Appendix~\ref{app:contribs}.  Some of them are anonymous. We believe that welcoming anonymity played a part in making bbchallenge a place where contributors felt at ease. Most contributors have no academic affiliation and have software engineering related positions or are students. The community seems to be relatively balanced between three geographical zones: North America, Europe, and Asia. A majority of contributors seem to be below the age of 30 but the 30+ age bracket is also well represented. Most contributors never met \textit{In Real Life}. Similarly to the ``build in public'' philosophy in software, our research happened in public with no withholding of information, enabling full reproducibility of the results. Motivated newcomers were able to build on the existing results, finding where they wanted to contribute. In contrast, many newcomers also reported being overwhelmed by the entropic nature of our collaboration.

The bbchallenge Collaboration has a hub-and-spoke structure: typically, single contributors or small subsets of contributors made discoveries (mainly, new deciders) and shared their results on the bbchallenge platform (mainly, on our Discord server). Over the span of two years, collaborators organically joined the project and contributed to deciders: more than 20 independent GitHub repositories of deciders\footnote{Some are listed here: \url{https://wiki.bbchallenge.org/wiki/Code_repositories}.} were shared on the bbchallenge platform, spanning a vast range of languages -- C, C++, Go, Rust, Haskell, \Coq, Dafny, Lean, Python, PHP, etc. One core principle of our collaboration was to welcome contributions in any programming language or technology. That way, the use of \Coq to solve $S(5)$ was by no means imposed but came because of the taste and experience of the collaborators who mainly pushed the formal verification effort: mxdys (\CoqBB) and Kądziołka (\texttt{busycoq} \cite{busycoq}).

As mentioned, most of the collaboration was \textit{discussion-driven}, happening continuously, day and night, on our rather entropic Discord server. At the time of this writing, the Discord server has about $1{,}300$ members, of which about 100 are active monthly, and about $135{,}000$ messages have been exchanged by about 400 people in total since launch in March 2022. Keeping track of the knowledge produced by the collaboration was a challenge and required dedicated ``research maintainers'' whose responsibilities very much resembled those of open-source software project maintainers. We did not have to deal with ``trolls'' and little moderation was necessary on our channels. Our website \href{https://bbchallenge.or}{bbchallenge.org} has had more than 45,000 unique visitors since launch and currently an average of 60 unique visitors per day, with main historical spikes of traffic generated by Hacker News \cite{bmc7505_2023_busybeaver}, Quanta Magazine's article \cite{brubaker2024_bb5_quanta} and YouTube video \cite{quanta2025_busybeaver_video} about our project, as well as references from blogs\footnote{Mainly \url{https://scottaaronson.blog/} and \url{https://sligocki.com}.} and other news reports in national journals (France \cite{larousserie2024_castor}, Austria \cite{derstandard2025_fleissiger_biber}).

\paragraph{Proof assistants, \Coq, and \CoqBB.} Proof assistants are software tools able to express and verify formal proofs; some popular proof assistants are: Agda, Coq\footnote{Now renamed the Rocq Prover: \url{https://rocq-prover.org/about\#Name}}, F*, HOL (HOL4/HOL Light), Isabelle, Lean, Metamath. Milestones in the use of proof assistants include: the Coq proofs of the four-color theorem \cite{gonthier2008formal, gonthier2023computer} and Feit-Thompson theorem on odd orders finite groups \cite{gonthier2013machine}; the HOL Light proof of the Kepler Conjecture \cite{3fdf2c9143a54629bba02e34f24c7698}; the liquid tensor experiment in Lean \cite{2309.14870,LeanCommunityLTE2022}. Proof assistants often come with a fully-fledged programming language allowing to run computations within proofs and/or to write proofs about computer programs (such as correctness proofs), which have both been extensively used in \CoqBB. For any proof assistant, trust in its implementation is required in order to accept as true the results it verifies. The open-source nature of most proof assistants facilitates bug discovery and resolution.

In our case, with hundreds of millions of Turing machines to study, the use of a proof assistant immensely facilitates scientific consensus on the correctness of the proof: for instance, we are assured that no Turing machine was forgotten in the study and that our proofs of nonhalting are correct. Also, the 13 individual proofs for Sporadic Machines contain technical and error-prone arguments that would be harder to verify and trust if not formalised -- \eg a standalone article was dedicated to machine ``Skelet \#17'' \cite{xu2024skelet17fifthbusy} and its \Coq proof is almost $7{,}000$ lines long, see Section~\ref{sec:sporadic}. As mentioned above, \texttt{bbchallenge} was originally not a formal verification project: formal verification happened as an unpredicted event which started as early as 2022 when Fenner verified some deciders using Dafny \cite{dafny_fenner, Dafny}. Not long after, Kądziołka started verifying deciders in Coq as well as providing, together with Yuen, 12 out of the 13 individual proofs of nonhalting for Sporadic Machines; see  \texttt{busycoq} \cite{busycoq}, which were reused in \CoqBB.

\CoqBB, is written in Coq \cite{the_coq_development_team_2024_14542673}, which is a proof assistant and programming language based on the Calculus of Inductive Constructions \cite{CoC} whose development started in 1989. Distinctively, \CoqBB's objects of study are Turing machines (\ie algorithms) instead of more traditional mathematical objects. \CoqBB is available at \url{https://github.com/ccz181078/Coq-BB5} \cite{mxdys_2025_17061968}. \CoqBB was released\footnote{\CoqBB was released in four stages: (i) \href{https://discuss.bbchallenge.org/t/proving-bb-5-in-coq/225}{82 holdouts}; (ii) 1 holdout, \href{https://bbchallenge.org/1RB---_0LC1RE_0LD1LC_1RA1LB_0RB0RA}{Skelet \#17}; (iii) \href{https://discuss.bbchallenge.org/t/july-2nd-2024-we-have-proved-bb-5-47-176-870/237}{full proof}; (iv) \href{https://github.com/ccz181078/Coq-BB5}{optimised proof}.} in the spring of 2024 by contributor mxdys who embedded two years of work done by the bbchallenge collaboration as well as improving and introducing new deciders to finish the proof: \CoqBB is not an \textit{a posteriori} formalisation of existing work and contains many original innovations without which even a nonformalised $S(5)$ proof would have been a lot more complex. \CoqBB totals $27{,}274$ lines of \Coq and $638$ lemmas; plus an additional $10{,}553$ lines of \Coq and $319$ lemmas including imported \texttt{busycoq} proofs.

\CoqBB is one of the most compute-intensive formal proof to date as it implements both the TNF enumeration of the \BBtheFifthTNF\ Turing machines and the deciders (see proof structure above) directly in Coq. The algorithms are proven correct, and the proof runs them and uses their verified outputs to obtain the result. Proofs such as \CoqBB, which rely on computing algorithmic outputs \cite{vmcompute,nativecompute}, are known as ``proofs by reflection'' \cite{boutin1997using}, and the \Coq proof of the four-color theorem also fits in that category \cite{gonthier2010introduction}. As main output of The Busy Beaver Challenge, the list\footnote{Available at \url{https://docs.bbchallenge.org/CoqBB5_release_v1.0.0/}} of all the Turing machines enumerated by the \Coq proof, together with their halting status and decider, were \textit{extracted} from the proof using \Coq's OCaml extraction capabilities; see Section~\ref{sec:results}.

This comprehensive use of \Coq in order to solve $S(5)$, where the TNF enumeration of the $\BBtheFifthTNF$ Turing machines itself is implemented and run in \Coq, positively shocked the community. Indeed, before \CoqBB, consensus within The bbchallenge Collaboration  was that formally verifying the TNF enumeration was arduous and running it too computationally intensive to be implemented within a proof assistant; at best, the belief was that formal verification efforts would rely on an external, trusted enumeration, such as bbchallenge's seed database (see above).

\enlargethispage{\baselineskip}
\CoqBB initially compiled in about 13 hours on a standard laptop, but the use of \Coq's faster computing engine \texttt{native\_compute} \cite{nativecompute} and parallelisation of the proof (see Section~\ref{sec:enum}) brought compile time down to about 45 minutes on 13 cores. \CoqBB strongly benefited from Coq’s fast computing abilities, currently arguably more developed than those of any other proof assistant. Trusting \Coq, the only elements to check in order to trust \CoqBB's results are the main theorem statement and the definitions it uses, which have been isolated in an individual file, \texttt{BB5\_Statement.v}, which is documented with comments and only 121 lines long and requires no \Coq expertise to be read. Since its release, \Coq experts joined our team (see Appendix~\ref{app:contribs}) to review and validate it as well as ruling out the possibility that the proof would try to exploit a potential \Coq bug to falsely claim the results. Finally, after compilation, the proof prints the only axiom that it uses, called \texttt{functional\_extensionality\_dep} from \Coq's standard library, which claims that two functions are equal if they are equal at all points. This axiom is widely accepted, consistent in \Coq, and true in common set-theoretic foundations of mathematics.\footnote{It could be removed, at the cost of unnecessarily complicating the proof.}

\subsection{Discussion}\label{sec:intro:discuss}

\paragraph{Cryptids.} The Busy Beaver game is a concrete attempt at identifying the frontier between the knowable and the unknowable: what is the highest $n$ for which we can prove the value of $S(n)$? Knowing that for $n > 5$, as mentioned in Section~\ref{sec:intro:mainresults}, the proof may involve solving arbitrarily difficult problems or worse, simply be outside of PA or ZF. The Busy Beaver game is great at generating open problems in mathematics!

For instance, in all the unsolved Turing machine classes (orange highlight in Table~\ref{table:landscape}), we have found what we call ``Cryptids'', which are loosely defined as Turing machines whose halting problem from the all-zero tape is believed to be mathematically hard (Appendix~\ref{app:cryptids}). For instance, with 6 states, Antihydra (\tm{1RB1RA_0LC1LE_1LD1LC_1LA0LB_1LF1RE_---0RA}; see Turing machine notation, Section~\ref{sec:TMs}) is a machine that does not halt from the all-zero tape if and only if the following Collatz-reminiscent conjecture (which was discovered from analysing the machine) holds:

\begin{conjecture}[Antihydra does not halt]
    Consider the Collatz-like map $H: \mathbb{N} \to \mathbb{N}$ defined by $H(x) = 3\frac{x}{2}$ if $x$ is even and $H(x) = \frac{3x-1}{2}$ if $x$ is odd. Iterating $H$ from $x=8$, there are never (strictly) more than twice as many odd numbers as even numbers.
\end{conjecture}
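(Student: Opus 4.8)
The plan is to recast the conjecture as a statement about a biased one-dimensional walk and then confront the fact that its increments form a Collatz-like pseudorandom sequence. First I would simplify the map: writing $x = 2k$ or $x = 2k+1$ shows that in both cases $H(x) = 3\lfloor x/2 \rfloor$, so the orbit $a_0 = 8$, $a_{n+1} = H(a_n)$ is just the ``triple the halved value'' iteration. Let $b_n = a_n \bmod 2 \in \{0,1\}$ record the parities and define the counter
\[
    c_N \;=\; 2\,\#\{\,n < N : b_n = 0\,\} \;-\; \#\{\,n < N : b_n = 1\,\},
\]
so $c_N$ increases by $2$ on each even iterate and decreases by $1$ on each odd iterate. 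A short unfolding of the definitions shows that Antihydra halts exactly when $c_N$ first becomes negative; hence the conjecture is equivalent to the assertion that $c_N \ge 0$ for all $N \in \N$.

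The second step is to explain why non-halting is overwhelmingly plausible, which also locates the difficulty. Under the standard heuristic that the parities $b_n$ of a $3x$-type iteration behave like independent fair coin flips, each step changes $c_N$ by $+2$ or $-1$ with equal probability, giving a positive expected drift of $\tfrac12$ per step. Consequently $c_N \sim N/2$ with fluctuations of order $\sqrt N$, the walk escapes to $+\infty$, and the probability that it ever returns to zero is strictly less than one. The orbit of $8$ conveniently begins with three even iterates (yielding $8, 12, 18$), so the counter climbs to $6$ before the first odd value appears, placing the walk deep inside the ``survives forever'' regime. This is exactly the evidence on which the conjecture rests, but it is only heuristic: positive drift guarantees $c_N \to +\infty$ almost surely, yet it does not preclude an early unlucky run of odd iterates from driving some particular $c_N$ below zero.

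To make the heuristic rigorous one would want either (i) an explicit potential function $\Phi(a_n)$, provably bounded below along the orbit, that certifies $c_N$ stays nonnegative, or (ii) a pseudorandomness statement for the parity sequence $b_n$ strong enough to exclude the finitely-describable bad event $\{\exists N: c_N < 0\}$. Neither is available, and this is the main obstacle. Approach (i) fails because the parity of $a_{n+1} = 3\lfloor a_n/2\rfloor$ is governed by the second-lowest bit of $a_n$, so iterating couples all the binary digits and no monotone algebraic invariant is known or expected to exist; approach (ii) is precisely the kind of equidistribution result that is out of reach for $3x$-type maps and constitutes the core difficulty of the Collatz conjecture itself. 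The obstruction is not incidental: the halting of Antihydra is a genuine Collatz-like problem, so any complete proof must supply structural control over the orbit of $8$ that current techniques cannot provide. Accordingly I expect no unconditional proof with present methods; the realistic intermediate targets are to verify $c_N \ge 0$ for astronomically many $N$ by direct iteration, thereby bounding the time of first descent from below, and to prove conditional non-halting under an explicit randomness hypothesis on the parities.
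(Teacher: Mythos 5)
This statement is a \emph{conjecture} in the paper, not a theorem: the paper offers no proof, only the ``probvious'' heuristic that the parities behave like fair coin flips, so the associated $+2/-1$ walk has drift $+\tfrac12$ per step and the halting probability is the gambler's-ruin quantity $\left(\tfrac{\sqrt5-1}{2}\right)^{h}$ for the large height $h$ reached by simulation. Your proposal reduces the conjecture to exactly that walk $c_N$, gives the same drift heuristic, and correctly concludes that no rigorous argument is currently available --- which is precisely the paper's own position (Antihydra is a Cryptid), so your assessment is accurate, with the caveat that neither you nor the paper actually proves the statement.
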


Using Conway's terminology, the conjecture \textit{probviously} (portmanteau of the words probabilistic and obvious) holds because a probabilistic analysis\footnote{See \url{https://wiki.bbchallenge.org/w/index.php?title=Antihydra}.} of the Turing machine suggests that its probability of ever halting is minuscule, namely, $\smash{\left(\frac{\sqrt{5}-1}{2}\right)}^{1073720885}\approx 4.84\times 10^{-224394395}$. However, properly proving that Antihydra does not halt is believed mathematically hard, because of resemblance to the notoriously hard Collatz conjecture \cite{LagariasCollatz}: Antihydra is a Cryptid. Also, the map $H$ was studied before Antihydra was discovered and is known to not generate Sturmian words \cite{DUBICKAS_2009}.

Antihydra is the \textit{smallest}\footnote{We refrained from saying the \textit{simplest}, since there are certainly machines with simpler, yet open, halting problems.} open problem in mathematics, on the Busy Beaver scale. Or, to be exact, one of the smallest, since a dozen other 6-state Cryptids have been identified to date\footnote{See \url{https://wiki.bbchallenge.org/wiki/BB(6)\#Cryptids}}, such as the following, jokingly called, \href{https://wiki.bbchallenge.org/wiki/Beaver_Math_Olympiad}{Beaver Math Olympiad} (BMO) problem:

\begin{problem}[BMO Problem 1]
Let $(a_n)_{n \ge 1}$ and $(b_n)_{n \ge 1}$ be two sequences such that $(a_1, b_1) = (1, 2)$ and
$$(a_{n+1}, b_{n+1}) = \begin{cases}
        (a_n-b_n, 4b_n+2) & \text{if }a_n \ge b_n \\
        (2a_n+1, b_n-a_n) & \text{if }a_n < b_n
    \end{cases}$$

\noindent for all positive integers $n$. Does there exist a positive integer $i$ such that $a_i = b_i$?
\end{problem}

This problem is a reformulation of  ``does \tm{1RB1RE_1LC0RA_0RD1LB_---1RC_1LF1RE_0LB0LE} halt?'' -- the machine halts if there is $i$ such that $a_i = b_i$. Similarly to Antihydra, the machine is probviously nonhalting\footnote{See analysis: {\scriptsize \url{https://wiki.bbchallenge.org/wiki/1RB1RE_1LC0RA_0RD1LB_---1RC_1LF1RE_0LB0LE}}. }, but nonetheless, the problem is still open.

We also know of some probviously halting Cryptids, such as the 3-state 3-symbol machine \tm{1RB2LC1RC_2LC---2RB_2LA0LB0RA}, which probabilistically  has a 100\% chance of halting\footnote{See analysis: {\scriptsize \url{https://wiki.bbchallenge.org/wiki/1RB2LC1RC_2LC---2RB_2LA0LB0RA}}.} and would become the new 3-state 3-symbol champion (considerably extending the current $10^{17}$ bound) if proved to halt. Other Cryptids are neither probviously halting or nonhalting, for instance \tm{1RB1LE_0LC0LB_1RD1LC_1RD1RA_1RF0LA_---1RE} is estimated to have a 3/5 chance of nonhalting and a 2/5 chance of halting.\footnote{See analysis: {\scriptsize \url{https://wiki.bbchallenge.org/wiki/1RB1LE_0LC0LB_1RD1LC_1RD1RA_1RF0LA_---1RE}}.} Note that all these \textit{probvious} arguments are open to dispute by nature: for instance, bad probabilistic models can predict halting with probability 1 of machines that were proved nonhalting.\footnote{Such as for this machine: \url{https://wiki.bbchallenge.org/wiki/1RB0LE_1LC1RA_---1LD_0RB1LF_1RD1LA_0LA0RD}.} In hindsight, it is surprising (and lucky!) that there are no 5-state Cryptids.

Cryptids illustrate the ease with which the Busy Beaver game generates small, non-trivial, open mathematical problems, challenging our intelligence and the limits of mathematical knowledge. The open problems generated by our work have been included in a dataset of Lean-formalised conjectures, designed to serve as future challenges for AI reasoning tools \cite{google-deepmind-formal-conjectures}. More generally, given the sheer amount of problems, with broad variety of difficulty (from very easy to near-impossible), that the Busy Beaver game can generate, the ability to prove known Busy Beaver values or to make progress on unknown ones would be an ambitious benchmark for machine intelligence. \CoqBB has already started being used with that goal in mind: AI systems have been tested on proving its first 100 lemmas of the $S(4)$ proof, with 58\% success so far~\cite{teodorescu2024nlir}.

\paragraph{Trends in collaborative research.} Massively collaborative online research in mathematics and theoretical computer science is a relatively new phenomenon, pioneered by the Polymath Project, which collaboratively solved several problems in mathematics \cite{Gowers2009}. A few differences with \texttt{bbchallenge} come to mind such as (i) our use of instant messaging instead of blog comments as main communication medium and (ii) the non-academic affiliation of most of our contributors -- see Section~\ref{sec:intro:mainresults} -- but the essential philosophy of leveraging collective, distributed, intelligence to solve complex problems is the same. Online communities similar to \texttt{bbchallenge} in structure and size at the time of this writing include \textit{ConwayLife}\footnote{\url{https://conwaylife.com/}}, which performs research on John Conway's Game of Life (and other cellular automata) and \textit{Googology}\footnote{\url{https://googology.fandom.com/wiki/Googology_Wiki} and \url{https://googology.miraheze.org/wiki/Main_Page}}, which performs research on large numbers.

In many ways, \texttt{bbchallenge}'s effort is closer to experimental open-source software project development than to traditional research in mathematics or theoretical computer science: we mainly develop algorithms (deciders), and as a consequence, \CoqBB itself is essentially a collection of algorithms, proved correct.

The current increasing popularity of proof assistants such as \Coq (now renamed the Rocq Prover) or Lean within the mathematical and computer science communities naturally accelerates this convergence between collaborative research and open source software development: researchers can now leverage the same tools and processes as developers (version control, \textit{pull requests}, \textit{issues}, etc.) to massively collaborate on proofs in a scalable way -- \ie not requiring humans to check or trust proofs.

As a concrete example, shortly after we announced our $S(5)$ result, Terence Tao launched a collaborative pilot project in universal algebra, called the ``Equational Theories Project'' (ETP), requiring the proof or disproof of $22{,}028{,}942$ implications. The ETP leveraged GitHub and Lean as their means of collaboration and a \textit{Zulip} chat server for communication \cite{TaoBlog, swh-dir-426b52b, ETPpaper}. In contrast with \texttt{bbchallenge} which is a rather baroque assembly of many technologies with a ``late'' use of theorem provers, the ETP focused efforts on formal verification and Lean \textit{by design} and from the start. The project was extremely successful, attracted more than 50 contributors, and was completed in only a few months.\footnote{See Tao's personal log: \url{https://github.com/teorth/equational_theories/wiki/Terence-Tao's-personal-log}.} In this context, AI seems to have a bright future: either as an enhanced project knowledge base or as a collaborator itself \cite{Trinh2024, wu2024internlm25stepproveradvancingautomatedtheorem}; although its use in projects such as the ETP had limited success~\cite{ETPpaper}.

\subsection{Future Work}\label{sec:fw}

Solving $S(5)$ does not solve all questions about $5$-state Turing machines, for instance we are interested in the following problems: (i) characterising all 5-state counters (see Section~\ref{sec:zoo}); (ii) finding the biggest loop among 5-state machines with no halting transitions\footnote{The TNF enumeration discards machines with no halting transitions as they are obviously nonhalting; see Section~\ref{sec:enum}.} -- we already know of some that are way bigger than Sporadic Machine ``Skelet \#1'' \cite{ligocki2022motherofgiants}; (iii) slightly less related but, is there a universal Turing machine with 5 states (which also brings the question of studying 5-state machines on other tapes than all-zero)?

Progress is ongoing in all unsolved Turing machine classes (orange highlight in Table~\ref{table:landscape}): new deciders are being developed to tackle $S(3,3)$, $S(2,5)$ and $S(6)$, including a generalisation of all the regular CTL deciders presented in this paper (see Section~\ref{sec:deciders}). Most of these new deciders have been formalised using \Coq.\footnote{See: \url{https://github.com/ccz181078/busycoq/tree/BB6/verify}} There currently remain only $4$ holdouts for $S(3,3)$\footnote{\url{https://wiki.bbchallenge.org/wiki/BB(3,3)\#Holdouts}}, including the probviously halting suspected new champion (see Section~\ref{sec:intro:discuss}), and only $60$ holdouts for $S(2,5)$.\footnote{\url{https://wiki.bbchallenge.org/wiki/BB(2,5)\#Holdouts}}

Concerning $S(6)$, the TNF enumeration of 6-state machines, which contains about 33 billion machines, has also been partially implemented in \Coq, and, together with the deciders and about $2{,}000$ individual proofs of nonhalting, ``only'' about $\numBBsholdouts$ holdouts remain to date. Importantly, among these holdouts we have:
\begin{itemize}[label=--]
    \item Antihydra and other Cryptids (see Section~\ref{sec:intro:discuss} and Appendix~\ref{app:cryptids}), which are, most likely, extremely hard problems to solve.
    \item The possible existence of a halting machine that exceeds the current $2\uparrow\uparrow\uparrow5$ champion (see Table~\ref{table:landscape} and Appendix~\ref{app:lowerbounds}), which could require significant analysis or accelerated simulator improvements to be detected.
\end{itemize}

Hence, in perpetuating a longstanding tradition of hope about Busy Beaver values, we predict that $S(6)$ will never be proved.\footnote{Nonetheless, \texttt{bbchallenge.org} welcomes all new contributors interested in the Busy Beaver game!}

\newpage
\section{Turing machines}\label{sec:TMs}

\vspace{-1.4em}

\begin{figure}[ht]
    \centering
    \renewcommand{\arraystretch}{1.3}
    \setlength{\tabcolsep}{6pt}

    \begin{subfigure}[b]{0.28\textwidth}
        \centering
        \begin{tabular}{ccc}
            \toprule
                    & \textbf{0} & \textbf{1} \\
            \midrule
            \stateA & 1R\stateB  & 1L\stateC  \\
            \stateB & 1R\stateC  & 1R\stateB  \\
            \stateC & 1R\stateD  & 0L\stateE  \\
            \stateD & 1L\stateA  & 1L\stateD  \\
            \stateE & ---        & 0L\stateA  \\
            \bottomrule
        \end{tabular}
        \caption{5-state 2-symbol \BBfull winner. This machine was discovered by Marxen and Buntrock in 1989 \cite{Marxen_1990}.}
        \label{table:bb5}
    \end{subfigure}
    \hfill
    \begin{subfigure}[b]{0.31\textwidth}
        \centering
        \includegraphics[width=0.60\linewidth]{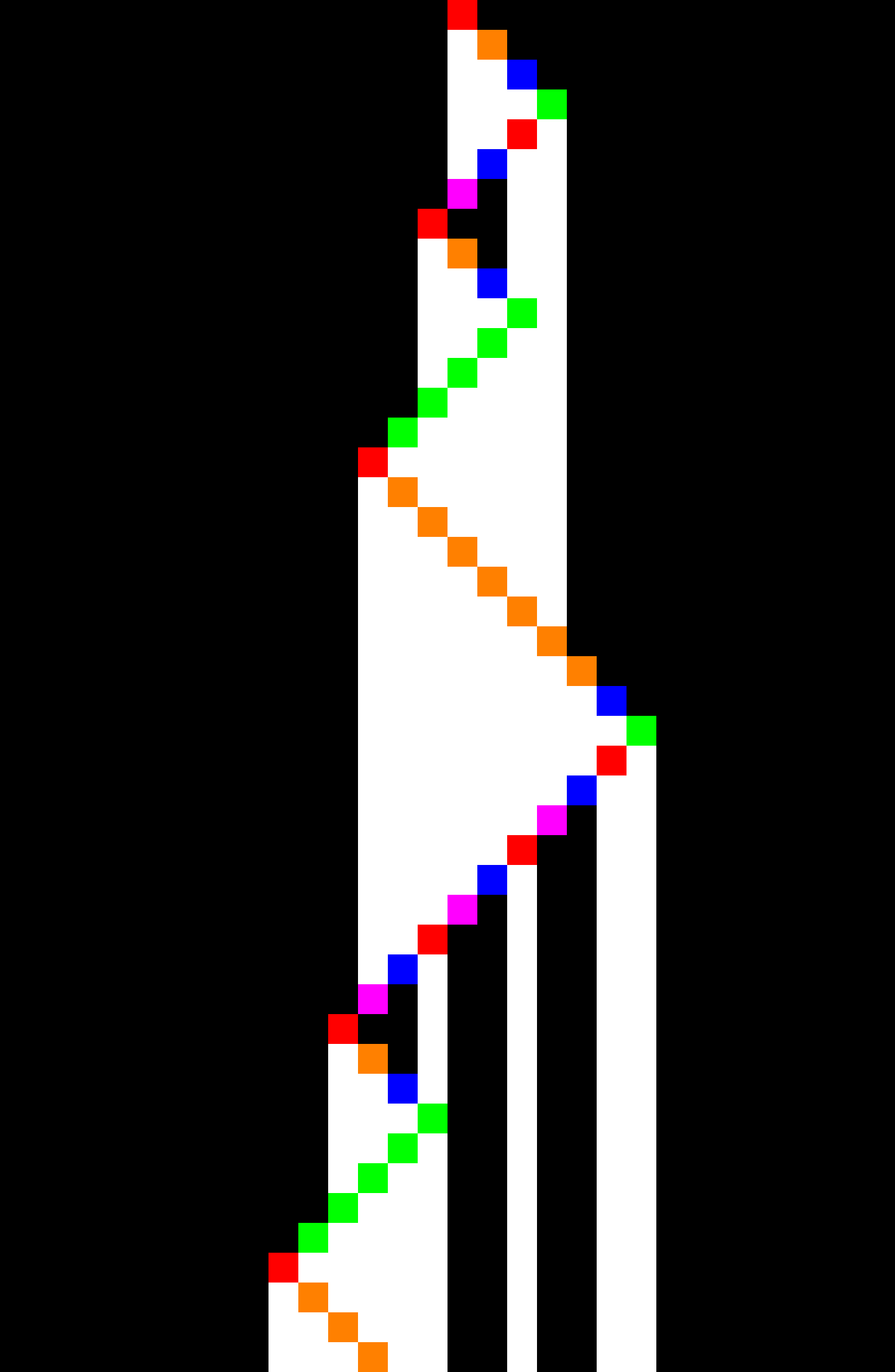}
        \caption{45-step space-time diagram of the 5-state winner. Head position is coloured to indicate state, see (a).}
        \label{fig:bb5-diagram}
    \end{subfigure}
    \hfill
    \begin{subfigure}[b]{0.3\textwidth}
        \centering
        \includegraphics[width=0.8\linewidth]{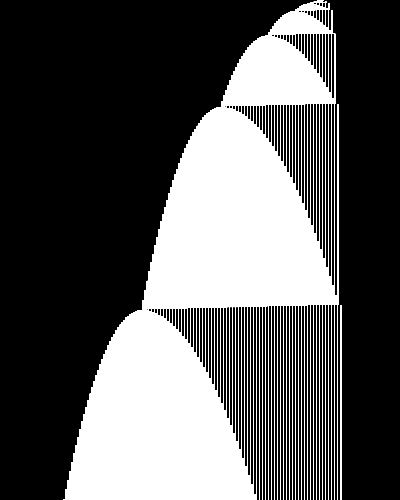}
        \caption{20,000-step space-time diagram of the 5-state winner.}\label{fig:bb5-diagram-zoomout}
    \end{subfigure}

    \caption{Transition table and space-time diagrams of the 5-state 2-symbol \BBfull winner, which halts after 47,176,870 steps. See
        \tm{1RB1LC_1RC1RB_1RD0LE_1LA1LD_---0LA}.}\label{fig:bb5win}
\end{figure}

In this work, $\N = \{0,1,\dots\}$ and $\N^+ = \{1,2,3\dots\}$.

We consider Turing machines that use a single, discrete, bi-infinite tape, \ie the tape can be thought as a function $\tau: \mathbb{Z} \to \alphabet$, where $\alphabet$ is the alphabet of symbols used by the machine. Machine transitions are either undefined (the machine halts if it ever reaches an undefined transition) or given by (i) a symbol of $\alphabet$ to write; (ii) a direction to move (right or left); and (iii) a state to go to. More precisely, the transition table of a Turing machine is a partially defined function $\delta: \states \times \alphabet \partialto \alphabet \times \{\text{L},\text{R}\} \times \states $, with $\states$ the set of states, \eg $\{\stateA,\stateB,\stateC,\stateD,\stateE\}$ for 5-state machines. Figure~\ref{fig:bb5win}(a) gives the transition table of the 5-state 2-symbol \BBfull winner. The machine halts after 47,176,870 steps (starting from all-0 tape) when it reads a \szero in state \stateE for the first time (undefined transition). Allowing for undefined transitions is a small, consequenceless but useful (see Section~\ref{sec:enum}) deviation from \rado's original setup.

In the \BBfull context, machines are always executed from the all-0 tape and starting in state~\stateA. Execution goes as follows: at each step, the machine which is in state $s$ looks at which symbol $\sigma$ is present on the tape cell the head is currently on and then, if defined, executes the instruction given by its transition table, \eg $\delta(s, \sigma) = \szero\text{L}\text{\stateE}$ means that the machine will write a \szero, move the head one cell to the left and switch to state \stateE. If $\delta(s, \sigma)$ is not defined, the machine halts.

A \textit{configuration} (also known as \textit{execution state}) of a Turing machine is defined by the 3-tuple: (i) state; (ii) position of the head on the tape; (iii) content of the tape. As mentioned above, here, \textit{the initial configuration} of a machine is always (i) state is A, i.e. the first state to appear in the machine's description; (ii) head's position is 0; (iii) the initial tape is all-0 -- i.e. each tape cell is containing 0. We write $c_1 \TMstep_\mathcal{M} c_2$ if a configuration $c_2$ is obtained from $c_1$ in one computation step of machine $\mathcal{M}$. We omit $\mathcal{M}$ if it is clear from context. We let $c_1 \TMstep^s c_2$ denote a sequence of $s$ computation steps, and let $c_1 \TMstep^* c_2$ denote zero or more computation steps. %

\vspace{-1ex}
\paragraph{Halting and step count convention.} Halting happens when the machine attempts to run an undefined transition. We write $c \TMstep \bot$ to signify that the machine halts after attempting to run one step from configuration $c$. The number of steps $s\in \N^+$ run by a halting Turing machine includes the final halting step, \eg $s = 1$ for a machine where $\delta(\stateA,\symbolzero)$ is not defined.

\vspace{-1ex}
\paragraph*{Turing machine format.} We often communicate Turing machines using the following linear format: \\ \verb|1RB1LC_1RC1RB_1RD0LE_1LA1LD_---0LA| represents the transition table of Figure~\ref{fig:bb5win}(a), where \texttt{\_} is used to separate states and transitions are given in read-symbol order. Note that, historically, the undefined transition reached by a halting machine was represented using \texttt{1RZ}, hence our format allows the use of any letter outside of the state space to represent halting, \eg \texttt{1RB1LC\_1RC1RB\_1RD0LE\_1LA1LD\_1RZ0LA}, the use of \texttt{1RZ} instead of \verb|---| means that \textit{we know} that the transition is reached and, thus, that the machine halts. Multi-symbol machines are represented in the same way, \eg the 2-state 4-symbol \BBfull winner is \verb|1RB2LA1RA1RA_1LB1LA3RB---| (also given by \texttt{1RB2LA1RA1RA\_1LB1LA3RB1RZ}); see Figure~\ref{fig:bb2x4}. This format can be used as URL on \url{bbchallenge.org} to display space-time diagrams and known information about the machine, e.g. \url{https://bbchallenge.org/1RB1LC\_1RC1RB\_1RD0LE\_1LA1LD\_---0LA}.

\vspace{-1ex}
\paragraph*{Space-time diagrams.} We use space-time diagrams to give a visual representation of the behaviour of a given machine. The space-time diagram of machine $\mathcal{M}$ is an image where the $i^\text{th}$ row of the image gives:
\begin{enumerate}
    \item The content of the tape after $i$ steps (for 2-symbol machines, black is 0 and white is 1, while for $n$ symbols, black is 0, white is symbol $n-1$ and linear grey-scaling is used in between, \eg~Figure~\ref{fig:bb2x4}).
    \item The position of the head is coloured to give state information using the following colours for 5-state machines: \textcolor{colorA}{A},  \textcolor{colorB}{B},  \textcolor{colorC}{C},  \textcolor{colorD}{D},  \textcolor{colorE}{E} (one has to look at the row above to deduce what symbol the head is reading, unless it is the initial row, where a \szero is read).
\end{enumerate}

Figure~\ref{fig:bb5win}(b) gives a 45-step space-time diagram for the 5-state 2-symbol \BBfull winner. We often use \textit{zoomed-out} space-time diagrams without state-coloring information, such as Figure~\ref{fig:bb5win}(c), which gives the first 20,000 steps of the 5-state 2-symbol \BBfull winner. Zoomed-out space-time diagrams depicted in this work use a tape of 400 cells unless stated otherwise; the initial cell is generally at the center of the tape but sometimes offset to the right or left. Figure~\ref{fig:bb2x4} showcases the 2-state 4-symbol \BBfull winner.

\begin{figure}[ht]
    \centering
    \renewcommand{\arraystretch}{1.3}
    \setlength{\tabcolsep}{6pt}

    \begin{subfigure}[b]{0.28\textwidth}
        \centering
        \scalebox{0.93}{
            \begin{tabular}{ccccc}
                \toprule
                        & \textbf{0}~\tikz\fill[black, draw=black] (0,0) rectangle (0.20,0.20);     & \textbf{1}~\tikz\fill[graySymb1, draw=black] (0,0) rectangle (0.20,0.20);

                        & \textbf{2}~\tikz\fill[graySymb2, draw=black] (0,0) rectangle (0.20,0.20); & \textbf{3}~\tikz\fill[white, draw=black] (0,0) rectangle (0.20,0.20);                             \\
                \midrule
                \stateA & 1R\stateB                                                                 & 2L\stateA                                                                 & 1R\stateA & 1R\stateA \\
                \stateB & 1L\stateB                                                                 & 1L\stateA                                                                 & 3R\stateB & ---       \\
                \bottomrule
            \end{tabular}
        }
        \caption{Transition table of the 2-state 4-symbol \BBfull winner found by Ligocki and Ligocki in 2005 \cite{PMichel_website}.}
        \label{table:bb2x4}
    \end{subfigure}
    \hfill
    \begin{subfigure}[b]{0.31\textwidth}
        \centering
        \includegraphics[width=0.60\linewidth]{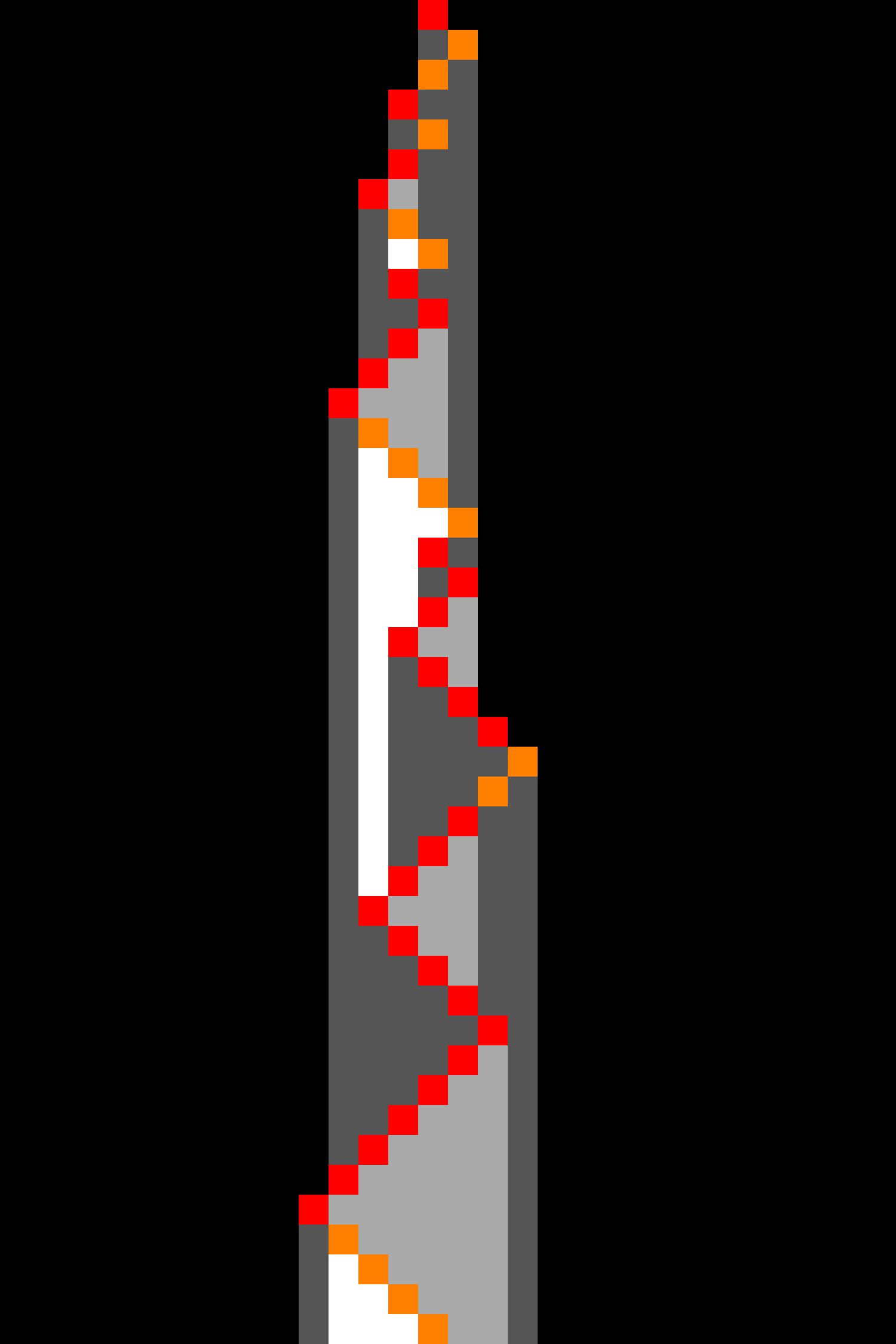}
        \caption{45-step space-time diagram of the 2-state 4-symbol. Head position is coloured to indicate state, see (a).}
        \label{fig:bb2x4-diagram}
    \end{subfigure}
    \hfill
    \begin{subfigure}[b]{0.32\textwidth}
        \centering
        \includegraphics[width=0.8\linewidth]{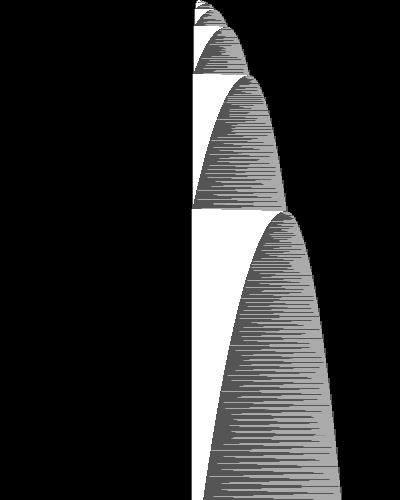}
        \caption{20,000-step space-time diagram of the 2-state 4-symbol winner.}
        \label{fig:bb2x4-diagram-zoomout}
    \end{subfigure}

    \caption{Transition table and space-time diagrams of the 2-state 4-symbol \BBfull winner, which halts after $\BBTxF$ steps. See
        \tm{1RB2LA1RA1RA\_1LB1LA3RB---}.}
    \label{fig:bb2x4}
\end{figure}

\newpage
\section{Enumerating Turing machines in Tree Normal Form (TNF)}\label{sec:enum}
\vspace{-10pt}
\begin{figure}[ht]
    \centering
    \resizebox{0.8\textwidth}{!}{ %
        \begin{tikzpicture}[
                level distance=45mm, %
                sibling distance=70mm, %
                every node/.style={align=center},
                edge from parent/.style={draw, -latex, thick}
            ]

            \newcommand{\turingTable}[2]{
                \adjustbox{valign=t}{
                    \begin{tabular}{ccc}
                        \toprule
                                            & \textbf{0} & \textbf{1} \\
                        \midrule
                        {\color{red} A}     & #1         & #2         \\
                        {\color{orange} B}  & ---        & ---        \\
                        {\color{blue} C}    & ---        & ---        \\
                        {\color{green} D}   & ---        & ---        \\
                        {\color{magenta} E} & ---        & ---        \\
                        \bottomrule
                    \end{tabular}
                }
            }

            \node (root) at (0,0) {\turingTable{---}{---}};

            \node (child1) at (-6,-4) {\turingTable{0R\stateA}{---}};
            \node (child2) at (-2,-4) {\turingTable{1R\stateA}{---}};
            \node (child3) at (2,-4) {\turingTable{0R\stateB}{---}};
            \node (child4) at (6,-4) {\turingTable{1R\stateB}{---}};

            \draw[-latex, thick] (root.south) -- (child1.north);
            \draw[-latex, thick] (root.south) -- (child2.north);
            \draw[-latex, thick] (root.south) -- (child3.north);
            \draw[-latex, thick] (root.south) -- (child4.north);

            \draw[magenta, thick, line width=0.8mm]
            ($(root.north west) + (0.9cm,-1.25cm)$)
            rectangle
            ($(root.north west) + (1.7cm,-0.85cm)$);

            \draw[magenta, thick, line width=0.8mm]
            ($(child3.north west) + (1.08cm,-1.67cm)$)
            rectangle
            ($(child3.north west) + (1.88cm,-1.27cm)$);

            \draw[magenta, thick, line width=0.8mm]
            ($(child4.north west) + (1.08cm,-1.67cm)$)
            rectangle
            ($(child4.north west) + (1.88cm,-1.27cm)$);

            \node at ($(root.north) + (0,0.1cm)$) {\textbf{TNF Root}};

            \node at ($(child1.south) + (0,-0.3cm)$) {\textbf{\textit{Does not halt!}}};
            \node at ($(child2.south) + (0,-0.3cm)$) {\textbf{\textit{Does not halt!}}};

            \foreach \i in {230.0,237.27,244.55,251.82,259.09,266.36,273.64,280.91,288.18,295.45,302.73,310.0} {
                    \draw[dashed, thick] ($(child3.south) + (0,0.1cm)$) -- ++(\i:1.5cm);
                }

            \foreach \i in {230.0,237.27,244.55,251.82,259.09,266.36,273.64,280.91,288.18,295.45,302.73,310.0} {
                    \draw[dashed, thick] ($(child4.south) + (0,0.1cm)$) -- ++(\i:1.5cm);
                }

            \node at ($(child3.south) + (0,-1.7cm)$) {\textbf{\textit{12 children}}};
            \node at ($(child4.south) + (0,-1.7cm)$) {\textbf{\textit{12 children}}};

            \node at ($(child1.south) + (0,-0.9cm)$) {\textbf{\textit{No children}}};
            \node at ($(child2.south) + (0,-0.9cm)$) {\textbf{\textit{No children}}};

        \end{tikzpicture}
    }
    \caption{First-level children of the Tree Normal Form (TNF) enumeration of 5-state 2-symbol Turing machines: each node is a Turing machine, nonhalting machines are leaves of the tree. Internal nodes are halting machines, \ie machines eventually reaching an undefined transition (highlighted in magenta), and their children correspond to all the ways to define this undefined transition. By symmetry, at the first level of the TNF tree, we can ignore machines taking a left move. The two halting machines at the first level of the tree each have 12 children, corresponding to all the choices in $\{\szero,\sone\}\times\{\text{R},\text{L}\}\times\{\text{\stateA},\text{\stateB},\text{\stateC}\}$ for defining the magenta transition. Note that, in this case, the choice of states is reduced from $\{\text{\stateA},\text{\stateB},\text{\stateC},\text{\stateD},\text{\stateE}\}$ to $\{\text{\stateA},\text{\stateB},\text{\stateC}\}$ in order to prevent constructing machines that are only a permutation of one another.}\label{fig:TNF}
\end{figure}

Syntactically, as defined in Section~\ref{sec:TMs}, there are $(2ns + 1)^{ns}$ Turing machines with $n$ states and $s$ symbols. This gives $21^{10} \simeq 1.67\times10^{13} \simeq 16.7 \text{ trillion}$ possible 5-state 2-symbol Turing machines. However, naively counting Turing machines this way does not account for two phenomena:
\begin{enumerate}
    \item \textbf{Unreachable transitions.} Take the 5-state 2-symbol machine where only the first transition is defined as \texttt{0RA} -- the leftmost machine in Figure~\ref{fig:TNF}. This machine is the archetypal Turing machine equivalent of a ``while True'' infinite loop: the machine will never leave the transition, indefinitely drifting to the right of the tape. Hence, none of the $21^9$ machines obtained by defining the other 9 transitions are relevant since these transitions are never reached.
    \item \textbf{State/symbol permutations.} Permuting non-\stateA states and non-zero symbols (\stateA and 0 are special because the initial configuration is the all-0 tape in state \stateA) creates identical machines up to renaming, hence studying the halting of only one of them is enough. State/symbol permutation divides the syntactic space size by a factor of $(n-1)! (s-1)!$.
\end{enumerate}

Tree Normal Form (TNF) enumeration, introduced by Brady in 1963 in his PhD thesis \cite{Brady64} and illustrated in Figure~\ref{fig:TNF}, solves both of these problems: Turing machines are recursively \textit{discovered} starting from the machine with no transitions defined (TNF root). Each enumerated machine is processed by a \textit{pipeline of deciders} (see Section~\ref{sec:deciders}) which will output either \HALT, \NONHALT or \UNKNOWN for each machine:
\begin{itemize}
    \item \HALT. If the machine halts, such as the rightmost machine in Figure~\ref{fig:TNF}, it means that it has met an undefined transition and children of the machine correspond to all the possible ways of defining that undefined transition (highlighted in magenta in Figure~\ref{fig:TNF}). Avoiding redundant state/symbol permutations is dealt with at this point by imposing an order on the yet-to-be-seen states/symbols, \eg children of the rightmost machine in Figure~\ref{fig:TNF} will choose between states $\{\text{\stateA},\text{\stateB},\text{\stateC}\}$ instead of $\{\text{\stateA},\text{\stateB},\text{\stateC},\text{\stateD},\text{\stateE}\}$ since $\text{\stateC}$ is the next unseen state.

    \item  \NONHALT. If the machine does not halt, all its remaining undefined transitions are unreachable and the machine is a leaf of the TNF tree.
    \item \UNKNOWN. If the halting status of a machine remains unknown, it is put in the basket of \textit{holdouts}, \ie machines that remain to be decided. Having solved $S(5)$ means that there are no more 5-state holdouts.
\end{itemize}

Hence, by design, TNF enumeration avoids machines with unreachable transitions and state/symbol permutations. One further optimization in the TNF algorithm is,
at the first level of the TNF tree (see Figure~\ref{fig:TNF}), to avoid machines that make a first move to the left, as they can be symmetrised to go to the right instead, \eg for 5-state 2-symbol machines, this makes the TNF root have 4 children instead of 8. It is also known that only considering machines that first write a 1 is enough to conclude the value $S$, but, for simplicity, this is not used in our work \cite{Marxen_1990,busycoq}. In practice, and in the counts of Table~\ref{tab:TNF-numbers}, leaves of the TNF tree, that have all their transitions defined, are not enumerated because they are obviously nonhalting -- hence not relevant for computing $S$.

\begin{table}[h!]
    \centering
    \begin{tabular}{|l|r|r|r|r|}
        \hline
        $S(n)$ & Nonhalt     & Halt       & Total       & Syntactic/TNF ratio \\
        \hline
        $S(2)$ & 42          & 19         & 61          & $107$               \\
        $S(3)$ & 3,645       & 1,772      & 5,417       & $891$               \\
        $S(4)$ & 609,216     & 249,693    & 858,909     & $8,121$             \\
        $S(5)$ & 133,005,895 & 48,379,894 & 181,385,789 & $91{,}958$          \\
        \hline
    \end{tabular}

    \caption{TNF metrics for $S(2),\dots,S(5)$: number of nonhalting and halting machines the TNF tree, total number of TNF-enumerated machines and ratio between $(4n+1)^{2n}$, which is the syntactic number of $n$-state 2-symbol machines, and the number of machines in the TNF enumeration.}\label{tab:TNF-numbers}
\end{table}

TNF is unreasonably effective (Table~\ref{tab:TNF-numbers}): for 5-state 2-symbol machines, it reduces the search-space from 16 trillion to just $\BBtheFifthTNF$ machines. The variant of TNF that only enumerates machines that start by writing a $1$ was implemented (with a step limit instead of deciders) by Marxen and Buntrock in 1989 to find the fifth \BBfull winner (Figure~\ref{fig:bb5win}), taking about 10 days to run at the time \cite{Marxen_1990}. This same variant (using a $\BBtheFifth$-step limit, no deciders) was implemented in 2022 for \url{bbchallenge.org}’s seed database (see Section~\ref{sec:intro:mainresults}), which yielded 88,664,064 holdouts  \cite{sterin_2022_14955828}. This database, which had to be trusted at the time, became obsolete with the release of \CoqBB. 


\paragraph{\CoqBB TNF implementation.} TNF enumeration, as described here, is implemented in \CoqBB for the proofs of $S(2),\dots,S(5)$; see file \texttt{TNF.v}. A \texttt{SearchQueue} abstraction with DFS capabilities is implemented, see function \texttt{SearchQueue\_upds}. The search queue is initialised with the TNF root (this is most obvious in the proofs of $S(<5)$, see file \texttt{BB4\_TNF\_Enumeration.v}), and deciders (see Section~\ref{sec:deciders}) are run on the enumerated Turing machines. Halting machines' children are added to the queue: the goal of the proof is to empty the queue. Importantly, \texttt{Lemma\ SearchQueue\_upd\_spec} ensures that $S$ can be computed considering only TNF-enumerated machines.

Taking advantage of the tree structure, compilation of the $S(5)$ proof was parallelised by isolating the 12 children of the rightmost machine in Figure~\ref{fig:TNF} in separate, independent files; see folder \texttt{BB5\_TNF\_Enumeration\_Roots/}. Parallelising the compilation made the proof compile in 3 hours (on 13 cores) instead of 13 hours. Switching to \Coq's more powerful \texttt{native\_compute} engine \cite{nativecompute} further brought parallel compilation time down to 45 minutes. This compilation time could be improved further by splitting the tree in even more files with only RAM and number of cores as limiting factors.

\CoqBB's proof of $S(2,4)$ implements TNF almost exactly in the way described here but for the fact that, for simplicity, it does not impose an order on non-0 symbols meaning that identical machines up to symbol renaming are enumerated. In this ``quasi-TNF'' setup, the proof of $S(2,4)$ enumerates 2,154,217 machines of which 1,432,880 are nonhalting.

Being able to perform the TNF enumeration of 5-state 2-symbol Turing machines directly in \Coq came as a surprise for most \url{bbchallenge.org} collaborators when \CoqBB was released. Results of the enumeration (\ie list of machines with halting status and decider) were extracted from the proof and made available at \url{https://docs.bbchallenge.org/CoqBB5_release_v1.0.0/}.

\paragraph{TNF normalisation.} To compute the TNF-equivalent of an arbitrary Turing machine, states are reordered by first-visit order, and all R moves are swapped with L (and vice versa) if the initial move is L. This process is not computable in general, as we cannot determine in advance which states will be visited. However, for $n$-state machines, knowing $S(n-2)$ makes TNF normalisation computable.\footnote{Apart from being able to remove unreachable transitions from the initial machine; knowing $S(n)$ solves this. } 


\section{Deciders}\label{sec:deciders}

\subsection{Pipelines}\label{sec:pipelines}

In this work, we call a \textit{decider} a program that takes as input a Turing machine $\mathcal{M}$ and that returns in finite time either \HALT, \NONHALT, or \UNKNOWN depending on whether it was able to detect the machine's halting status or not.\footnote{Hence, we do not use the word ``decider'' in the traditional sense of theoretical computer science since, although our deciders finish, they are partial.}

A \textit{pipeline} consists of applying different proof techniques in sequence, mostly consisting of deciders: a machine is tested by each decider successively until one of them outputs \HALT or \NONHALT. We call \textit{Sporadic Machines} the \numSporadic machines that were not solved by any decider but using individual proofs of nonhalting instead, see Section~\ref{sec:sporadic}. Table~\ref{tab:pipelineBB5} gives an approximation of the pipeline implemented in \CoqBB in order to prove $S(5) = \BBtheFifth$, see Theorem~\ref{th:BB5}. Similarly, Table~\ref{tab:pipelineBB2x4} and Table~\ref{tab:pipelineBB4} respectively give approximations of the pipelines leading to $S(2,4) = \BBTxF$ and $S(4) = \BBtheFourth$ -- the latter confirming the result for $S(4)$ originally given in \cite{Brady83}.

The exact pipelines are provided in Appendix~\ref{app:pipelines}. They differ mainly in the specific parameters and, occasionally, the algorithmic variants used for each decider. In some cases, deciders are interleaved -- for example, the loop decider is initially invoked with a small step-count parameter, followed by other deciders, and then called again later with a higher step-count. 


\begin{table}[h!]
  \centering
  \resizebox{\textwidth}{!}{%
    \begin{tabular}{|l|rrr|}
      \hline
      $S(5)$ pipeline                                                                      & Nonhalt                         & Halt                           & Total decided \\
      \hline
      1. Loops, see \textbf{Section~\ref{sec:loops}}                                       & 126,994,099                     & 48,379,711                     & 175,373,810   \\
      2. $n$-gram Closed Position Set (NGramCPS), see \textbf{Section~\ref{sec:n-gramCPS}} & 6,005,142                       & 0                              & 6,005,142     \\
      3. Repeated Word List (RepWL), see \textbf{Section~\ref{sec:RepWL}}                  & 6,577                           & 0                              & 6,577         \\
      4. Finite Automata Reduction (FAR), see \textbf{Section~\ref{sec:FAR}}               & 23                              & 0                              & 23            \\
      5. Weighted Finite Automata Reduction (WFAR), see \textbf{Section~\ref{sec:WFAR}}    & 17                              & 0                              & 17            \\
      6. Long halters (simulation up to $\BBtheFifth$ steps)                               & 0                               & 183                            & 183           \\
      7. Sporadic machines, individual proofs, see \textbf{Section~\ref{sec:sporadic}}     & 13                              & 0                              & 13            \\
      8. \texttt{1RB}-reduction, see \textbf{Section~\ref{sec:deciders-overview}}          & 24                              & 0                              & 24            \\ \hline
      Total                                                                                & \multicolumn{1}{r}{133,005,895} & \multicolumn{1}{r}{48,379,894} & 181,385,789   \\ \hline
    \end{tabular}
  }
  \caption{Approximation of the $S(5)$ pipeline as implemented in \CoqBB. All the $\BBtheFifthTNF$ enumerated 5-state machines are decided by this pipeline, which solves $S(5) = \BBtheFifth$, see Theorem~\ref{th:BB5}. The exact pipeline, with deciders parameters is given in Appendix~\ref{app:pipelines}. }
  \label{tab:pipelineBB5}
\end{table}

\begin{table}[h!]
  \centering
  \begin{tabular}{|l|rrr|}
    \hline
    $S(2,4)$ pipeline                                                                    & Nonhalt   & Halt    & Total decided \\
    \hline
    1. Loops, see \textbf{Section~\ref{sec:loops}}                                       & 1,263,302 & 721,313 & 1,984,615     \\
    2. $n$-gram Closed Position Set (NGramCPS), see \textbf{Section~\ref{sec:n-gramCPS}} & 163,500   & 0       & 163,500       \\
    3. Repeated Word List (RepWL), see \textbf{Section~\ref{sec:RepWL}}                  & 6,078     & 0       & 6,078         \\
    4. Long halters (simulation up to $\BBTxF$ steps)                                    & 0         & 24      & 24            \\
    \hline
    Total                                                                                & 1,432,880 & 721,337 & 2,154,217     \\ \hline
  \end{tabular}
  \caption{Approximation of the $S(2,4)$ pipeline as implemented in \CoqBB. All the $\BBTxFTNF$ enumerated 2-state 4-symbol machines are decided by this pipeline, which solves $S(2,4) = \BBTxF$, see Theorem~\ref{th:BB2x4}. The exact pipeline, with deciders parameters is given in Appendix~\ref{app:pipelines}. }\label{tab:pipelineBB2x4}
\end{table}

\begin{table}[h!]
  \centering
  \begin{tabular}{|l|rrr|}
    \hline
    $S(4)$ pipeline                                                                      & Nonhalt & Halt    & Total decided \\
    \hline
    1. Loops, see \textbf{Section~\ref{sec:loops}}                                       & 588,373 & 249,693 & 838,066       \\
    2. $n$-gram Closed Position Set (NGramCPS), see \textbf{Section~\ref{sec:n-gramCPS}} & 20,841  & 0       & 20,841        \\
    3. Repeated Word List (RepWL), see \textbf{Section~\ref{sec:RepWL}}                  & 2       & 0       & 2             \\
    \hline
    Total                                                                                & 609,216 & 249,693 & 858,909       \\
    \hline
  \end{tabular}
  \caption{Approximation of the $S(4)$ pipeline as implemented in \CoqBB. All the $\BBtheFourthTNF$ enumerated 4-state machines are decided by this pipeline, which brings confirmation that $S(4) = \BBtheFourth$ \cite{Brady83}, see Theorem~\ref{th:BB4}. The exact pipeline, with deciders parameters is given in Appendix~\ref{app:pipelines}. }\label{tab:pipelineBB4}
\end{table}

\newpage

\subsection{Deciders overview}\label{sec:deciders-overview}

Five deciders are used in \CoqBB to solve $S(5)$, see Table~\ref{tab:pipelineBB5}: Loops, $n$-gram Closed Position Set (NGramCPS), Repeated Word List (RepWL), Finite Automata Reduction (FAR) and Weighted Finite Automata Reduction (WFAR). They are individually described in Sections~\ref{sec:loops} to \ref{sec:WFAR}. To the best of our knowledge, all these deciders are original.\footnote{There existed a previous algorithm to decide loops \cite{Lin1963}, but we present a new one.} Solving $S(2,4)$ (and previously known $S(4)$ \cite{Brady83}) only used  a subset of these deciders (Tables~\ref{tab:pipelineBB2x4}~and~\ref{tab:pipelineBB4}) and required a lot less compute and overall effort -- \eg no individual nonhalting proofs, in contrast with 5-state Sporadic Machines, Section~\ref{sec:sporadic}.

All these deciders can be expressed\footnote{In practice, all deciders but Loops (for which the CTL framework is not useful) \textit{are} expressed using the CTL framework.} in the same general framework, known as Closed Tape Language (CTL) which is an abstract interpretation idea attributed to Marxen and Buntrock and was first documented by Ligocki \cite{ShawnCTL}:
\vspace{-1ex}
\paragraph{General framework: Closed Tape Language (CTL).} For a given Turing machine, assume there is a set $C$ of configurations such that:
\begin{enumerate}
  \item $C$ contains the initial all-0 configuration.
  \item $C$ is \textit{closed} under transitions: for any $c \in C$, the configuration one step later belongs to $C$.
  \item $C$ does not contain any halting configuration.
\end{enumerate}

Then, the machine does not halt from any configuration of $C$ and, in particular, from the initial all-0 configuration.

\vspace{-1ex}
\paragraph{Regularity.} All our deciders but WFAR are instances of \textit{regular} CTL, meaning that set $C$ is a regular language -- \ie $C$ is described using a Finite State Automaton (FSA) or, equivalently, a regular expression. Said otherwise, regular CTL approximates the set of configurations of a Turing machine using a regular language that is larger than the machine's set of configurations but on which it is easier (in practice, trivial) to ensure CTL conditions, \ie (i) membership of the all-0 configuration (ii) closure under Turing machine steps and (iii) absence of halting configurations. NGramCPS and RepWL each focus on specific types of regular languages and are good introductory examples to illustrate this method, while FAR generalises to arbitrary regular languages. We say that a machine is \textit{regular} if it can be proven nonhalting (from all-0 tape) using regular CTL, otherwise we say it is \textit{irregular}. Other regular CTL deciders were developed by The bbchallenge Collaboration, but were not used to solve $S(5)$ \cite{BruteforceCTL, SymbolicTM, regexy, bbchallenge_part1}.

\vspace{-1ex}
\paragraph{Irregularity.} WFAR is an analogue of FAR, leveraging CTL using \textit{Weighted} FSAs which are a nonregular generalisation of FSAs, see Section~\ref{sec:WFAR}. Not all machines solved by WFAR are necessarily irregular but we strongly suspect that the 17 machines it solves in the $S(5)$ pipeline (Table~\ref{tab:pipelineBB5}) are irregular because intensive search did not allow finding regular CTL solutions for them. Informal irregularity arguments have been given for sporadic machines ``Finned~\#3'' and ``Skelet~\#17'' \cite{irregularFinned3, irregularSk17}, see Section~\ref{sec:sporadic}.

Interestingly, only regular deciders are needed to solve $S(4)$ and $S(2,4)$, see Tables~\ref{tab:pipelineBB2x4}~and~\ref{tab:pipelineBB4}, which, assuming $S(5)$ irregularity arguments are correct, draws a conceptual line separating $S(4)$ from $S(5)$.

\vspace{-1ex}
\paragraph{Deciders vs. verifiers.} While we put all the methods under the decider umbrella it is worthwhile to mention that, in \CoqBB, only the \textit{verifier} part of FAR and WFAR are implemented. This means that instead of searching for a CTL set $C$ (see above), it is given and verified correct: \CoqBB hardcodes a total of $40$ FAR /
WFAR certificates which are FSAs / Weighted FSAs describing CTL sets $C$. See files \texttt{Verifier\_FAR\_Hardcoded\_Certificates.v} and \texttt{Verifier\_WFAR\_Hardcoded\_Certificates.v} in \CoqBB's folder \texttt{BB5\_Deciders\_Hardcoded\_Parameters}.

\enlargethispage{\baselineskip}
\paragraph{\texttt{1RB}-reduction.} Any TNF-enumerated machine (Section~\ref{sec:enum}) whose initial transition writes a \szero and which has at least one transition writing a \sone (TNF guarantees the transition is reachable), can be transformed into a machine that starts by writing a \sone and visits the same configurations (up to state-renaming) but for the first few that wrote a \szero, see \CoqBB's function \texttt{TM\_to\_NF}. We call this transformation \texttt{1RB}-reduction (\texttt{1RB} is the first transition of the new machine). Hence, any machine whose reduction to \texttt{1RB} already has a proof of nonhalting can be decided using the same argument! This is proven in \CoqBB's \texttt{Lemma TM\_to\_NF\_spec}. For instance, TNF-enumerated machine \texttt{0RB0LD\_1RC1RE\_1LA1RC\_1LC1LD\_---0RB} reduces that way to sporadic machine ``Finned \#3'' (Section~\ref{sec:sporadic}), and is decided using the same proof. In the $S(5)$ pipeline (Table~\ref{tab:pipelineBB5}) this argument is used to decide 24 machines whose \texttt{1RB}-reduction correspond to 23 FAR/WFAR certificates (see above) and one sporadic machine, ``Finned \#3''.

\newpage
\subsection{Loops}\label{sec:loops}

\begin{figure}[h!]
    \centering
    \includegraphics[width=0.25\textwidth]{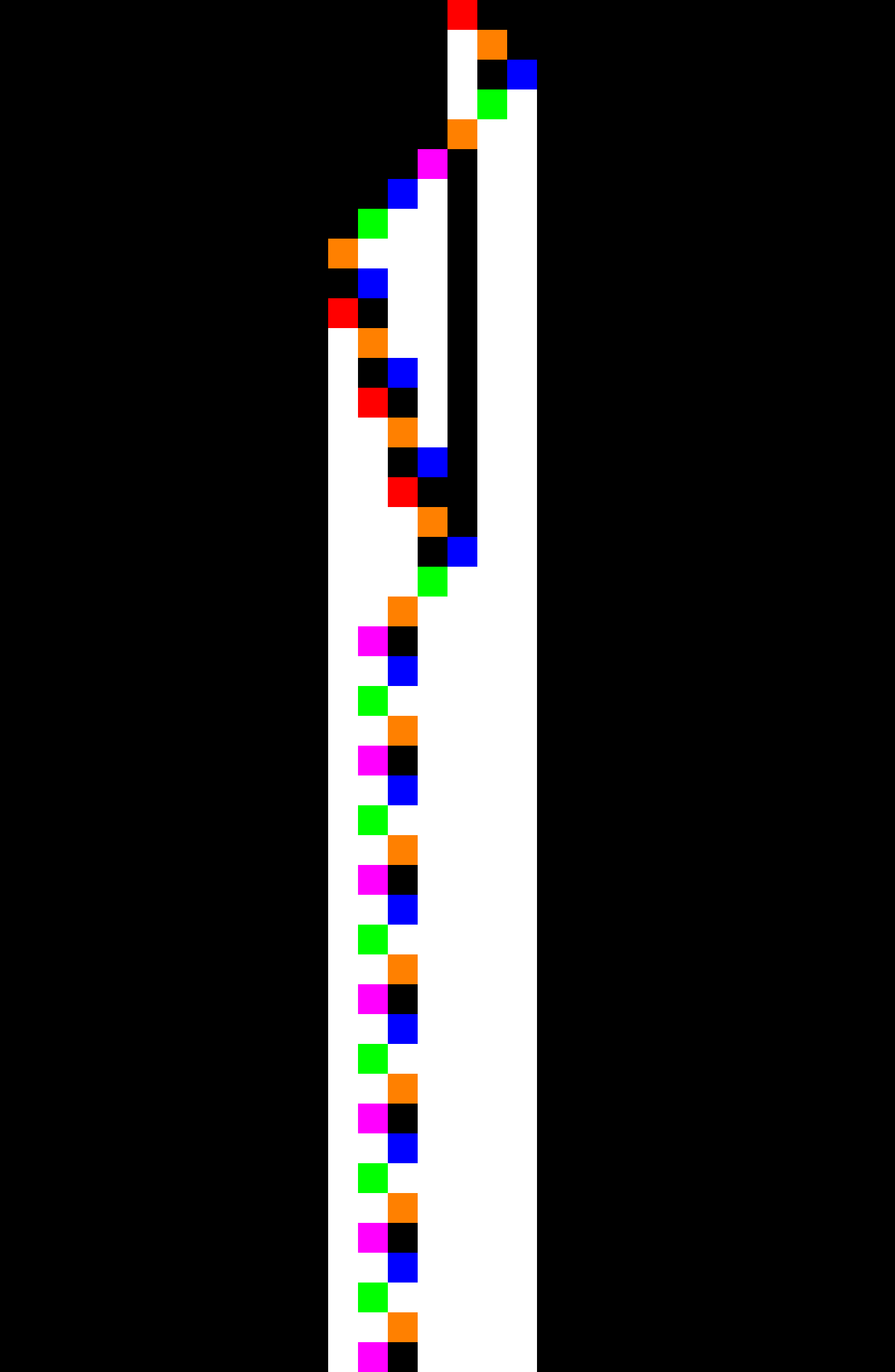}
    \hspace{10ex}
    \includegraphics[width=0.25\textwidth]{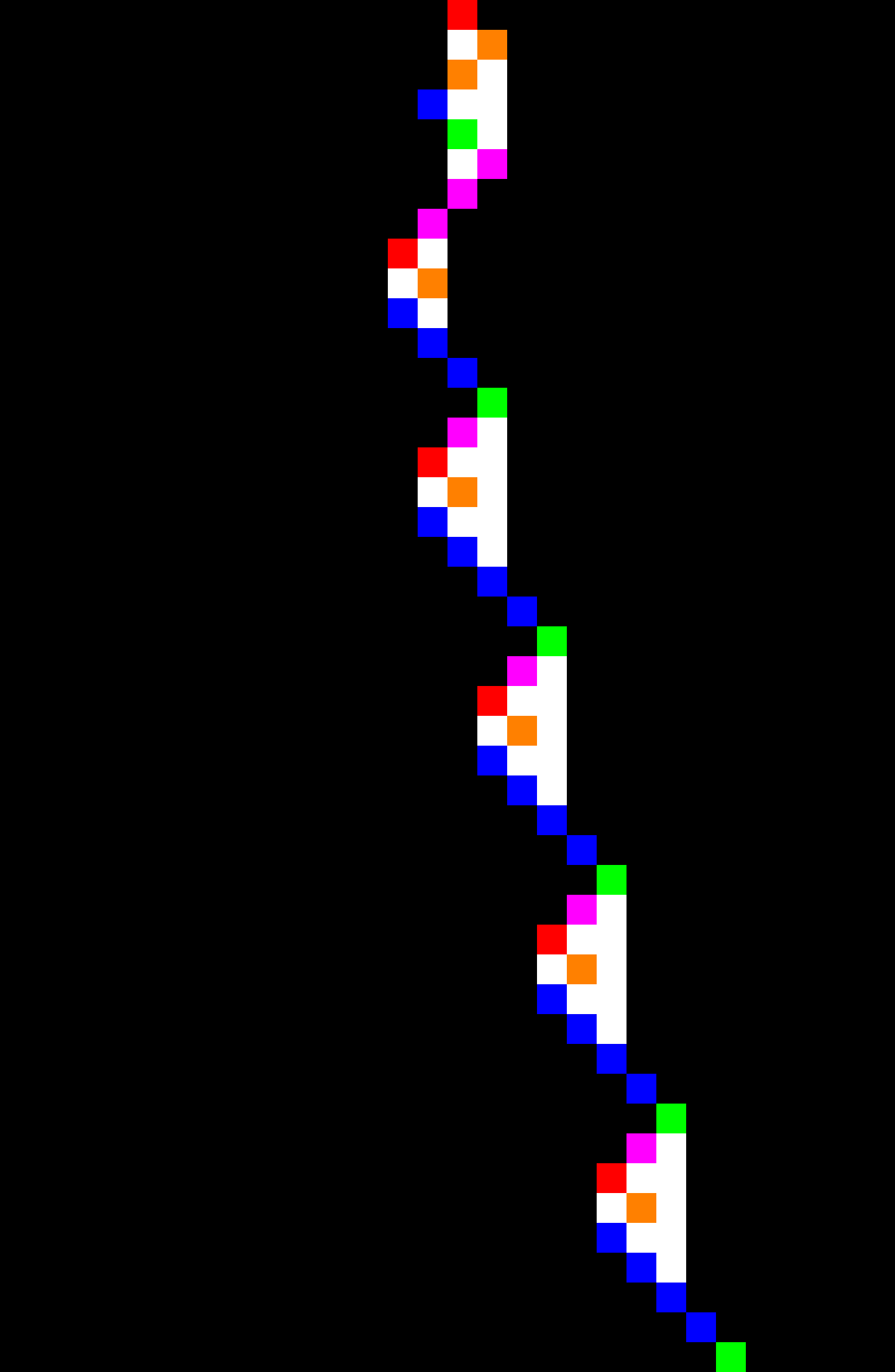}
    \caption{Space-time diagrams of the first 45 steps of a \textit{\cycler}, \tm{1RB---_0RC0LE_1LD0LA_1LB1RB_1LC1RC} (left) and of the first 45 steps of a \textit{\TC}, \tm{1RB---_1LB1LC_0RD0RC_1LE1RE_1LA0LE} (right). \cyclers are machines that eventually repeat the same configuration forever. \TCs are machines that eventually repeat the same configuration forever, but translated in space. We refer to these two types of machines as \textit{loops}.}\label{fig:loops}
\end{figure}

Arguably, one of the most elementary arguments to prove that a Turing machine does not halt on a given input is to show that it enters a \textit{loop} by eventually repeating the same configuration, \ie same tape content, head position, and state, see Figure~\ref{fig:loops}~left. Slightly less obvious, but extremely common (see Section~\ref{sec:loops:results}), is the case where a configuration is repeated but translated in space (analogously to \textit{gliders} in cellular automata), also leading to nonhalting, see Figure~\ref{fig:loops}~right. We respectively call these types of machines: (i) \textit{\cyclers} and (ii) \textit{\TCs}, which we regroup under the umbrella term of loops.

Deciding Cyclers reduces to the well-known mathematical problem of detecting the cycles of a function and standard detection algorithms exist \cite{wiki:Cycle_detection}, the simplest one consisting in memorising each successive configuration of the machine until encountering one that has been already seen. Translated Cyclers, also known as \textit{Lin's recurrence}, have first been described and decided in Shen Lin's 1963 PhD thesis \cite{Lin1963}, other similar algorithms to detect them have been developed since then.\footnote{\url{https://discuss.bbchallenge.org/t/decider-translated-cyclers/34}}

Here, we develop a completely different algorithm (Algorithm~\ref{alg:loops}) for deciding both \cyclers and \TCs. The particularity of this algorithm is that it detects loops only by analysing the history of state, read-symbol and \headposs visited by the machine, instead of considering entire configurations (\ie with full tape content information). Hence, in theory, Algorithm~\ref{alg:loops} can be implemented to use less memory than previously-known algorithms.\footnote{In practice, for simplicity, the Coq implementation (see Section~\ref{sec:loops:results}) stores entire tapes (whereas it only uses the head's information), hence it does not avail itself of this potential memory optimisation.} This algorithm was introduced with \CoqBB.

\subsubsection{Algorithm}\label{sec:loops:algo}

Let us call the \textit{transcript} of a machine the list of successive \ssps visited by the machine from the all-0 tape. For instance, the transcript of the \cycler in Figure~\ref{fig:loops}~(left) starts with \texttt{A0 B0 C0 D0 B1 E0 C0 D0 B0 C1 A0} and the transcript of the \TC in Figure~\ref{fig:loops}~(right) starts with \texttt{A0 B0 B1 C0 D1 E1 E1 E0 A0 B1 C1}. Surprisingly, it turns out that in order to detect loops, we only have to track when a transcript repeats the same sequence twice back-to-back, for instance, in the case of the Cycler in Figure~\ref{fig:loops}: \texttt{A0 B0 C0 D0 B1 E0 C0 D0 B0 C1 A0 B0 C1 A0 B0 C1 A0 B0 C0 D0 \textbf{\underline{B1 E1 C0 D1}} \textbf{\underline{B1 E1 C0 D1}}}. When such a repetition occurs, we use the extra information of \headpos to conclude:

\begin{enumerate}
    \item If when entering the second repetition the head is at the same position it was at the beginning of the first repetition, then we have detected a \cycler, \eg for the \cycler in Figure~\ref{fig:loops}~(left), here is the end of the transcript with extra head-position information given after each \ssp: \texttt{\underline{\textbf{B1(-2)} E1(-3) C0(-2) D1(-3)} \underline{\textbf{B1(-2)} E1(-3) C0(-2) D1(-3)}}.

    \item If, at the beginning of both repetitions, the head is at the same extremity of the tape (\ie both positions are either both a local maximum or both a local minimum) then we have detected a \TC. For the \TC in Figure~\ref{fig:loops}~(right):
          \begin{align*}
               & \texttt{A0(0)* B0(1)* B1(0) C0(-1) D1(0) E1(1)* E1(0) E0(-1) A0(-2) B1(-1) C1(-2) C1(-1) C0(0)} \\
               & \texttt{\underline{\textbf{D0(1)* E0(0) A0(-1) B1(0) C1(-1) C1(0) C1(1)* C0(2)*}}}              \\
               & \texttt{\underline{\textbf{D0(3)* E0(2) A0(1) B1(2) C1(1) C1(2) C1(3)* C0(4)*}}}
          \end{align*}
          \noindent where \texttt{*} indicates steps where the machine is at the right-extremity of the tape (\headpos local maximum).

\end{enumerate}

Algorithm~\ref{alg:loops} first computes $L$ transcript steps, then, starting from the end of the transcript, tries to detect two consecutive size-$l$ transcript repetitions, for all $l \leq L/2$. Once it has found two such repetitions, it tries to ``rewind'' them until they both start on positions that are on the (same) extremity of the tape. We prove that Algorithm~\ref{alg:loops} is correct in Theorem~\ref{th:loops}.

\begin{algorithm}
    \caption{{\sc decider-Loops}, reformulates the algorithm \texttt{loop1\_decider} of Coq-BB5.}\label{alg:loops}

    \begin{algorithmic}[1]
        \State{\textbf{Input:} A Turing machine `$\mathcal{M}$', a step-limit parameter $L$.}
        \State{\textbf{Output:} \NONHALT if the decider detects that the machine is a loop, \HALT if the machine halts and \UNKNOWN otherwise.}

        \State
        \State Simulate $\mathcal{M}$ for $L$ steps and save the history of each consecutive state, read-symbol and position, \ie consecutive $T_i = (s_i,m_i,d_i)\in\states\times\alphabet\times\Z$ for $0 \leq i \leq L$ and $T_0 = (\stateA,\symbolzero,0)$.

        \State \If{the machine has halted before $L$ steps}
        \State \Return HALT \label{alg:loops:halt}
        \EndIf
        \State \For{$l$ \textbf{in} $[1,L/2]$ } \Comment{$l$ is the length of the potential loop}
        \State $K = L-l$
        \State $\text{offset} = 0$ \Comment{Offset in case we need to keep looking for extremal tape position}
        \State \For{$i$ \textbf{in} $[0,l+\text{offset}[$ }
        \Comment{$\text{Offset}$ and therefore loop bounds may vary between iterations}
        \If{$K-i < 0$}
        \State \textbf{break}
        \State \EndIf
        \State $s,m,d = T_{L-i}$
        \State $s',m',d' = T_{K-i}$
        \State
        \If{$s\neq s'$ \textbf{or} $m \neq m'$}\label{alg:loops:testeq} \Comment{Comparing \ssp equality at each step}
        \State \textbf{break}
        \EndIf

        \State \If{$i = l+\text{offset}-1$}
        \If{$d = d'$}\label{alg:loops:cycler}
        \State \Return \NONHALT \Comment{We have detected a Cycler}
        \EndIf
        \State \If{$d = \text{max} \{d_j \, | \, j < L-i \}$ \textbf{and} $d' = \text{max} \{d_j \, | \, j < K-i \}$}\label{alg:loops:tcplus}
        \State \Return \NONHALT \Comment{We have detected a (positive) Translated Cycler}
        \EndIf
        \State \If{$d = \text{min} \{d_j \, | \, j < L-i \}$ \textbf{and} $d' = \text{min} \{d_j \, | \, j < K-i \}$}\label{alg:loops:tcminus}
        \State \Return \NONHALT \Comment{We have detected a (negative) Translated Cycler}
        \State \EndIf
        \State $\text{offset} = \text{offset} + 1$
        \State \EndIf
        \EndFor
        \EndFor

        \State \Return \UNKNOWN

    \end{algorithmic}

\end{algorithm}

\subsubsection{Correctness}
Proving the correctness of this decider is surprisingly nontrivial. Let us represent the space-time diagram of a given Turing machine $\mathcal{M}$ (Section~\ref{sec:TMs}) using partially defined functions $f_\mathcal{M},g_\mathcal{M},h_\mathcal{M}$, and, $F_\mathcal{M}$:
\begin{align*}
    f_\mathcal{M} & : \N \partialto \Z \to \alphabet\text{, tape content}                                \\
    g_\mathcal{M} & : \N \partialto \Z\text{, head position}                                             \\
    h_\mathcal{M} & : \N \partialto \states\text{, head state}                                           \\
    F_\mathcal{M} & : \N \partialto \Z \to (\alphabet\times \states)\text{, tape content and head state}
\end{align*}

\newcommand{\baref}{F}

At time $t\in\N$, $f_\mathcal{M}(t)$ gives the tape content (as a total function $\Z \to \alphabet$), $g_\mathcal{M}(t)$ gives the head position and $h_\mathcal{M}(t)$, the head state, assuming in each case that $\mathcal{M}$ has not halted before time $t$, otherwise values are not defined. For brevity, when the Turing machine is clear from context, we may write $f,g,h$. In the case of $f$, we will also use notation $f(t,z)$ or $f(p)$ with $p \in \N \times \Z$ seen as a vector, instead of $f(t)(z)$ when $f(t)$ is defined. Using same extended notations as for $f$, we also define $\baref_\mathcal{M}(t,z) = (f(t,z),h(t))$ which gives tape content with head state information at each time when $f$ and $h$ are defined.
Finally, when assuming/claiming $f(x) = f(y)$ for some $x,y \in \N \times \Z$ we also implicitly assume/claim that $f$ is defined at $x$ and $y$; same for $\baref$, $g$, and, $h$.

Transcripts as defined in Section~\ref{sec:loops:algo} correspond to sequences of the form $F_\mathcal{M}(t,g(t))$ with $t\in\N$.

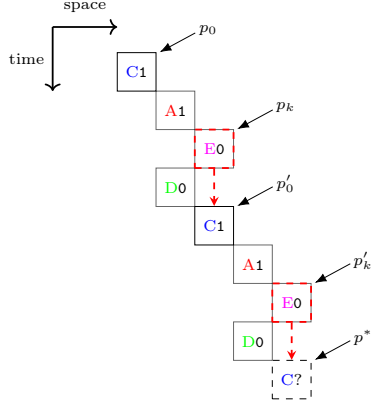
\begin{figure}[h!]
    \noindent
    \begin{minipage}[t]{1\textwidth}
        \centering
        \scalebox{0.85}{
            \begin{tikzpicture}[every node/.style={minimum size=6mm, draw, font=\scriptsize}, scale=1]

                \draw[->, thick] (-1.3,2.7) -- (-1.3,1.7) node[draw=none, midway, left] {\scriptsize time};
                \draw[->, thick] (-1.3,2.7) -- (-0.3,2.7) node[draw=none, midway, above] {\scriptsize space};

                \node (C1) at (0,2) {\stateCx\sone};
                \node[draw=black, draw opacity=0.6] (A1) at (0.6,1.4) {\stateAx\sone};
                \node[draw=black, draw opacity=0.6] (B0) at (1.2,0.8) {\stateEx\szero};
                \node[draw=black, draw opacity=0.6] (D0) at (0.6,0.2) {\stateDx\szero};

                \node[dashed, draw=red, thick] (red) at (1.2,0.8) {};

                \node (C1') at (1.2, -0.4) {\stateCx\sone};
                \node[draw=black, draw opacity=0.6] (A1') at (1.8, -1.0) {\stateAx\sone};
                \node[draw=black, draw opacity=0.6] (B0') at (2.4, -1.6) {\stateEx\szero};
                \node[draw=black, draw opacity=0.6] (D0') at (1.8, -2.2) {\stateDx\szero};

                \node[dashed, draw=red, thick] (red') at (2.4,-1.6) {};

                \node[dashed] (C1'') at (2.4,-2.8) {\stateCx?};

                \draw[dashed, draw=red, thick, ->, >=stealth] (red.south) -- (C1'.north);
                \draw[dashed, draw=red, thick, ->, >=stealth] (red'.south) -- (C1''.north);

                \node[draw=none] at ($(C1.north east)+(0.6,0.3)+(0.2,0.05)$) {$p_0$};
                \draw[-{Latex[length=1.8mm]}] ($(C1.north east)+(0.6,0.3)$) -- ($(C1.north east)-(-0.05,0.00)$);

                \node[draw=none] at ($(C1'.north east)+(0.6,0.3)+(0.2,0.05)$) {$p_0'$};
                \draw[-{Latex[length=1.8mm]}] ($(C1'.north east)+(0.6,0.3)$) -- ($(C1'.north east)-(-0.05,0.00)$);

                \node[draw=none] at ($(C1''.north east)+(0.6,0.3)+(0.2,0.05)$) {$p^*$};
                \draw[-{Latex[length=1.8mm]}] ($(C1''.north east)+(0.6,0.3)$) -- ($(C1''.north east)-(-0.05,0.00)$);

                \node[draw=none] at ($(B0'.north east)+(0.6,0.3)+(0.2,0.05)$) {$p'_k$};
                \draw[-{Latex[length=1.8mm]}] ($(B0'.north east)+(0.6,0.3)$) -- ($(B0'.north east)-(-0.05,0.00)$);

                \node[draw=none] at ($(B0.north east)+(0.6,0.3)+(0.2,0.05)$) {$p_k$};
                \draw[-{Latex[length=1.8mm]}] ($(B0.north east)+(0.6,0.3)$) -- ($(B0.north east)-(-0.05,0.00)$);

            \end{tikzpicture}
        }
    \end{minipage}
    \caption{Illustration of Lemma~\ref{lem:vector}: head-only space-time diagram showing transcript repetition \texttt{\textbf{\underline{C1 A1 E0 D0}} \textbf{\underline{C1 A1 E0 D0}}}. Coordinates $p_0 = (t_0,g(t_0))$ correspond to the beginning of the first repetition, and $p'_0 = (t_0',g(t_0'))$ of the second. We can easily show that the machine state is the same state, \stateC, at position $p^*$ and as at position $p_0'$, Lemma~\ref{lem:vector}~Point~\ref{lem:vector:pt2}. Less immediate, in this case, using $p_k$ and $p'_k$, we can show that positions $p^*$ and $p_0'$ also share same read-symbol (depicted as \texttt{?} to signify that it is less immediate to show), which is the symbol outputted after $p_k$ and $p_k'$, Lemma~\ref{lem:vector}~Point~\ref{lem:vector:pt4}. }\label{fig:loop-lemma}
\end{figure}

\begin{lemma}\label{lem:vector} Let $\mathcal{M}$ be a Turing machine.
    Assume there is $ t_0 \in \N$ and $l \in \N^+$ such that
    for all $0 \leq i < l$: $$F_\mathcal{M}(p_i)   = F_\mathcal{M}(p'_i)$$

    \noindent with $t_i = t_0 + i$, $t_i' = t_i+l$ and $p_i = (t_i, g(t_i))$, $p_i' = (t_i', g(t_i'))$. Call $p^*=(t^*,g(t^*))$ with $t^*=t'_0 + l$. Then:
    \begin{enumerate}
        \item For all $0 \leq i < l$ we have: $p'_i - p_i = p_0' - p_0$. Also, $g(t^*)$ is defined and we have $p^* - p_0' = p_0' - p_0$. \label{lem:vector:pt1}
        \item We have $h(t^*) = h(t'_0)$.\label{lem:vector:pt2}
        \item For all  $0 \leq i < l$, $g(t_i) = g(t_0') \Leftrightarrow g(t_i') = g(t^*)$.\label{lem:vector:pt3}
        \item If there is $0 \leq k < l$ such that $g(t_k') = g(t^*)$ then $f(p^*) = f(p_0')$.\label{lem:vector:pt4}
    \end{enumerate}
    Figure~\ref{fig:loop-lemma} illustrates this Lemma and Point~\ref{lem:vector:pt4} in particular.
\end{lemma}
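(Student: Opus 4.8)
The plan is to reduce everything to three elementary facts about the one-step dynamics of $\mathcal{M}$, used repeatedly: (a) \emph{determinism} --- whenever the machine is in the same state reading the same symbol at two times, the transition taken is identical, so it writes the same symbol, moves in the same direction, and enters the same next state; (b) the head position changes by exactly the move direction of the current transition, so $g(t+1)-g(t)\in\{-1,+1\}$; and (c) a tape cell keeps its content unless the head sits on it, i.e. $f(t+1,z)=f(t,z)$ for $z\neq g(t)$, while $f(t+1,g(t))$ is the symbol just written. The hypothesis $F(p_i)=F(p_i')$ for $0\le i<l$ says precisely that at times $t_i$ and $t_i'$ the machine is in the same state reading the same symbol, so by (a) the transitions at $t_i$ and $t_i'$ coincide for each such $i$; note also that since $F(p_{l-1}')$ is defined the machine has not halted at $t^*-1=t_{l-1}'$, which is why $g(t^*)$ is defined.

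For Points 1--3 I would argue by telescoping. Writing $\delta_i\in\{-1,+1\}$ for the common displacement of the paired transitions at $t_i,t_i'$, fact (b) gives $g(t_{i+1})-g(t_i)=\delta_i=g(t_{i+1}')-g(t_i')$ for $0\le i<l$ (using $t_{i+1}=t_i+1$, $t_{i+1}'=t_i'+1$, and $t_l=t_0'$, $t_l'=t^*$). Summing from $0$ to $m-1$ yields $g(t_m')-g(t_m)=g(t_0')-g(t_0)$ for all $0\le m\le l$; since the time-components of $p_m'-p_m$ and $p_0'-p_0$ both equal $l$, the case $m<l$ is Point 1, and the case $m=l$ gives $p^*-p_0'=p_0'-p_0$. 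Point 2 is immediate: the common transition at $t_{l-1},t_{l-1}'$ has a common next state, so $h(t_0')=h(t_l)=h(t_l')=h(t^*)$. Point 3 follows algebraically from Point 1, since $g(t_i')-g(t^*)=g(t_i)-g(t_0')$.

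The real work is Point 4, and the key idea --- illustrated by Figure~\ref{fig:loop-lemma} --- is to track the \emph{last write} to the relevant tape cell. Set $c=g(t^*)$ and $c_0=g(t_0')$, so $c=c_0+\Delta$ with $\Delta=g(t_0')-g(t_0)$. Among the indices $0\le k<l$ with $g(t_k')=c$ (nonempty by hypothesis) I would choose the \emph{largest} one; by Point 3 it also satisfies $g(t_k)=c_0$. The paired transitions at $t_k,t_k'$ write a common symbol $\sigma$, and since the head sits on $c_0$ at $t_k$ and on $c$ at $t_k'$, fact (c) gives $f(t_{k+1},c_0)=\sigma=f(t_{k+1}',c)$. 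Maximality of $k$ directly forces $g(t_j')\neq c$, and hence by Point 3 also $g(t_j)\neq c_0$, for every $j$ with $k<j<l$; so neither cell is touched by any transition strictly between its $(k{+}1)$-st step and the end of its block. Chaining fact (c) and using $t_l=t_0'$, $t_l'=t^*$ then gives $f(t_0',c_0)=f(t_{k+1},c_0)=\sigma$ and $f(t^*,c)=f(t_{k+1}',c)=\sigma$, whence $f(p_0')=f(p^*)$.

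The main obstacle is this last point, specifically justifying that the cell is not overwritten between the chosen step and the block boundary: this is exactly where selecting the \emph{largest} admissible $k$ is essential, and where Point 3 is needed to transfer the ``never-revisited'' condition from the second block to the first. Everything else reduces to bookkeeping with the three dynamical facts above and careful index arithmetic around the identifications $t_l=t_0'$ and $t_l'=t^*$.
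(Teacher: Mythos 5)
Your proposal is correct and follows essentially the same route as the paper's proof: Points 1--3 by propagating the common displacement of the paired transitions (the paper phrases your telescoping as repeated application of $p_{i+1}=p_i+u$, $p'_{i+1}=p'_i+u$), and Point 4 by taking the \emph{maximal} $k$, using determinism to get a common written symbol, and using maximality plus Point 3 to argue neither cell is revisited before $p_0'$ and $p^*$ respectively. Your extra care in spelling out the cell-persistence chaining and the definedness of $g(t^*)$ only makes explicit what the paper leaves implicit.
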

\begin{proof}
    First note that by definition, all $p_i$ and $p_i'$ correspond to head positions and the condition $F_\mathcal{M}(p_i)   = F_\mathcal{M}(p'_i)$ means that there is a repetition in the machine's transcript (see Section~\ref{sec:loops:algo}).
    \begin{enumerate}
        \item Because $F(p_0) = F(p'_0)$ we know that at times $t_0$ and $t_0'$ the head is in same state, reading the same symbol, hence the same transition executes and, in particular, both heads move in the same direction $m \in \{-1,1\}$, giving the existence of $u = (1, m)$ such that $p_1 = p_0 + u$ and $p_1' = p_0' + u$. Hence, $p_1' - p_1 = p_0' + u - p_0 - u = p_0' - p_0$. Repeating the same argument for each $i < l$ gives $p_i' - p_i = p_0' - p_0$. Finally, applying the same argument with $F(p_{l-1}) = F(p'_{l-1})$ gives that $g(t^*)$ is defined: the machine has not halted between steps $t_{l-1}$ and $t'_0$ hence it does not halt between steps $t'_{l-1}$ and $t^*$. Furthermore, $p^*-p_0' = p'_0 - p_0$ as $p^*$ corresponds to one time step after $p'_{l-1}$ and $p'_0$ one time step after $p_{l-1}$.
        \item Similarly to above, $F(p_{l-1}) = F(p'_{l-1})$ implies that the machine will transition to the same state after $p_{l-1}$ and $p'_{l-1}$, giving $h(t^*) = h(t'_0)$.

        \item Let $0 \leq i < l$ such that $g(t_i) = g(t_0')$. Using Point~\ref{lem:vector:pt1}, we have $p_i' = p_i + (p_0' - p_0)$, meaning $g(t_i) = g(t_0') \Leftrightarrow g(t_i') = g(t_i) + g(t_0') - g(t_0)$ which rewrites as $g(t_i) = g(t_0') \Leftrightarrow g(t_i') = g(t_0') + g(t_0') - g(t_0) = g(t^*)$ because $g(t^*) = g(t_0') + g(t_0') - g(t_0)$ using Point~\ref{lem:vector:pt1} with $p^* = p_0' + (p_0' - p_0)$. In the end, we get $g(t_i) = g(t_0') \Leftrightarrow g(t_i') = g(t^*)$ as needed.

        \item Without loss of generality, let us assume that $k$ is maximal. Using Point~\ref{lem:vector:pt3} we get that $g(t_k) = g(t_0')$ and, by maximality of $k$, there is no $s > k$ with $s <l$ such that $g(t_s) = g(t_0')$. Because $F(p_k) = F(p_k')$, the same symbol is outputted after $p_k$ and $p_k'$, giving $f(t_k+1,g(t_k)) = f(t_k'+1,g(t_k'))$ and by maximality of $k$, the cells are not revisited respectively before $p_0'$ and $p^*$, giving $f(p_0') = f(t_k+1,g(t_k))$ and $f(p^*) = f(t_k'+1,g(t_k'))$. Hence we have $f(p^*) = f(p_0')$ as needed.

    \end{enumerate}

\end{proof}

\begin{definition}[Loops]\label{def:loops}
    Let $M$ be a Turing machine and let $F_M$ and $g$ be the functions defined as above. Let $l \in \N^+$, $t_0 \in \N$.
    We say that $M$ is a loop of period $l$ and pre-period $t_0$ if for all $t > t_0$, $F_M(t,g(t)) = F_M(t_0 + j,g(t_0 + j))$ where $j$ is the remainder in the division of $t - t_0$ by $l$.
\end{definition}

\begin{lemma}\label{lem:loopdonthalt}
    Loops do not halt.
\end{lemma}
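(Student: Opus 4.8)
The plan is to show directly that a loop's configuration is determined periodically, so that no halting step can occur. The key observation is that Definition~\ref{def:loops} tells us that for all $t > t_0$, the \emph{transcript} value $F_M(t,g(t))$ equals $F_M(t_0 + j, g(t_0+j))$, where $j = (t - t_0) \bmod l$. In particular, the head's state and the read-symbol at time $t$ coincide with those at an earlier time $t_0 + j \le t_0 + l$. Since the machine halts exactly when it reaches an undefined transition (Section~\ref{sec:TMs}), a halting step at some time $t^\dagger > t_0$ would require $\delta(h(t^\dagger), m(t^\dagger))$ to be undefined, where $m(t^\dagger)$ is the symbol read at time $t^\dagger$.

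**First I would** establish that the machine never halts \emph{after} time $t_0$. Suppose for contradiction that $M$ halts at some first time $t^\dagger$. By the step-count convention, $t^\dagger \ge 1$, and since the loop structure holds for all $t > t_0$, we may assume $t^\dagger > t_0$ (if the machine halted at or before $t_0$, then $F_M(t,g(t))$ would be undefined for $t > t_0$, contradicting Definition~\ref{def:loops}, which implicitly requires $F_M$ to be defined at all these times). Writing $j = (t^\dagger - t_0) \bmod l$, Definition~\ref{def:loops} gives $F_M(t^\dagger, g(t^\dagger)) = F_M(t_0 + j, g(t_0 + j))$. Reading off the state-and-symbol components, the transition $\delta(h(t^\dagger), m(t^\dagger))$ being undefined forces $\delta(h(t_0+j), m(t_0+j))$ to be undefined as well. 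But $t_0 + j \le t_0 + l$, and because the loop continues past time $t_0 + l$ (again by Definition~\ref{def:loops}, which asserts a defined transcript value at time $t_0 + l + 1 > t_0$), the machine does \emph{not} halt at time $t_0 + j$. This contradicts the transition at $t_0 + j$ being undefined.

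**The second step** is to dispatch the finitely many steps up to and including $t_0$. Since $M$ has a well-defined loop of pre-period $t_0$, the functions $F_M$ and $g$ are defined at time $t_0 + 1 > t_0$, which means the machine has not halted through step $t_0$; otherwise $F_M(t,g(t))$ would be undefined for all $t$ beyond the halting time, violating Definition~\ref{def:loops}. Combining this with the previous paragraph, $M$ never reaches an undefined transition at any time $t \in \N$, hence never halts.

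**The main obstacle** I anticipate is bookkeeping the edge cases around the definedness of $F_M$ rather than any deep mathematical content: the argument hinges on reading Definition~\ref{def:loops} as implicitly asserting that $F_M(t,g(t))$ is defined for every $t > t_0$ (consistent with the convention stated in the excerpt that claiming $F(x) = F(y)$ also claims definedness at $x$ and $y$). Once that reading is fixed, the contradiction is immediate, since periodicity of the state-symbol pair transfers the (un)definedness of the transition backward to a time where the loop is known to continue. The formal version must be careful that ``halts at the first time $t^\dagger$'' is compatible with the periodicity hypothesis, which is exactly what the definedness of the transcript past $t_0 + l$ guarantees.
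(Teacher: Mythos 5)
Your proposal is correct and rests on the same observation as the paper's (one-line) proof: Definition~\ref{def:loops} asserts $F_M(t,g(t)) = F_M(t_0+j, g(t_0+j))$ for all $t > t_0$, which under the paper's convention means $F_M$ is defined at all these times, i.e.\ the space-time diagram is infinite and the machine never halts. Your first step --- transferring the undefinedness of a halting transition back to time $t_0 + j$ for a contradiction --- is sound but redundant, since the definedness argument in your second step already covers all times.
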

\begin{proof} By Definition~\ref{def:loops}, the space-time diagram of a loop is infinite, the machine does not halt.
\end{proof}

\begin{theorem}[Loops]\label{th:loops:theory} Let $\mathcal{M}$ be a Turing machine.
    Assume there is $ t_0 \in \N$ and $l \in \N^+$ such that
    for all $0 \leq i < l$: $$F_\mathcal{M}(p_i)   = F_\mathcal{M}(p'_i)$$
    with $t_0' = t_0+l$ and $p_i = (t_0+i, g(t_0+i))$ and $p_i' = (t_0'+i, g(t_0'+i))$. Then, three cases:
    \begin{enumerate}
        \item If $g(t_0') = g(t_0)$, then $\mathcal{M}$ is a loop, more specifically called a \textit{Cycler}.\label{th:case1}
        \item If $g(t_0') \geq g(t_0)$ and the tape content to the right of $p_0$ is the same as to right of $p_0'$, \ie for all nonnegative integer $z$, $f(t_0,g(t_0)+z) = f(t_0',g(t_0')+z)$, then $\mathcal{M}$ is a loop, more specifically called a (positive) \textit{Translated Cycler}.\label{th:case2}
        \item If $g(t_0') \leq g(t_0)$ and the tape content to the left of $p_0$ is the same as to left of $p_0'$, \ie for all nonnegative integer $z$, $f(t_0,g(t_0)-z) = f(t_0',g(t_0')-z)$, then $\mathcal{M}$ is a loop, more specifically called a (negative) \textit{Translated Cycler}.\label{th:case3}
    \end{enumerate}
    In these three cases, $\mathcal{M}$ has period $l$, pre-period $t_0$, and does not halt.
\end{theorem}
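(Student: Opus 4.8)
The plan is to reduce Theorem~\ref{th:loops:theory} to Lemma~\ref{lem:vector} and Definition~\ref{def:loops}, treating the three cases uniformly where possible. The core observation is that Lemma~\ref{lem:vector} already gives us the ``transcript-level'' structure of the loop --- namely that the head-position displacements and the head state at $p^*$ match those at $p_0'$ --- so what remains is to promote this to a statement about \emph{full configurations} (including untouched tape cells), which is exactly what Definition~\ref{def:loops} demands. The key lemma to establish first is that the behaviour between $p_0$ and $p_0'$ repeats verbatim between $p_0'$ and $p^*$: by Lemma~\ref{lem:vector}~Point~\ref{lem:vector:pt1} the head visits a spatially-shifted copy of the same positions (shift $p_0'-p_0$), by Point~\ref{lem:vector:pt2} it arrives in the same state, and --- crucially --- I must argue that it \emph{reads} the same symbols along the way, so that the same transitions fire. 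This last point is where the three cases diverge.

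First I would handle Case~\ref{th:case1} ($g(t_0')=g(t_0)$, the Cycler), which is the cleanest: here the displacement $p_0'-p_0$ has zero spatial component, so the head confines itself to a bounded region of the tape during $[t_0,t_0']$, and that entire finite tape window is reproduced identically at $t_0'$ because every cell the head reads in the second pass was left in the same state by the first pass (any cell not touched retains its value trivially). This lets me set up an induction on $t>t_0$ showing $F_M(t,g(t))=F_M(t_0+j,g(t_0+j))$ with $j=(t-t_0)\bmod l$, which is precisely Definition~\ref{def:loops}; Lemma~\ref{lem:loopdonthalt} then gives nonhalting.

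For Cases~\ref{th:case2} and~\ref{th:case3} (the Translated Cyclers) the head \emph{does} drift, say to the right when $g(t_0')\geq g(t_0)$. The extra hypothesis --- that the tape to the right of $p_0$ equals the tape to the right of $p_0'$ --- is exactly what is needed to guarantee that the symbols read during the shifted second pass agree with those read in the first: the head never reads a cell to the left of its shifted starting column without having first written it during that same pass, so the relevant reads all fall in the ``right half'' where the two tapes agree by assumption. I would make this precise by showing inductively that for each $0\le i<l$ the cell read at $p_i'$ equals the cell read at $p_i$ (invoking Point~\ref{lem:vector:pt4} of Lemma~\ref{lem:vector} together with the tape-agreement hypothesis), hence the same transition fires and the shifted tape-agreement hypothesis is \emph{reestablished} at $p^*$ relative to $p_0'$. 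This reestablishment step is the engine of an induction over successive periods: applying the whole argument repeatedly yields $F_M(t,g(t))=F_M(t_0+j,g(t_0+j))$ for all $t>t_0$, giving period $l$ and pre-period $t_0$. Case~\ref{th:case3} is the mirror image under left--right reflection.

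\textbf{Main obstacle.} The hard part will be rigorously justifying, in the Translated Cycler cases, that the drifting head never ``reaches back'' to read a cell whose value is not pinned down by the tape-agreement hypothesis --- i.e., controlling which cells are read versus merely written during one period. Lemma~\ref{lem:vector}~Point~\ref{lem:vector:pt4} supplies the symbol-equality at the single return position $p^*$, but extending this to the \emph{entire half-tape} agreement needed to close the induction requires an argument that every read during the period stays within the region of guaranteed agreement, using that the head displacement is monotone in the sense dictated by the sign of $g(t_0')-g(t_0)$. Carefully separating reads from writes and handling the boundary cell at the head's extremal excursion is the delicate bookkeeping I expect to occupy most of the formal proof.
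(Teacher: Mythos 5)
Your overall architecture --- extend the repetition by one period, re-establish the hypotheses for the shifted window, induct, and conclude via Definition~\ref{def:loops} and Lemma~\ref{lem:loopdonthalt} --- matches the paper's, and your Case~\ref{th:case1} argument is sound. But your plan for the Translated Cycler cases rests on a claim that is false: ``the head never reads a cell to the left of its shifted starting column without having first written it during that same pass.'' Nothing in the hypotheses forces the head to move monotonically within a period; it may dip arbitrarily far to the left of $g(t_0')$ and read unwritten cells there, whose values are \emph{not} pinned down by the right-half-tape agreement. The paper's own example in Figure~\ref{fig:loops} (right) does exactly this: the repetition begins at head position $1$ and the next steps read positions $0$ and $-1$ before the head has written them in that pass. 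Consequently the induction you propose --- ``every read during the period stays within the region of guaranteed agreement'' --- cannot be closed, and the ``delicate bookkeeping'' you flag as the main obstacle is one you would not overcome in this form.

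The missing idea is that you do not need to re-derive the reads inside the window at all: the hypothesis $F_\mathcal{M}(p_i) = F_\mathcal{M}(p_i')$ for $0 \le i < l$ already supplies every state and read symbol of the second pass, including those at positions left of $g(t_0')$. The only read that must be \emph{derived} is the single new one at $p^* = (t_0'+l,\, g(t_0'+l))$, and by Lemma~\ref{lem:vector} Point~\ref{lem:vector:pt1} its position is $g(t_0') + \bigl(g(t_0')-g(t_0)\bigr) \ge g(t_0')$, i.e.\ it always lies in the half-tape where agreement is assumed. One then splits on whether that cell was visited during $[t_0', t_0'+l)$: if yes, Point~\ref{lem:vector:pt4} gives $f(p^*) = f(p_0')$ directly; if not, Point~\ref{lem:vector:pt3} shows the corresponding cell $g(t_0')$ was likewise unvisited during $[t_0, t_0+l)$, so both reads are values frozen since times $t_0'$ and $t_0$ respectively, and the half-tape agreement identifies them. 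This yields $F(p^*) = F(p_0')$, after which the theorem's hypotheses hold for the window shifted by \emph{one step} (the tape agreement to the right of $p_1$ and $p_1'$ follows from that at $p_0, p_0'$ because the same symbol is written), and the induction proceeds step by step rather than period by period. Your period-at-a-time scheme can be repaired, but only by routing all intra-window reads through the re-established transcript equality rather than through the tape-agreement hypothesis.
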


\begin{proof}

    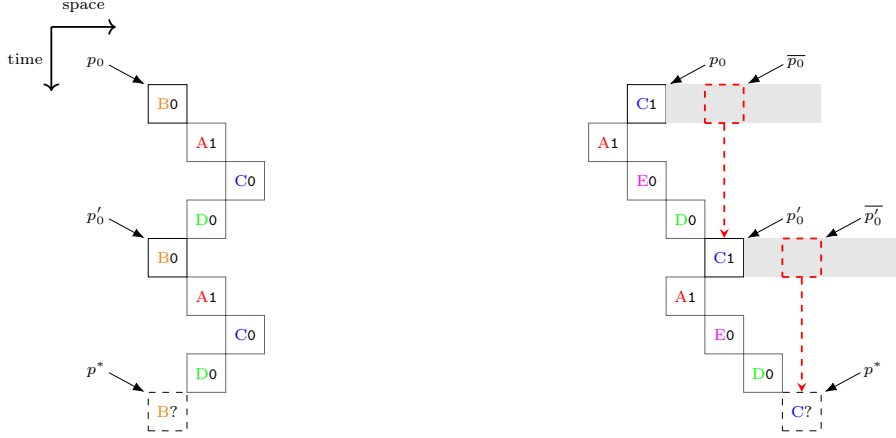
\begin{figure}[h!]
        \noindent
        \begin{minipage}[t]{0.5\textwidth}
            \centering
            \scalebox{0.85}{
                \begin{tikzpicture}[every node/.style={minimum size=6mm, draw, font=\scriptsize}, scale=1]

                    \draw[->, thick] (-1.8,3.2) -- (-1.8,2.2) node[draw=none, midway, left] {\scriptsize time};
                    \draw[->, thick] (-1.8,3.2) -- (-0.8,3.2) node[draw=none,midway, above] {\scriptsize space};
                    \node (B0) at (0,2) {\stateBx\szero};
                    \node[draw=black, draw opacity=0.6] (A1) at (0.6,1.4) {\stateAx\sone};
                    \node[draw=black, draw opacity=0.6] at (1.2,0.8) {\stateCx\szero};
                    \node[draw=black, draw opacity=0.6] at (0.6,0.2) {\stateDx\szero};

                    \node (B0') at (0,-0.4) {\stateBx\szero};
                    \node[draw=black, draw opacity=0.6] (A1') at (0.6,-1.0) {\stateAx\sone};
                    \node[draw=black, draw opacity=0.6] at (1.2,-1.6) {\stateCx\szero};
                    \node[draw=black, draw opacity=0.6] at (0.6,-2.2) {\stateDx\szero};

                    \node[dashed] (B0'') at (0,-2.8) {{\stateBx}?};

                    \node[draw=none] at ($(B0.north west)-(0.6,-0.3)+(-0.2,0.05)$) {$p_0$};
                    \draw[-{Latex[length=1.8mm]}] ($(B0.north west)-(0.6,-0.3)$) -- ($(B0.north west)-(0.05,0.00)$);

                    \node[draw=none] at ($(B0'.north west)-(0.6,-0.3)+(-0.2,0.05)$) {$p_0'$};
                    \draw[-{Latex[length=1.8mm]}] ($(B0'.north west)-(0.6,-0.3)$) -- ($(B0'.north west)-(0.05,0.00)$);

                    \node[draw=none] at ($(B0''.north west)-(0.6,-0.3)+(-0.2,0.05)$) {$p^*$};
                    \draw[-{Latex[length=1.8mm]}] ($(B0''.north west)-(0.6,-0.3)$) -- ($(B0''.north west)-(0.05,0.00)$);

                \end{tikzpicture}
            }
        \end{minipage}
        \hfill
        \begin{minipage}[t]{0.5\textwidth}
            \centering
            \scalebox{0.85}{
                \begin{tikzpicture}[every node/.style={minimum size=6mm, draw, font=\scriptsize}, scale=1]

                    \node (C1) at (0,2) {\stateCx\sone};
                    \node[draw=black, draw opacity=0.6] (A1) at (-0.6,1.4) {\stateAx\sone};
                    \node[draw=black, draw opacity=0.6] (B0) at (0,0.8) {\stateEx\szero};
                    \node[draw=black, draw opacity=0.6] (D0) at (0.6,0.2) {\stateDx\szero};

                    \foreach \i in {1,...,4} {
                            \node[minimum size=6mm, draw=none, fill=gray!20] at ($(C1)+(0.6*\i,0)$) {};
                        }

                    \node[dashed, draw=red, thick] (red) at (1.2,2) {};

                    \node (C1') at (1.2, -0.4) {\stateCx\sone};
                    \node[draw=black, draw opacity=0.6] (A1') at (0.6, -1.0) {\stateAx\sone};
                    \node[draw=black, draw opacity=0.6] (B0') at (1.2, -1.6) {\stateEx\szero};
                    \node[draw=black, draw opacity=0.6] (D0') at (1.8, -2.2) {\stateDx\szero};

                    \foreach \i in {1,...,4} {
                            \node[minimum size=6mm, draw=none, fill=gray!20] at ($(C1')+(0.6*\i,0)$) {};
                        }

                    \node[dashed, draw=red, thick] (red') at (2.4,-0.4) {};

                    \node[dashed] (C1'') at (2.4,-2.8) {\stateCx?};

                    \draw[dashed, draw=red, thick, ->, >=stealth] (red.south) -- (C1'.north);
                    \draw[dashed, draw=red, thick, ->, >=stealth] (red'.south) -- (C1''.north);

                    \node[draw=none] at ($(red.north east)+(0.6,0.3)+(0.2,0.05)$) {$\overline{p_0}$};
                    \draw[-{Latex[length=1.8mm]}] ($(red.north east)+(0.6,0.3)$) -- ($(red.north east)-(-0.05,0.00)$);

                    \node[draw=none] at ($(red'.north east)+(0.6,0.3)+(0.2,0.05)$) {$\overline{p_0'}$};
                    \draw[-{Latex[length=1.8mm]}] ($(red'.north east)+(0.6,0.3)$) -- ($(red'.north east)-(-0.05,0.00)$);

                    \node[draw=none] at ($(C1.north east)+(0.6,0.3)+(0.2,0.05)$) {$p_0$};
                    \draw[-{Latex[length=1.8mm]}] ($(C1.north east)+(0.6,0.3)$) -- ($(C1.north east)-(-0.05,0.00)$);

                    \node[draw=none] at ($(C1'.north east)+(0.6,0.3)+(0.2,0.05)$) {$p_0'$};
                    \draw[-{Latex[length=1.8mm]}] ($(C1'.north east)+(0.6,0.3)$) -- ($(C1'.north east)-(-0.05,0.00)$);

                    \node[draw=none] at ($(C1''.north east)+(0.6,0.3)+(0.2,0.05)$) {$p^*$};
                    \draw[-{Latex[length=1.8mm]}] ($(C1''.north east)+(0.6,0.3)$) -- ($(C1''.north east)-(-0.05,0.00)$);

                \end{tikzpicture}
            }
        \end{minipage}
        \caption{Illustration of Theorem~\ref{th:loops:theory}: Case 1, Cycler (left) and Case 2, positive Translated Cycler (right). In both cases, we show a head-only space-time diagram with one transcript repetition (see Section~\ref{sec:loops:algo}). Coordinates $p_0 = (t_0,g(t_0))$ correspond to the beginning of the first repetition, and $p'_0 = (t_0',g(t_0'))$ of the second. In both cases, showing that cell at position $p^*$ shares the same state as cell at position $p_0'$ is rather easy (Lemma~\ref{lem:vector}, Point~\ref{lem:vector:pt2}) while showing that they share same read-symbol (depicted as \texttt{?}) requires more work, Theorem~\ref{th:loops:theory}. In the case of Translated Cyclers, this is either done using Lemma~\ref{lem:vector}, Point~\ref{lem:vector:pt4}, depicted in Figure~\ref{fig:loop-lemma} or, using the assumption that the tape content after $p_0$ and $p_0'$ is the same, here symbolised using grey shading (right). In that case, using $\overline{p_0}$ and $\overline{p_0'}$ we show that the read-symbol at $p^*$ is the same as at $\overline{p_0'}$.}\label{fig:loop-proof}
    \end{figure}

    Consider $p^* = (t^*, g(t^*))$ with $t^*=t_0'+l$ which corresponds to one time step after $p'_{l-1}$. Figure~\ref{fig:loop-proof} illustrates the situation for the theorem's cases~\ref{th:case1} (on the left) and~\ref{th:case2} (on the right): $p^*$ is the coordinates of the black dashed cell.

    We show that $F(p^*) = F(p'_0)$. By Lemma~\ref{lem:vector} Point~\ref{lem:vector:pt2}, we know that $h(t^*) = h(t_0')$ hence we have to show that the read symbol at $p^*$ and $p'_0$ are also the same, \ie $f(p^*) = f(p'_0)$. Three cases:

    \begin{enumerate}
        \item Case $g(t_0') = g(t_0)$, illustrated in Figure~\ref{fig:loop-proof}~(left).  By Lemma~\ref{lem:vector}, Point~\ref{lem:vector:pt1}, we know that $p^*-p_0' = p'_0 - p_0$. Because $g(t_0') = g(t_0)$, the space component of $p^*-p_0'$ is 0 and we have that $g(t^*) = g(t_0')$. Hence, we know that the cell at tape position $g(t^*)$ has been visited at least once between time steps $t_0'$ and $t_0'+l-1$, which, by Lemma~\ref{lem:vector}, Point~\ref{lem:vector:pt4}, gives that $f(p^*) = f(p'_0)$.

        \item Case $g(t_0') > g(t_0)$, illustrated in Figure~\ref{fig:loop-proof}~(right). If there is a time step between $t_0'$ and $t_0'+l-1$ such that tape position $g(t^*)$ has been visited, by Lemma~\ref{lem:vector}, Point~\ref{lem:vector:pt4}, we get that $f(p^*) = f(p'_0)$.

              If there is no such time step, we have $f(p^*) = f(\overline{p_0'})$ with $\overline{p_0'} = (t_0',g(t^*))$, dashed red in Figure~\ref{fig:loop-proof}~(right). By Lemma~\ref{lem:vector}, Point~\ref{lem:vector:pt3}, there is no time step between $t_0$ and $t_0+l-1$ such that tape position $g(t_0')$ has been visited. Hence, we  get $f(p'_0) = f(\overline{p_0})$ with $\overline{p_0} = (t_0,g(t_0'))$. By hypothesis, tape content to the right of $p_0$ is the same as to the right of $p_0'$, and $p^*-p_0' = p'_0 - p_0$ (Lemma~\ref{lem:vector}, Point~\ref{lem:vector:pt1}) implies that $g(t^*)-g(t_0') = g(t_0') - g(t_0)$, hence $f(\overline{p_0}) = f(\overline{p_0'})$ and, finally, $f(p^*) = f(p'_0)$.\label{thm:proof:pt2}
        \item Case $g(t_0') < g(t_0)$, handled symmetrically to~\ref{thm:proof:pt2}.
    \end{enumerate}
    From there, we get $F(p^*) = F(p'_0)$, and the argument can be repeated starting with $p_1,\, \dots,\, p_{l-1},\, p'_0$ acting as new $p_0,\,\dots,\,p_{l-1}$ and $p'_1,\, \dots,\, p_{l-1}',\, p^*$ as new $p_0',\,\dots,\,p_{l-1}'$ by noting that for cases~\ref{th:case2} and~\ref{th:case3}, the fact that the tape is the same after $p_0$ and $p_0'$ implies that it is also the same after $p_1$ and $p_1'$. Hence, inductively, $\mathcal{M}$ is a loop and, by Lemma~\ref{lem:loopdonthalt}, it does not halt.
\end{proof}

\begin{theorem}[Coq-BB5: \texttt{Lemma loop1\_decider\_WF}]\label{th:loops}
    Let $\mathcal{M}$ be a Turing machine and $L \in \N^+$ a step-limit. \textsc{decider-loops}($\mathcal{M}$, $L$) terminates and its result is correct -- see Algorithm~\ref{alg:loops}:
    \begin{itemize}
        \item If the result is \texttt{HALT} then $\mathcal{M}$ halts from the all-0 tape.
        \item If the result is \texttt{NONHALT} then $\mathcal{M}$ does not halt from the all-0 tape.
    \end{itemize}
\end{theorem}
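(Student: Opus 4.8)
The plan is to split the statement into three parts---termination, correctness of the \texttt{HALT} verdict, and correctness of the \texttt{NONHALT} verdict---and to reduce the only substantial part to the already-proved Theorem~\ref{th:loops:theory} together with Lemma~\ref{lem:loopdonthalt}. The bridge between the algorithm and the theory is the observation that the transcript entry recorded at time $t$, namely the state-symbol pair $(s_t, m_t) = (h(t), f(t, g(t)))$, carries exactly the data of $F_\mathcal{M}(t, g(t)) = (f(t,g(t)), h(t))$; thus the equality test on line~\ref{alg:loops:testeq} is literally an assertion $F_\mathcal{M}(p_i) = F_\mathcal{M}(p_i')$ of the kind that Theorem~\ref{th:loops:theory} consumes.

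Termination and the \texttt{HALT} verdict are routine. The outer loop ranges over the finite set $l \in [1, L/2]$; the inner loop bound $l + \mathrm{offset}$ can grow, but the guard that breaks as soon as $K - i < 0$ caps the inner iterations at $K + 1 = L - l + 1$, so the procedure always halts. For \texttt{HALT}, the initial phase is a faithful step-by-step simulation of $\mathcal{M}$ from the all-$\szero$ tape, so reaching an undefined transition within $L$ steps means $\mathcal{M}$ genuinely halts, and returning \texttt{HALT} at line~\ref{alg:loops:halt} is sound.

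The core is the \texttt{NONHALT} verdict. When the decider returns \texttt{NONHALT} at the checkpoint $i = l + \mathrm{offset} - 1$, the inner loop has certified, writing $n := l + \mathrm{offset}$, that $(s_{L-j}, m_{L-j}) = (s_{L-l-j}, m_{L-l-j})$ for all $0 \le j < n$; equivalently the transcript has period $l$ on a contiguous time window. I would then set $t_0 = L - l - n + 1$ and $t_0' = t_0 + l = L - n + 1$: the $l$ matches for $i = 0, \dots, l-1$ are exactly the hypothesis $F_\mathcal{M}(p_i) = F_\mathcal{M}(p_i')$ of Theorem~\ref{th:loops:theory} (all values defined because $\mathcal{M}$ did not halt within $L$ steps), and the checkpoint inspects precisely the two window-start positions $d' = g(t_0)$ and $d = g(t_0')$. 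The three return branches then match the three theorem cases: $d = d'$ is Case~\ref{th:case1} (Cycler); the maximality tests, which read $g(t_0') = \max\{g(j) : j < t_0'\}$ and $g(t_0) = \max\{g(j) : j < t_0\}$, give Case~\ref{th:case2} (and, since the earlier window is contained in the later one, they automatically force $g(t_0') \ge g(t_0)$ as Case~\ref{th:case2} requires); the minimality tests give Case~\ref{th:case3} symmetrically. Theorem~\ref{th:loops:theory} then declares $\mathcal{M}$ a loop and Lemma~\ref{lem:loopdonthalt} yields nonhalting.

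The main obstacle is linking the algorithm's \emph{extremality} test to the theorem's \emph{tape-content} hypothesis. The lemma I must supply is: if $g(t) = \max\{g(j) : j < t\}$, then the head has never been strictly right of $g(t)$ up to time $t$, so from the all-$\szero$ initial tape every cell at position $> g(t)$ still holds $\szero$. Applying this at both $t_0$ and $t_0'$ makes the two right tapes identically $\szero$ beyond the head (covering $z \ge 1$ in Case~\ref{th:case2}), while the transcript match $F_\mathcal{M}(p_0) = F_\mathcal{M}(p_0')$ supplies the head cell itself ($z = 0$); the left-tape version is symmetric for Case~\ref{th:case3}. The remaining care is offset/rewind bookkeeping: confirming that extending the matched window backward preserves the period-$l$ structure and that the checkpoint indices land on $g(t_0)$ and $g(t_0')$ exactly.
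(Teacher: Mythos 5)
Your proposal is correct and follows essentially the same route as the paper's proof: termination and \texttt{HALT} are dispatched directly, and \texttt{NONHALT} is reduced to Theorem~\ref{th:loops:theory} by identifying the matched transcript window with the hypothesis $F_\mathcal{M}(p_i)=F_\mathcal{M}(p_i')$, mapping the three return branches to the theorem's three cases, and using the extremality test to conclude that the tape beyond the head is all $\szero$ in both windows. Your explicit handling of the $z=0$ cell via the transcript match and the observation that $g(t_0')\geq g(t_0)$ is automatic are details the paper leaves implicit, but they do not change the argument.
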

\begin{proof}
    The call to \textsc{decider-loops}($\mathcal{M}$, $L$) terminates as all loops are bounded. The call returns \texttt{HALT} if and only if $\mathcal{M}$ halts within $L$ steps from the all-0 tape, see Algorithm~\ref{alg:loops}, l.\ref{alg:loops:halt}, hence if the call returns \texttt{HALT} we know that the machine halts. The interesting case is the loop-detection leading to \texttt{NONHALT}.

    Algorithm~\ref{alg:loops} finds $t_0$ and $l$ satisfying the hypotheses of Theorem~\ref{th:loops:theory}: $t_0 = K-l-1-o$
    and where $F_\mathcal{M}(p_i)   = F_\mathcal{M}(p'_i)$ is guaranteed thanks to Algorithm~l.\ref{alg:loops:testeq}. Theorem~\ref{th:loops:theory} Cases~1, 2, and, 3 are respectively handled by Algorithm~l.\ref{alg:loops:cycler}, l.\ref{alg:loops:tcplus}, and, l.\ref{alg:loops:tcminus}. In Case 2/Case 3, the condition of having tapes be the same to the right/left of $p_0$ and $p_0'$ is handled by making sure the head is at the maximum/minimum seen position of the tape in both cases, ensuring that there are only 0s to the right/left, and therefore satisfying the condition. Hence, we get that $\mathcal{M}$ does not halt.
\end{proof}

\subsubsection{Implementations and results}\label{sec:loops:results}

\begin{table}[h!]
  \centering
  \begin{tabular}{|l|rrr|}
    \hline
    Step-limit parameter $L$ & Nonhalt     & Halt       & Total decided \\
    \hline
    130                      & 126,950,828 & 48,367,435 & 175,318,263   \\
    4100                     & 43,269      & 12,276     & 55,545        \\
    1,050,000                & 2           & 0          & 2             \\ \hline
    Total                    & 126,994,099 & 48,379,711 & 175,373,810   \\
    \hline
  \end{tabular}
  \caption{Machines decided by using the loop deciders (Algorithm~\ref{alg:loops}) in the $S(5)$ pipeline (Table~\ref{tab:pipelineBB5}) per step-limit parameter $L$.}\label{tab:paramsLoops}
\end{table}

The decider for loops, Algorithm~\ref{alg:loops}, is implemented as part of Coq-BB5 (function \texttt{loop1\_decider}). As advertised in the $S(5)$ pipeline (Table~\ref{tab:pipelineBB5}), it decides a very important proportion of the enumerated 5-state Turing machines: 95.48\% of the nonhalting machines and more than 99.99\% of the halting ones and this with fairly low step-limit parameters, see Table~\ref{tab:paramsLoops}. This means, for instance, that 99.99\% of the enumerated 5-state halting machines halt before $4{,}100$ steps.

\begin{figure}
  \centering
  \includegraphics[width=0.35\textwidth]{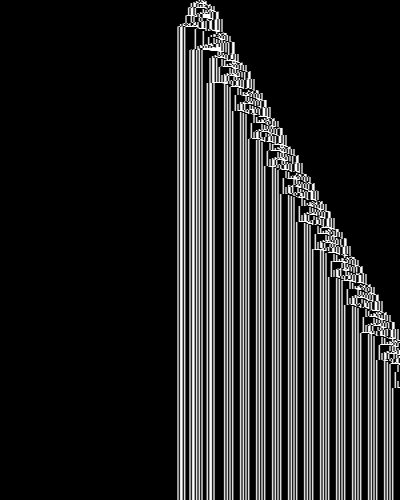}

  \caption{10,000-step space-time diagram of a Translated Cycler not decided by the decider for loops in \CoqBB (it is decided by NGramCPS, see Section~\ref{sec:n-gramCPS}). See \tm{1RB0LE_1LC0RD_---1LD_1RE0LA_1LA0RE}.}\label{fig:translated-cyclers-more}
\end{figure}

The number of nonhalting machines decided by this decider in \CoqBB (\ie $126{,}994{,}099$, see Table~\ref{tab:paramsLoops}) is a lower bound of the actual number of 5-state loops, for instance here are two loops decided by other means:

\begin{enumerate}
  \item Figure~\ref{fig:translated-cyclers-more} gives a \TC that is decided by the $n$-gram Closed Position Set (NGramCPS) decider, see Section~\ref{sec:n-gramCPS}. Higher step-limit $L$ would have been needed to be detected by Algorithm~\ref{alg:loops}.
  \item The Sporadic machine (\ie machine which required an individual proof of nonhalting) named ``Skelet \#1'', see Section~\ref{sec:sporadic}, is a \TC but with enormous parameters: it does not start looping before $5.41 \times 10^{51}$ steps and has a period of more than 8 billion steps \cite{ShawnSkelet1}. There is no reasonable step-limit $L$ for which this machine would have been decided by the decider for loops, neither in fact by any of the deciders presented in this work which all more or less rely on step-by-step simulation. An individual proof of nonhalting was required \cite{busycoq}.
\end{enumerate}

In this sample of $126{,}994{,}099$ nonhalting loops we find approximately 86\% \TCs and 14\% \cyclers which suggests that, in general, \TCs are much more common than \cyclers.

\textbf{Other implementation.} Algorithm~\ref{alg:loops} also has a Python implementation.\footnote{\url{https://github.com/bbchallenge/bbchallenge-deciders/tree/main/decider-loops-reproduction}}

\newcommand{\ngramcps}{NGramCPS\xspace}

\subsection{$n$-gram Closed Position Set (\ngramcps)}\label{sec:n-gramCPS}

\newcommand{\leftngram}{left\xspace}
\newcommand{\rightngram}{right\xspace}
\newcommand{\middlesymbol}{middle\xspace}

The $n$-gram Closed Position Set (\ngramcps) decider which we introduce here, Algorithm~\ref{alg:NGramCPS}, is a simplification of an earlier technique, Closed Position Set (CPS), itself introduced in \texttt{bbfind} \cite{Skelet_bbfind}, see Section~\ref{sec:intro:mainresults}. Surprisingly, \ngramcps is a relatively simple technique which makes a potent decider as it decides 99.89\% of all nonhalting enumerated 5-state machines excluding loops, see Table~\ref{tab:pipelineBB5}.

The method is especially potent when augmenting the binary alphabet of Turing machines to record extra information on the tape, such as a fixed-length history of previously seen (state,symbol) pairs, see Section~\ref{sec:n-gramCPS:augmentations}. \ngramcps was first developed without augmentations \cite{ngramcps_fenner} which were later introduced with \CoqBB.

\subsubsection{Algorithm}\label{sec:n-gramCPS:algo}

\begin{algorithm}
    \caption{{\sc decider-NGramCPS}}\label{alg:NGramCPS}

    \begin{algorithmic}[1]
        \State{\textbf{Input:} A Turing machine $\mathcal{M}$, the zero symbol of the alphabet $\alphabet_0$, the size of the n-grams $n > 0$.}
        \State{\textbf{Output:} \NONHALT if the decider detects that the machine does not halt and \UNKNOWN otherwise.}

        \State $g_0 = (\alphabet_0)^n$ \Comment{The zero n-gram consists of $n$ zero symbols}
        \State $L = \{ g_0 \}$ \Comment{The seen left n-grams}
        \State $R = \{ g_0 \}$ \Comment{The seen right n-grams}
        \State $C =$ \{\{.\leftngram $=$ $g_0$, .\rightngram $=$ $g_0$, .state $=$ \stateA, .\middlesymbol $=$ $\alphabet_0$ \}\} \Comment{The seen local configurations}
        \While{true}\label{alg:NGramCPS:line:whileTrue}
        \State $V = C$
        \State any\_updates $=$ false
        \While{$|V| \neq 0$}
        \State $c = V.\textbf{pop}()$ \Comment{Remove an arbitrary element $c$ from $V$}
        \State $c' = c$
        \State $\{w,d,s\}$ $=$ $\mathcal{M}$(c.state, c.\middlesymbol) \Comment{Transition's write symbol, move direction, and next state}
        \State \If{$s$ is undefined} \Comment{Undefined transition is met, we cannot conclude}
        \State \textbf{return} UNKNOWN\label{alg:NGramCPS:line:unknown}
        \EndIf
        \State \If{$d$ is Right}
        \State Insert $c.\text{\leftngram}$ in $L$ \label{alg:NGramCPS:line:insertL}
        \State Set $c'.\text{\leftngram}$ to the last $n-1$ symbols of $c.\text{\leftngram}$ followed by $w$
        \State Set $c'.\text{\middlesymbol}$ to the first symbol of $c.\text{\rightngram}$
        \For{each ngram $r\in R$ starting with the last $n-1$ symbols of $c.\text{\rightngram}$}
        \State Set $c'.\text{\rightngram}$ to $r$
        \If{$c'$ is not in $C$}
        \State \tabi Insert $c'$ in $C$ \label{alg:NGramCPS:line:insertInConfSet}
        \State \tabi Insert $c'$ in $V$ \label{alg:NGramCPS:line:insertInConfSetToVisit}
        \State \tabi any\_updates $=$ true
        \EndIf
        \EndFor
        \EndIf
        \State \If{$d$ is Left} \label{alg:NGramCPS:line:moveLeft}
        \State Insert $c.\text{\rightngram}$ in $R$
        \State Set $c'.\text{\rightngram}$ to the first $n-1$ symbols of $c.\text{\rightngram}$ preceded by $w$
        \State Set $c'.\text{\middlesymbol}$ to the last symbol of $c.\text{\leftngram}$
        \For{each ngram $l \in L$ ending with the first $n-1$ symbols of $c.\text{\leftngram}$}
        \State Set $c'.\text{\leftngram}$ to $l$
        \If{$c'$ is not in $C$}
        \State \tabi Insert $c'$ in $C$
        \State \tabi Insert $c'$ in $V$
        \State \tabi any\_updates $=$ true
        \EndIf
        \EndFor
        \EndIf
        \EndWhile
        \State \If{\textbf{not} any\_updates}
        \State \textbf{return} \NONHALT\label{alg:NGramCPS:line:nonhalt} \Comment{Set $C$ is closed, and does not include undefined transitions:\\ \tabi \tabi \tabi \tabi \tabi \tabi \tabi \tabi \tabi \tabi \tabi \space \space \space \space the machine does not halt}
        \EndIf

        \EndWhile
    \end{algorithmic}
\end{algorithm}

Algorithm~\ref{alg:NGramCPS} gives a pseudo-code of the \ngramcps decider. The decider considers finite, \textit{local configurations} of a Turing machine consisting of: (i) the \textit{$n$-grams} (see after) respectively to the left and to the right of the head; (ii) the state the machine is in; (iii) the symbol currently read by the head, referred to as \textit{middle} symbol (as opposed to the left and right part of the tape, modelled by the n-grams). By \textit{$n$-gram}, we mean a sequence of $n > 0$ symbols from the tape alphabet (for instance, the binary alphabet $\alphabet=\{\szero,\sone\}$).

The algorithm builds a set of local configurations \textit{potentially} reachable by the machine until either an undefined transition is met (Algorithm~\ref{alg:NGramCPS}, l.\ref{alg:NGramCPS:line:unknown}) or no new configurations are added to the set, i.e. the set is closed under Turing machine operations (Algorithm~\ref{alg:NGramCPS}, l.\ref{alg:NGramCPS:line:nonhalt}). In the first case, the decider cannot conclude and the machine is left undecided. In the second case, the decider concludes that the machine does not halt as no undefined transition (\ie where the machine could be asked to halt) can be reached, Theorem~\ref{th:ngramcps}, this is a CTL argument, see Section~\ref{sec:deciders-overview}.

The central idea of this decider and the reason behind using the ``$n$-gram'' terminology (originating from \textit{$n$-gram models} in language analysis)  is better illustrated by the following example. Let $n=3$ and consider local configuration \texttt{011 [B0] 100}, meaning that the left $n$-gram is \texttt{011}, right $n$-gram is \texttt{100}, the machine is in state \stateB and reading symbol \texttt{0}. Assume that the  machine's transition for reading a \texttt{0} in state \stateB is \texttt{1RC}, meaning that the machine writes \texttt{1}, moves right and transitions to state \stateC. The local configuration becomes \texttt{011 1 [C1] 00?}, where \texttt{?} means that we do not know which symbol to use. Then:

\begin{enumerate}
    \item \textbf{Left $n$-gram update.} We record the left $n$-gram \texttt{011} as seen (it is inserted in set $L$, Algorithm~\ref{alg:NGramCPS}, l.\ref{alg:NGramCPS:line:insertL}) and we discard its first bit, updating the left $n$-gram to \texttt{111}. The local configuration becomes \texttt{111 [C1] 00?}.
    \item \textbf{Right $n$-gram update.} In order to deal with the unknown symbol \texttt{?}, we look among the previously seen right $n$-grams (contained in set $R$ in Algorithm~\ref{alg:NGramCPS}) the ones that start by \texttt{00}. For instance, let us assume it is \texttt{000} and \texttt{001}. Then we add both local contexts \texttt{111 [C1] 000} and \texttt{111 [C1] 001}, if not already in: (a) to our set of local configurations (Algorithm~\ref{alg:NGramCPS}, l.\ref{alg:NGramCPS:line:insertInConfSet}), and (b) to our set of configurations to visit (Algorithm~\ref{alg:NGramCPS}, l.\ref{alg:NGramCPS:line:insertInConfSetToVisit}) in order to repeat this procedure (or symmetrical when the machine moves left) on them.
\end{enumerate}

The algorithm systematically revisits all previously added local configurations, in case they contain a right/left $n$-gram that was newly met (Algorithm~\ref{alg:NGramCPS}, l.\ref{alg:NGramCPS:line:whileTrue}). Assuming a finite tape alphabet (which we always do in this work), the algorithm will eventually terminate since the number of possible local configurations is finite. In practice, one may add a limit on the number of iterations to avoid long computations.

\begin{theorem}[\CoqBB: \texttt{Lemma NGramCPS\_decider\_spec}]\label{th:ngramcps}
    Let $\mathcal{M}$ be a Turing machine using tape alphabet $\alphabet$ containing zero symbol $\alphabet_0$ and let $n \in \N^+$ be the $n$-gram length parameter. \textsc{decider-NGramCPS}($\mathcal{M}$, $\alphabet_0$, $n$) terminates and its result is correct -- see Algorithm~\ref{alg:NGramCPS}: if it returns \NONHALT then $\mathcal{M}$ does not halt from the all-$\alphabet_0$ tape.
\end{theorem}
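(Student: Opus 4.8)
The plan is to recognise \ngramcps as an instance of the Closed Tape Language (CTL) framework of Section~\ref{sec:deciders-overview}: from the data $(C,L,R)$ held by the algorithm when it returns \NONHALT I will construct an explicit set $\mathcal{C}$ of full configurations and verify the three CTL conditions. Writing a configuration as $(q,p,\tau)$ with state $q$, head position $p\in\Z$ and tape $\tau:\Z\to\alphabet$, I define $(q,p,\tau)\in\mathcal{C}$ to hold exactly when there is a local configuration $(\ell,r,q,a)\in C$ whose immediate neighbourhood matches the tape --- that is $a=\tau(p)$, the cells $\tau(p-n)\cdots\tau(p-1)$ spell $\ell$ and $\tau(p+1)\cdots\tau(p+n)$ spell $r$ --- and such that every length-$n$ window strictly further left, $\tau(p-n-k)\cdots\tau(p-1-k)$ for $k\ge 1$, belongs to $L$, and every window strictly further right, $\tau(p+1+k)\cdots\tau(p+n+k)$ for $k\ge 1$, belongs to $R$. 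The design insight to capture is that $C$ pins down the immediate $2n{+}1$ cells around the head while $L,R$ constrain only what lies beyond them; this is what makes $\mathcal{C}$ regular and makes closure go through. Since the tape is eventually all-$\alphabet_0$ and $g_0\in L\cap R$ by initialisation, the far tails of these window conditions are automatically satisfied.

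First I would dispatch termination and the two easy conditions. Termination holds because there are only finitely many local configurations ($|\alphabet|^{2n+1}\cdot|\states|$), $C$ grows monotonically, and each outer pass either enlarges $C$ or triggers a return. The initial all-$\alphabet_0$ configuration lies in $\mathcal{C}$: the seed $(g_0,g_0,\stateA,\alphabet_0)\in C$ matches and every window equals $g_0\in L\cap R$. For the no-halting-configuration condition, returning \NONHALT requires a full pass over $V=C$ with no updates that never reaches line~\ref{alg:NGramCPS:line:unknown}; hence every local configuration in $C$ has a defined transition, so no configuration of $\mathcal{C}$ is halting.

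The heart of the proof --- and the step I expect to be the main obstacle --- is closure of $\mathcal{C}$ under one step. Take $(q,p,\tau)\in\mathcal{C}$ with witness $(\ell,r,q,a)\in C$ and suppose $\delta(q,a)=(w,d,s)$ with $d$ to the right (the left case is dual). After the step the head sits at $p+1$, cell $p$ holds $w$, the new middle symbol is $a'=\tau(p+1)$ (the first symbol of $r$), the new left $n$-gram is $\ell'=(\text{last }n{-}1\text{ of }\ell)\,w$, and the new right $n$-gram is $r'=\tau(p+2)\cdots\tau(p+n+1)$. Three observations finish the case. First, $r'\in R$ since it is a strictly-further-right window of the original configuration, and it begins with the last $n{-}1$ symbols of $r$; so it is one of the $n$-grams the loop iterates over when processing $(\ell,r,q,a)$, and because the final pass added nothing, $(\ell',r',s,a')$ is already in $C$. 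Second, the strictly-further-left windows of the successor are $\ell$ together with the old strictly-further-left windows: $\ell$ is inserted into $L$ precisely at line~\ref{alg:NGramCPS:line:insertL} --- which is exactly why that insertion is present --- and the rest lie in $L$ by hypothesis. Third, the new strictly-further-right windows form a subset of the old ones, hence lie in $R$. Since $a',\ell',r',s$ all match $(\ell',r',s,a')$, the successor lies in $\mathcal{C}$. Invoking the CTL argument of Section~\ref{sec:deciders-overview} then gives that $\mathcal{M}$ never halts from the all-$\alphabet_0$ configuration.

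The delicate bookkeeping is to phrase the invariant as ``strictly beyond the immediate $n$-grams'' rather than ``including them'', and to match each clause of the closure step to the exact insertion and lookup lines of Algorithm~\ref{alg:NGramCPS}; getting this boundary right is where the real care lies. In the \CoqBB formalisation this amounts to encoding $\mathcal{C}$ as a finite automaton reading the tape outward from the head and proving the one-step relation preserves acceptance, but conceptually the argument is exactly the CTL verification sketched above.
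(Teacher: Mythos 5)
Your proof is correct and follows the same route as the paper's: the paper's proof is a short sketch that asserts termination by finiteness of the local-configuration space and correctness "by construction" of an overapproximation, concluding via the CTL argument of Section~\ref{sec:deciders-overview}. You supply exactly the invariant that the paper leaves implicit --- the regular set of full configurations determined by $(C,L,R)$, with the ``strictly beyond the immediate $n$-grams'' bookkeeping and the one-step closure check --- so this is a more detailed rendering of the same argument rather than a different one.
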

\begin{proof}
    Algorithm~\ref{alg:NGramCPS} is guaranteed to terminate because either an undefined transition is eventually met (Algorithm~\ref{alg:NGramCPS}, l.\ref{alg:NGramCPS:line:unknown}) or because the set of local configuration -- which is bounded by the finite set of all possible local configurations -- is saturated (Algorithm~\ref{alg:NGramCPS}, l.\ref{alg:NGramCPS:line:nonhalt}).

    By construction, Algorithm~\ref{alg:NGramCPS} overestimates the set of all local configurations reached by the machine from the all-$\alphabet_0$ tape, \ie it contains at least all the reached local configurations and potentially more. If this set contains no local configuration leading to an undefined transition, we are assured that the machine does not halt, Algorithm~\ref{alg:NGramCPS}, l.\ref{alg:NGramCPS:line:nonhalt}. This is a CTL argument, see Section~\ref{sec:deciders-overview}.
\end{proof}

\subsubsection{Tape alphabet augmentations}\label{sec:n-gramCPS:augmentations}

The \ngramcps decider becomes particularly powerful for deciding 5-state 2-symbol Turing machines when augmenting the 2-symbol alphabet to store more information on the tape. Two augmentations are used in \CoqBB:

\begin{enumerate}
    \item \textbf{Fixed-length history.} In this variant, tape symbols encode the current binary symbol on a cell as well as a fixed-length list of (state,binary symbol) pairs previously seen on the cell. For instance, if the non-augmented machine currently reads binary symbol \texttt{1} and the machine has previously visited the cell in state \stateA reading symbol \texttt{0} and before that in state \stateB reading symbol \texttt{1}, in the augmented machine, the cell will contain the augmented symbol ``\texttt{1, [(A,0),(B,1)]}''. If the history length is set to 2 and the machine was in state \stateC when reading ``\texttt{1, [(A,0),(B,1)]}'' the cell will be updated to ``\texttt{0, [(C,1),(A,0)]}'', assuming the transition of the machine for reading a \texttt{1} in state \stateC requires to write symbol \texttt{0}. The zero-symbol for this augmentation $\alphabet_0$ is ``\texttt{0, []}''. Furthermore, it is easy to verify that if the decider returns \NONHALT for a fixed-length augmented machine, then the non-augmented machine does not halt.
    \item \textbf{Least Recent Usage history (LRU).} In this variant, tape symbols encode the set of state-symbol pairs seen at that cell,
          in order of when it was seen last, the most recent first.  For instance, if the non-augmented machine currently reads binary symbol \texttt{1} and the machine has previously visited the cell in state \stateD reading symbol \texttt{1} and before that in state \stateC reading symbol \sone, and before that in state \stateD reading symbol \texttt{0}, in the augmented machine, the cell will contain the augmented symbol ``\texttt{1, [(D,1),(C,1),(D,0)]}''. Assume the augmented machine was in state \stateC when reading ``\texttt{1, [(D,1),(C,1),(D,0)]}'' the cell will be updated to ``\texttt{0, [(C,1),(D,1),(D,0)]}'', (assuming the transition of the machine for reading a \texttt{1} in state \stateC writes symbol \texttt{0}) with pair \texttt{(C,1)} bubbling up to the beginning of the LRU history. The zero-symbol for this augmentation $\alphabet_0$ is also ``\texttt{0, []}''. Similarly to above, one can verify that if the decider returns \NONHALT for an LRU augmented machine, then the non-augmented machine does not halt. One fundamental difference with the fixed-length history augmentation is that here, the history is not of fixed length but is bounded by number of states times number of symbols, \ie 10 in the case of $S(5)$.
\end{enumerate}

\subsubsection{Implementations and results}\label{sec:n-gramCPS:results}
\begin{table}[h!]
    \centering
    \begin{tabular}{|l|r|}
        \hline
        Variant                                              & Nonhalt   \\  \hline
        NGram-CPS without augmentation                       & 5,117,863 \\
        NGram-CPS augmented using fixed-length history       & 887,093   \\
        NGram-CPS augmented using Least Recent Usage history & 182       \\ \hline
        Total decided                                        & 6,005,138 \\
        \hline
    \end{tabular}
    \caption{\ngramcps results in the $S(5)$ pipeline (see Table~\ref{tab:pipelineBB5}) per variant (see Section~\ref{sec:n-gramCPS:augmentations}).}\label{tab:ngramcps:results}
\end{table}

\CoqBB implements \ngramcps (Algorithm~\ref{alg:NGramCPS}) in the three variants discussed here, (i) without augmentation (function \texttt{NGramCPS\_decider\_impl2}) -- \ie using standard binary alphabet $\alphabet=\{\szero,\sone\}$; (ii) fixed-length history (function \texttt{NGramCPS\_decider\_impl1}); (iii) Least Recent Usage history (function \texttt{NGramCPS\_LRU\_decider}). Compared to Algorithm~\ref{alg:NGramCPS}, \CoqBB implementations integrate an additional parameter allowing them to terminate early for the sake of performance. The implementations for (ii) and (iii) use the same core implementation as for (i) just accordingly augmenting the tape-alphabet of the machine and its read/write behaviour (see definitions \texttt{TM\_history} and \texttt{TM\_history\_LRU}).

Altogether, \ngramcps decides 99.89\% of all nonhalting enumerated 5-state machines excluding loops, see Table~\ref{tab:pipelineBB5}. The number of machines decided by each \ngramcps variant in the $S(5)$ pipeline (Table~\ref{tab:pipelineBB5}) are given in Table~\ref{tab:ngramcps:results}. Augmentations allowed to decide machines that resisted all other methods, without having to resort to individual proofs of nonhalting -- see details in the full $S(5)$ pipeline, Appendix~\ref{app:pipelines}.

Figure~\ref{fig:ngram-cps-more} gives an example of a ``fractal-looking'' 5-state Turing machine that is solved by the LRU augmentation but has no known solution with standard \ngramcps or the fixed-length history augmentation.

\begin{figure}
    \centering
    \includegraphics[width=0.35\textwidth]{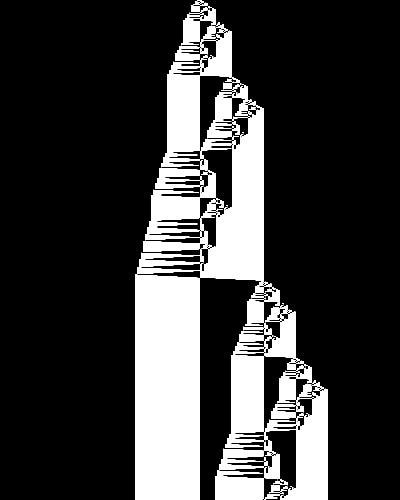}

    \caption{10,000-step space-time diagram of a ``fractal-looking'' 5-state Turing machine that is solved by the LRU augmentation but has no known solution with standard \ngramcps or the fixed-length history augmentation, see Section~\ref{sec:n-gramCPS:augmentations}. \tm{1RB0RA_1LC---_1RC1LD_0LE1RA_0LC0LE}}\label{fig:ngram-cps-more}
\end{figure}

\newpage

\newpage

\begin{figure}[h!]
    \centering
    \includegraphics[scale=0.8]{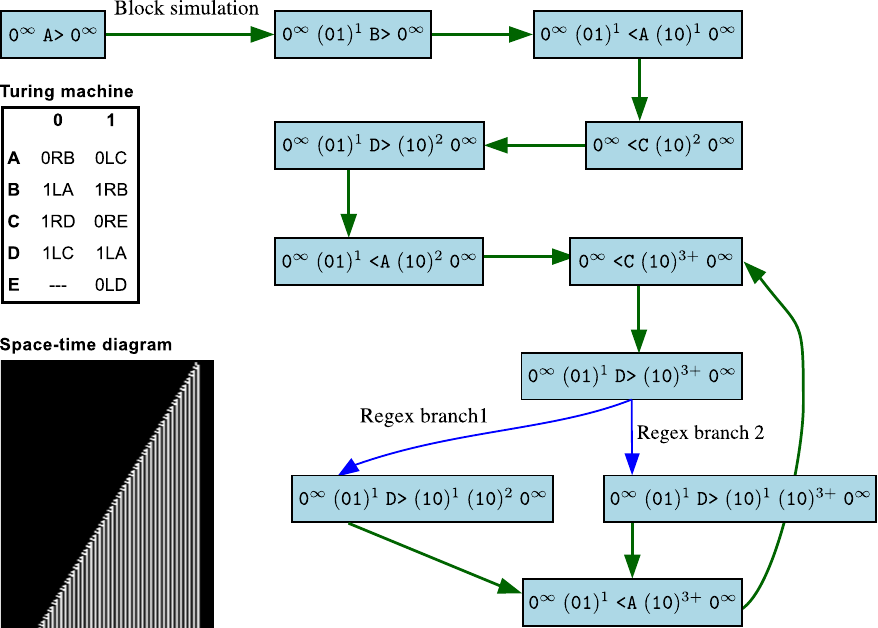}
    \caption{{\small \textbf{RepWL graph.} Closed graph of regex configurations constructed by the Repeated Word List (RepWL) method (Section~\ref{sec:RepWL}) for machine \tm{0RB0LC_1LA1RB_1RD0RE_1LC1LA_---0LD} with block length $l=2$ and repeat threshold $T=3$. Block simulation and regex branching steps (see Section~\ref{sec:RepWL}) are illustrated using respectively green and blue arrows. As illustrated by its 300-step space-time diagram, the machine is a simple Translated Cycler which can be easily handled by Algorithm~\ref{alg:loops}, but, because of its very small graph it is convenient to use this machine for illustrative purposes. Because the graph is closed and contains no halting configuration, the machine does not halt, Theorem~\ref{th:repwl}.}}\label{fig:repWL}
\end{figure}

\subsection{Repeated Word List (RepWL)}\label{sec:RepWL}

\begin{figure}[h!]
    \centering
    \includegraphics[scale=0.48]{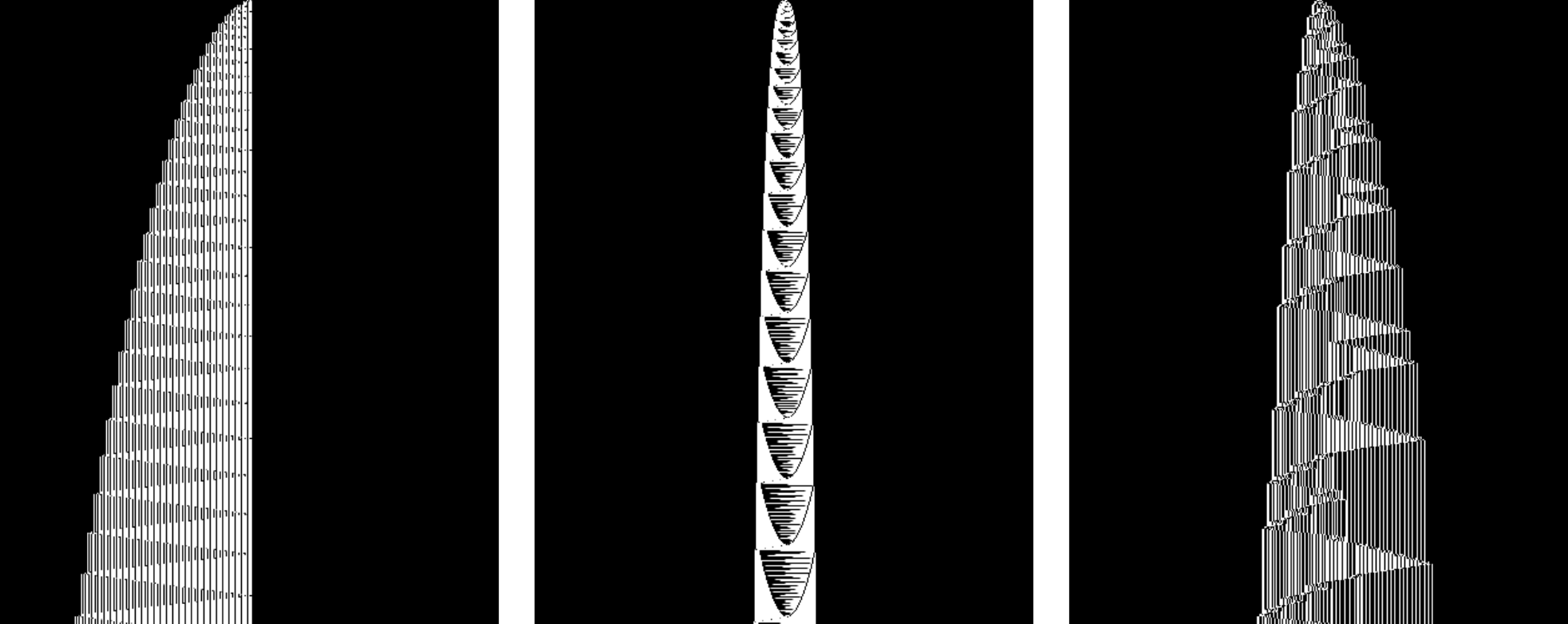}
    \caption{10,000-step space-time diagrams of three 5-state machines decided by the Repeated Word List (RepWL) decider, Algorithm~\ref{alg:RepWL}. Left: {\small \tm{1RB0RD_0LC0LA_0LD1LC_1RA0LE_0RC---}}. Center: {\small \tm{1RB---_1LB1RC_1RA1RD_1LE0RD_0LB0LC}}. Right: {\small \tm{1RB---_0RC1RD_0LD1RC_1LE0RA_1RA0LE}}. The RepWL graphs of these machines respectively have $42$, $845$, and $143{,}181$ nodes, ranging the entire distribution of RepWL node counts for 5-state machines, see Section~\ref{sec:RepWL:results}. RepWL parameters $(l,T)$ for these machines are respectively: $(5,2)$, $(2,3)$, and, $(20,2)$.
    }\label{fig:repWLThree}
\end{figure}

\subsubsection{Algorithm}

The Repeated Word List (RepWL) technique, introduced in \CoqBB, is based on the following simple idea: if a word (or \textit{block}) of length $l > 0$ appears consecutively on the tape more than $T > 0$ times (with $l, T \in \mathbb{N}$ fixed) then, we assume it may repeat an unbounded number of times in the future. In practice, it means we represent configurations as regular expressions and call them \textit{regex configurations}. For instance, consider the following configuration:
$$ \texttt{0}^\infty \; \texttt{11100 A> 111101010111111111} \; \texttt{0}^\infty$$
Using block length $l=2$, by grouping symbols from the head outwards and symbols $\szero$ drawn from $\texttt{0}^\infty$ if needed, we get:
$$ \texttt{0}^\infty \; (\texttt{01}) \; (\texttt{11}) \; (\texttt{00}) \; \texttt{A>} \; (\texttt{11})^2 \; (\texttt{01})^{3} \; (\texttt{11})^{4} \; \texttt{0}^\infty $$
And, using \textit{repeat threshold} $T=3$ we get the following regex configuration:
$$ \texttt{0}^\infty \; (\texttt{01}) \; (\texttt{11}) \; (\texttt{00}) \; \texttt{A>} \; (\texttt{11})^2 \; (\texttt{01})^{3+} \; (\texttt{11})^{3+} \; \texttt{0}^\infty $$

Any repetition of more than $T$ times the same word $w \in \{\szero,\sone\}^l$ is replaced by the regular expression $(w)^{T+}$ meaning that word $w$ is repeated at least $T$ times, hence the only exponents to ever be used in this representation are $\{1,2,\dots,T-1\}$ and $T+$. We call $T$ the repeat threshold. Note that here, we use \textit{directional head notation} for Turing machines, where the head lives in between cells and points either right or left. This framework is equivalent to the Turing machines setup used elsewhere in this work, see Section~\ref{sec:TMs}.

\paragraph{RepWL graph.} Using the rules explained below (\textit{block simulation} and \textit{regex branching}), {\sc decider-RepWL} (Algorithm~\ref{alg:RepWL}) simulates Turing machines directly on these regex configurations starting from the initial configuration (\ie $\texttt{0}^\infty \; \texttt{A>} \; \texttt{0}^\infty$), as to create a graph of such regex configurations to explore. If this graph is eventually closed (Algorithm~\ref{alg:RepWL}~l.\ref{alg:RepWL:closed}) and contains no halting configuration then we know that the machine will never halt, Theorem~\ref{th:repwl}, this is a CTL argument, see Section~\ref{sec:deciders-overview}. Because there is no guarantee the graph is finite, in order to force termination, we also need an additional parameter, named $N$ in Algorithm~\ref{alg:RepWL}, indicating how many distinct nodes we are willing to visit at most. Figure~\ref{fig:repWL} gives the RepWL graph of a simple machine.

For simulating Turing machines on regex configurations we need to deal with two cases: (i) \textit{block simulation} when the head is facing a constant block (\ie block without a $+$), such as $\texttt{A>} \; (\texttt{11})^2$ and (ii) \textit{regex branching} when the head is facing a block with a $+$, \eg $\texttt{D>} \; (\texttt{01})^{3+}$.

\paragraph{Block simulation.} When the head is facing a constant block, such as in the above example $\texttt{A>} \; (\texttt{11})^2$ (or if the head is facing $\texttt{0}^\infty$, we add constant block $(\szero^l)^1$), we can proceed to \textit{block simulation}.
Block simulation consists of simulating the Turing machine until the head eventually leaves the block or until a maximum step limit is reached (parameter named $B$ in Algorithm~\ref{alg:RepWL}) or until the machine halts. Note that the TM may never leave the block if it enters an infinite cycle which is why we need the step limit -- one could alternatively implement cycle detection (Section~\ref{sec:loops}) in block simulation but it is not the route taken in \CoqBB. Performing block simulation from $\texttt{A>} \; (\texttt{11})^2$ could produce, for instance, $\texttt{00}\; \texttt{00} \; \texttt{B>}$ or $\texttt{<C} \; \texttt{10} \; \texttt{11} $ or enter a cycle and never leave the block, depending on the Turing machine being simulated. After block simulation, identical contiguous blocks are regrouped into powers, \eg $\texttt{00} \; \texttt{00} \; \texttt{B>}$ becomes $(\texttt{00})^2 \; \texttt{B>}$ and, assuming $T=3$, the tape $(\texttt{10})^2\; (\texttt{10})^1\; \texttt{B>}$ would become $(\texttt{10})^{3+}\; \texttt{B>}$. In Figure~\ref{fig:repWL}, block simulation transforms $\texttt{0}^\infty \; \texttt{A>} \; \texttt{0}^\infty$ into $\texttt{0}^\infty \; (\texttt{01})^1 \; \texttt{B>} \; \texttt{0}^\infty$, see Example~\ref{ex:repWL} for details.

\paragraph{Regex branching.} When the head is facing a block with a $^+$, for instance in Figure~\ref{fig:repWL} we have $\texttt{0}^\infty \; \texttt{01}^1 \; \texttt{D>} \; (\texttt{01})^{3+} \; \texttt{0}^\infty$, from which we add two configurations to the set of configurations to visit next:
\begin{enumerate}
    \item \textbf{Regex branch 1.} We visit $\texttt{0}^\infty \; \texttt{01}^1 \; \texttt{D>} \; (\texttt{01})^1 \; (\texttt{01})^{2} \; \texttt{0}^\infty$.
    \item \textbf{Regex branch 2.} We visit $\texttt{0}^\infty \; \texttt{01}^1 \; \texttt{D>} \; (\texttt{01})^1 \; (\texttt{01})^{3+} \; \texttt{0}^\infty$.
\end{enumerate}
In both cases, we have reduced to block simulation.

\begin{example}\label{ex:repWL}
    Figure~\ref{fig:repWL} gives the RepWL graph for machine \tm{0RB0LC_1LA1RB_1RD0RE_1LC1LA_---0LD}\footnote{This machine is a simple Translated Cycler (Section~\ref{sec:loops}), but it was chosen because its RepWL graph is small.} for block length $l=2$ and repeat threshold $T=3$. The first edge of the graph reads $\texttt{0}^\infty\; \texttt{A>}\;\texttt{0}^\infty$ goes to $\texttt{0}^\infty\; (\texttt{01})^1\; \texttt{B>}\; \texttt{0}^\infty$ using block simulation, this is because, using our parameters, $\texttt{0}^\infty\; \texttt{A>}\;\texttt{0}^\infty$ rewrites as $\texttt{0}^\infty\; \texttt{A>}\; (\texttt{00})\;\texttt{0}^\infty$ and we have $\texttt{0}^\infty\; \texttt{A>}\; (\texttt{00})\;\texttt{0}^\infty \TMstep \texttt{0}^\infty\;  \texttt{0}\; \texttt{B>}\;\texttt{0}\;\texttt{0}^\infty \TMstep \texttt{0}^\infty\;  \texttt{A>}\texttt{0}\;\texttt{1}\;\texttt{0}^\infty \TMstep \texttt{0}^\infty\;  \texttt{0}\;\texttt{B>}\texttt{1}\;\texttt{0}^\infty \TMstep  \texttt{0}^\infty\; (\texttt{01})^1\; \texttt{B>}\; \texttt{0}^\infty$, as needed. The graph contains only one case of regex branching, on regex configuration $\texttt{0}^\infty\; (\texttt{01})^1\; \texttt{D>}\; (\texttt{10})^{3+}\; \texttt{0}^\infty$, all the other edges are instances of block simulation. The graph is closed and contains no halting configuration, the machine does not halt, Theorem~\ref{th:repwl}.
\end{example}

\begin{theorem}[\CoqBB: \texttt{Lemma RepWL\_ES\_decider\_spec}]\label{th:repwl}
    Let $\mathcal{M}$ be a Turing machine, $l\in \N^+$ the block-length parameter, $T \in \N^+$ the repeat threshold, $B \in \N$ the maximum number of steps allowed in block simulation and $N \in \N$ the maximum number of nodes we are willing to visit. Then, \textsc{decider-RepWL}($\mathcal{M}$, $l$, $T$, $B$, $N$) terminates and its result is correct -- see Algorithm~\ref{alg:RepWL}: if it returns \NONHALT then $\mathcal{M}$ does not halt from the all-$0$ tape.
\end{theorem}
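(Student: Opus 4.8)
The plan is to recognise this statement as another instance of the Closed Tape Language (CTL) framework of Section~\ref{sec:deciders-overview}, handled much as Theorem~\ref{th:ngramcps} was for NGramCPS. I would take $C$ to be the set of ordinary configurations that are \emph{matched} by some regex configuration occurring as a node of the RepWL graph the decider builds, where a factor $(w)^e$ with $e\in\{1,\dots,T-1\}$ expands to exactly $e$ copies of the block $w$ and a factor $(w)^{T+}$ expands to any number $\geq T$ of copies of $w$. Termination I would establish first, as it is independent of the behaviour of $\mathcal{M}$; then, assuming the decider returns \NONHALT, I would verify CTL-style invariants for $C$ and conclude nonhalting.

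\textbf{Termination.} The decider processes a worklist of regex configurations, and every processed node is of exactly one of two kinds. A node whose head faces a constant block triggers one \emph{block simulation} of at most $B$ Turing steps (hence finite), producing finitely many successor nodes after regrouping; a node whose head faces a $(w)^{T+}$ block triggers \emph{regex branching}, producing exactly two successors. Since the decider visits at most the parameter $N$ distinct nodes (returning \UNKNOWN as soon as $N$, or the per-block step budget $B$, is exceeded), only finitely many nodes are processed and each in finite time, so the decider halts. The caps $N$ and $B$ are genuinely needed, because the set of reachable regex configurations may be infinite: the finite tape description can grow without bound.

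\textbf{Correctness.} Suppose the decider returns \NONHALT. I would check three facts. \emph{(i)} The graph is seeded with $\texttt{0}^\infty \, \texttt{A>} \, \texttt{0}^\infty$, which matches the all-\szero initial configuration, so the start configuration is in $C$. \emph{(ii)} A \NONHALT verdict means every node was closed successfully; in particular, for each constant-block node the block simulation left the block within $B$ steps \emph{without} encountering an undefined transition, since otherwise the decider would have flagged a reachable halt instead of returning \NONHALT. As block simulation is the only place a transition is executed, neither any configuration in $C$ nor any intermediate configuration produced during block simulation is halting. \emph{(iii)} \textbf{Closure}, the crux: at a regex-branching node the set identity $(w)^{T+} = (w)^{T} \cup (w)^{(T+1)+}$ (i.e. the two branches of Section~\ref{sec:RepWL}, each of which also exposes a leading constant block $(w)^1$) shows that every configuration matched by the parent is matched by one of its two children, so nothing leaves $C$ there; at a constant-block node the decisive point is that the head cannot read the tape outside the fully-specified block until it exits, so the single trajectory the decider simulated is valid for \emph{every} configuration the node matches, whatever the unspecified far tape is, and after at most $B$ steps the resulting concrete configuration is matched by the regrouped child. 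Here I would verify that regrouping $\geq T$ contiguous identical blocks into one $(w)^{T+}$ factor only \emph{enlarges} the matched set, so the genuine successor --- which carries a specific count $\geq T$ --- stays matched.

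The one point needing care is that RepWL graph edges are \emph{multi-step} block simulations rather than single Turing steps, so $C$ as defined is not literally closed under one step. I would resolve this by combining (ii) and (iii) into a single invariant: from the seed configuration the trajectory of $\mathcal{M}$ decomposes into consecutive block-simulation segments, each starting and ending at a node-matched configuration (regex-branch re-representations leave the head fixed and only relabel the matching node), and by (ii) no segment ever meets an undefined transition. Induction then yields an infinite trajectory, so $\mathcal{M}$ does not halt from the all-\szero tape --- the desired CTL conclusion. I expect the block-simulation closure in (iii) to be the main obstacle: rigorously justifying that the far tape is irrelevant during a constant-block simulation and that post-simulation regrouping is a sound over-approximation is exactly the bookkeeping that \CoqBB discharges inside \texttt{Lemma RepWL\_ES\_decider\_spec}.
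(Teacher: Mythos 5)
Your proposal is correct and follows essentially the same route as the paper's proof: termination from the $B$ and $N$ caps, followed by the CTL argument that the union of the nodes' regex languages contains the initial configuration, is closed under the machine's evolution, and contains no halting configuration. The paper's own proof is considerably terser --- it simply asserts that the reached configurations form a subset of that union --- whereas you explicitly discharge the closure obligations (branch coverage via $(w)^{T+} = (w)^T \cup (w)^{(T+1)+}$, far-tape irrelevance during block simulation, soundness of regrouping, and the multi-step-edge issue), which is exactly the bookkeeping the Coq development carries out.
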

\begin{proof}
    The algorithm terminates thanks to parameters $B$ and $N$. For a machine $\mathcal{M}$, the algorithm returns $\NONHALT$ (Algorithm~l.\ref{alg:RepWL:closed}) iff the RepWL graph of $\mathcal{M}$ contains less than $N$ nodes (\ie is closed), and contains no halting configuration (Algorithm~l.\ref{alg:RepWL:fail}). Since the set of configurations reached by the machine is a subset of the regular language consisting of the union of each node's regex configuration, which includes no halting configuration, we get that the machine cannot halt from the all-0 tape. This is a CTL argument, see Section~\ref{sec:deciders-overview}.
\end{proof}

\begin{algorithm}
    \caption{{\sc decider-RepWL}}\label{alg:RepWL}

    \begin{algorithmic}[1]
        \State{\textbf{Input:} A Turing machine $\mathcal{M}$, block length parameter $l>0$, repeat threshold $T>0$, maximum number of steps allowed in block simulation $B \in \mathbb{N}$, maximum number of distinct nodes we are willing to visit $N\in \mathbb{N}$.}
        \State{\textbf{Output:} \NONHALT if the decider detects that the machine does not halt and \UNKNOWN otherwise.}
        \State
        \State $\texttt{to\_visit} = [\texttt{A>}]$
        \State $V = \{\}$ \Comment{Visited regex configurations}
        \State \While{$|V| < N$ \textbf{and} $|\texttt{to\_visit}| \neq 0$}
        \State $\texttt{regex\_config} = \texttt{to\_visit}.\textbf{pop}()$
        \State \If{$\texttt{regex\_config}$ is in $V$}
        \State \textbf{continue}
        \EndIf
        \State
        \State Insert $\texttt{regex\_config}$ in $V$
        \State \If{head is facing a constant block}
        \State $\texttt{new\_regex\_config} = \texttt{regex\_config}.\textbf{block\_simulation}(B)$
        \If{\texttt{new\_regex\_config} has halted (\ie undefined transition was met) \textbf{or} \\ $\quad \quad \quad \;\,\,\,$ limit $B$ was exceeded during block simulation}
        \State \Return \UNKNOWN \label{alg:RepWL:fail}
        \EndIf
        \State $\texttt{to\_visit}.\textbf{append}(\texttt{new\_regex\_config})$
        \Else \Comment{Head is facing a block with a $+$}
        \State $\texttt{regex\_config\_1}, \; \texttt{regex\_config\_2} = \texttt{regex\_config}.\textbf{regex\_branching}(M)$
        \State $\texttt{to\_visit}.\textbf{append}(\texttt{regex\_config\_1})$
        \State $\texttt{to\_visit}.\textbf{append}(\texttt{regex\_config\_2})$
        \EndIf
        \EndWhile

        \State \If{$|V| < N$}
        \State \Return{\NONHALT}\label{alg:RepWL:closed}
        \Else
        \State \Return{\UNKNOWN}
        \EndIf
    \end{algorithmic}
\end{algorithm}

\subsubsection{Implementations and results}\label{sec:RepWL:results}

\CoqBB implements RepWL (Algorithm~\ref{alg:RepWL}), see function \texttt{RepWL\_ES\_decider}. In the $S(5)$ pipeline (Table~\ref{tab:pipelineBB5}), contrarily to previously presented deciders, RepWL is not applied in bulk using generic parameters. Instead, the ${6,577}$ machines it decides are hardcoded in the proof together with the specific $l$ and $T$ parameters that decide them, see file \texttt{Decider\_RepWL\_Hardcoded\_Parameters.v}. These parameters were found using a grid search in C++. Block length varies between 1 and 38, and repeat threshold between 2 and 4. For all these machines, maximum block simulation parameters and maximum number of graph nodes parameters are set to 320 and ${150{,}001}$, respectively.

Figure~\ref{fig:repWLThree} gives space-time diagrams for machines with RepWL graphs with 42 (left), ${845}$ (center) and $143{,}181$ nodes which are respectively the minimum, average and maximum sizes of RepWL graphs constructed by \CoqBB for solving $S(5)$.

The machines on the left and on the right fit in the zoological category of \textit{Bouncers}, see Section~\ref{sec:zoo}. We developed a dedicated decider for solving Bouncers, but it was not used in \CoqBB \cite{bbchallenge_part1}.

In the $S(4)$ pipeline (Table~\ref{tab:pipelineBB4}), RepWL is only used to decide two machines using parameters $l=4$ and $T=3$. These 4-state machines are given in Figure~\ref{fig:RepWLBB4}, and they respectively have $3{,}130$ and $3{,}076$ nodes in their RepWL graph.

\begin{figure}
    \centering
    \includegraphics[scale=0.48]{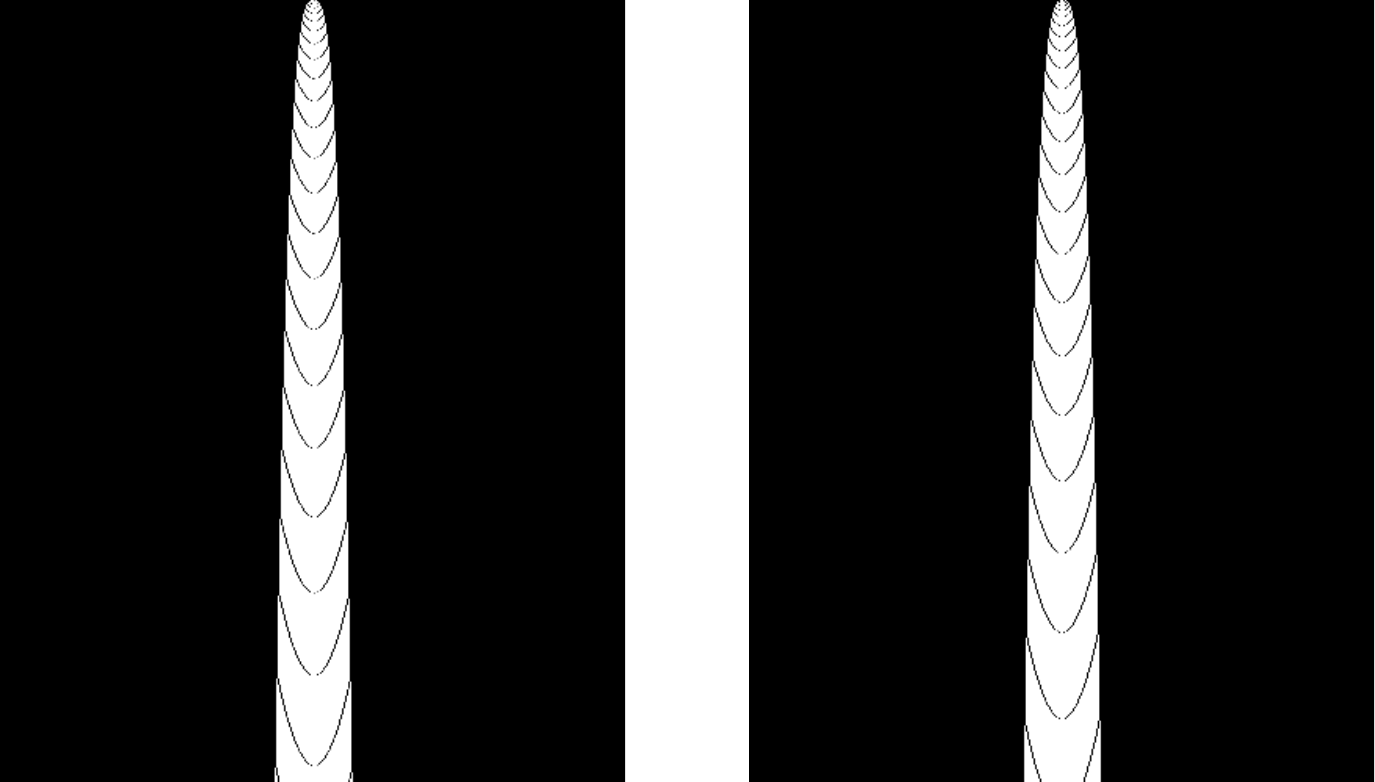}
    \caption{10,000-step space-time diagrams of the two 4-state machines decided by the Repeated Word List (RepWL) decider in the $S(4)$ pipeline, Table~\ref{tab:pipelineBB4}. Left: {\small \tm{1RB1LA_1LA0RC_1LD1RC_---0LA}}. Right: {\small \tm{1RB0RB_1LC1RB_---0LD_1RA1LD}}. The RepWL graphs of these machines have $3{,}130$ and $3{,}076$ nodes, respectively. Both machines are decided using $l=4$ and $T=3$.
    }\label{fig:RepWLBB4}
\end{figure}

\paragraph{Other implementations.} At the time of this writing, RepWL also has a Haskell and a Python implementation \cite{RepWL_haskell,RepWL_python}.

\newpage
\subsection{Finite Automata Reduction (FAR)}\label{sec:FAR}

\usetikzlibrary {automata, positioning}
\begin{figure}[h]
    \centering
    \begin{subfigure}[t]{0.45\textwidth}
        \centering
        \includegraphics[width=0.85\textwidth]{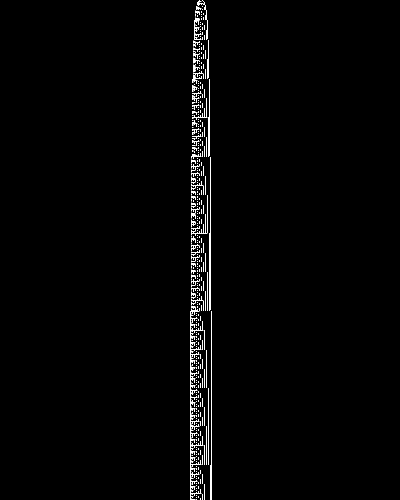}
        \caption*{}
    \end{subfigure}%
    \hfill
    \begin{subfigure}[t]{0.5\textwidth}
        \scalebox{0.8}{
            \begin{tikzpicture}[shorten >=1pt]
                \node[state,initial above] (0) at (-2, 2) {0};
                \node[state]           (1) at (-2, -2) {1};
                \node[state,accepting] (H) at (2, 4) {$\bot$};
                \node[state]           (0A) at (0, 2) {0A};
                \node[state,accepting] (0D) at (4, 2) {0D};
                \node[state]           (0B) at (2, 0) {0B};
                \node[state]           (0C) at (4, 0) {0C};
                \node[state]           (1C) at (2, -2.5) {1C};
                \node[state]           (1B) at (0, -4) {1B};
                \node[state]           (1A) at (4, -4) {1A};
                \node[state,accepting] (1D) at (5.5, -2) {1D};

                \path[->]  (0)  edge [loop right]       node {$0$} (0)
                edge                    node [right] {$1$} (1)
                (1)  edge [bend left=15]     node [above left] {$0|1$} (0)
                (0A) edge                    node [right] {$0$} (1B)
                edge                    node [above,rotate=60] {$1$} (H)
                (0C) edge                    node [above] {$0$} (0B)
                edge                    node [right] {$1$} (1A)
                (0D) edge                    node [above,rotate=300] {$0$} (H)
                edge                    node [above] {$1$} (0A)
                edge [bend left]        node [left] {$1$} (1A)
                (1A) edge                    node [above] {$1$} (1B)
                (1A) edge [bend right=15]    node [above,rotate=300] {$0|1$} (0B)
                (0B) edge [bend right=5]     node [above,rotate=300] {$0|1$} (1A)
                (1B) edge                    node [above right,rotate=60] {$0$} (0B)
                edge [loop left]        node {$0$} (1B)
                edge [bend left]        node [right] {$1$} (0A)
                (1C) edge                    node [above right,rotate=270] {$0|1$} (0B)
                (1D) edge [bend right=48]    node [below,rotate=300] {$0|1$} (H)
                (H)  edge [loop above]       node {$0|1$} (H)
                (0)  edge [dotted,bend left] node {\stateA} (0A)
                edge [dotted]           node [left] {\textcolor{colorB}{B}} (0B)
                edge [dotted]           node {\textcolor{colorC}{C}} (0C)
                edge [dotted,bend left] node [right] {\textcolor{colorD}{D}} (0D)
                (1)  edge [dotted]           node {\textcolor{colorB}{B}} (1B)
                edge [dotted]           node {\textcolor{colorA}{A}} (1A)
                edge [dotted]           node [left] {\textcolor{colorC}{C}} (1C)
                edge [dotted]           node [left] {\textcolor{colorD}{D}} (1D);
            \end{tikzpicture}
        }
        \label{fig:far_nfa}
    \end{subfigure}
    \caption[Short caption]{{\small Left: 20,000-step space-time diagram of 4-state machine \tm{1RB0LD_1LC1RA_0RB0LC_---1LA} -- we use a 4-state machine to have a small FAR Nondeterministic Finite Automaton (NFA). Right: NFA that satisfies the FAR conditions (Theorem~\ref{far-main-theorem}, with accepted steady state-set $\{\bot\}$) for this machine and hence is a certificate that the machine does not halt. This NFA accepts at least all eventually-halting configurations\footnotemark ~of the machine (configurations are represented as words, see Section~\ref{sec:FAR:theorem}); because it rejects the initial all-0 configuration (e.g. word-encoded as \texttt{A0}, or just \texttt{A}, not leading to an accept state), we know the machine does not halt.}}
    \label{fig:finite-automata-reduction}
\end{figure}

\footnotetext[\value{footnote}]{\label{fn:halting-caption}With finitely many \texttt{1}s, see Section~\ref{sec:FAR:theorem}.}
\vspace{-1em}
\subsubsection{Overview}

\newcommand{\M}{M}
\newcommand{\T}{^{T}}

Finite Automata Reduction (FAR) is a \textit{co-CTL} technique, \ie it is dual to the Closed Tape Language (CTL) framework given in Section~\ref{sec:deciders-overview}: for a given Turing machine, we are looking for a regular language that contains the set of the machine's eventually-halting configurations and, provided that the all-0 configuration is not in the regular language, we know that the machine does not halt.

The specificity of FAR is to restrict regular languages to a class of Nondeterministic Finite Automata (NFA) -- those satisfying Theorem~\ref{far-main-theorem} -- for which it is computationally simple to verify that they have the co-CTL properties: (i) reject the all-0 initial configuration, (ii) closed under Turing machine transitions, (iii) accept all eventually-halting configurations.

In \CoqBB, FAR is only used as a \textbf{verifier} meaning that specific Turing machines together with their FAR NFAs are directly hardcoded in the proof (in file \texttt{Verifier\_FAR\_Hardcoded\_Certificates.v}) and then verified using Theorem~\ref{far-main-theorem} -- see Section~\ref{sec:FAR:results} for results. FAR was originally developed as a fully-fledged decider -- \ie the verifier together with search algorithms for NFAs \cite{FAR,bbchallenge_part1}.

Here, we only present the verifier part of FAR (Theorem~\ref{far-main-theorem}) while we present the decider and its variations in \cite{bbchallenge_part1}. Figure~\ref{fig:finite-automata-reduction} (right) gives a FAR NFA (\ie satisfying\footnote{Using accepted steady state-set $\{\bot\}$, see Section~\ref{sec:FAR:theorem}.} Theorem~\ref{far-main-theorem}) for machine \tm{1RB0LD_1LC1RA_0RB0LC_---1LA}: the NFA accepts at least all the eventually-halting configurations\footref{fn:halting-caption} of the machine and rejects the initial all-0 configuration (\ie \texttt{A0} does not lead to an accept state), giving a certificate that the machine does not halt.

\subsubsection{FAR theorem}\label{sec:FAR:theorem}

In the following, we limit ourselves to Turing machines configurations with finite support, \ie configurations with finitely many \texttt{1}s (or, more generally, finitely many non-\texttt{0} symbols) and, when we write \textit{configuration}, we mean, \textit{configuration with finite support}.

A Turing machine configuration $c$ is represented as a finite word, called a \textit{word-representation} of $c$, by concatenating the tape content (from left to right, making sure to include all the \texttt{1}s) and adding the state (in our case, a letter from A to E) just before the position of the head, which is the same directional head notation used in Section~\ref{sec:RepWL}. For instance, two word-representations of the configuration $\texttt{0}^\infty \; \texttt{A> 0011} \; \texttt{0}^\infty$, are $\hat{c} = \texttt{A0011}$ and $\hat{c}' = \texttt{000A00110000}$. Similarly, the initial all-0 configuration can be encoded as \texttt{A0} or even just \texttt{A}. Word-representations of the same configuration will only differ in the number of leading and trailing \texttt{0}s that they have.

Then, a co-CTL regular language of word-represented configurations $\mathcal{L}$ for a Turing machine $\mathcal{M}$ satisfies:
\begin{align}
    u \in \mathcal{L}                               & \iff 0u \in \mathcal{L}           &  & \text{(leading zeros ignored)}
    \label{eq:lzignore}
    \\
    u \in \mathcal{L}                               & \iff u0 \in \mathcal{L}           &  & \text{(trailing zeros ignored)}
    \label{eq:tzignore}
    \\
    c\TMstep\bot                                    & \implies \hat{c} \in \mathcal{L}  &  & \text{(recognising halt, base case)} \nonumber \\
    (c_1\TMstep c_2)\land \hat{c}_2 \in \mathcal{L} & \implies\hat{c}_1 \in \mathcal{L} &  & \text{(recognising halt, induction)} \nonumber
\end{align}

With $c, c_1, c_2$ configurations of $\mathcal{M}$ (with finite support) and $\hat{c}, \hat{c}_1, \hat{c}_2$ any of their word-representations.

Given how word-representations are defined, the last two above conditions become:
\begin{align}
    \forall u,z\in\balphabet^*\!: \; ufrz \in \mathcal{L},\;                                                             & \text{if $\delta(f,r)$ is undefined (\ie halting)} \label{eq:h0}
    \\
    \forall u,z\in\balphabet^*,\,\forall b \in \balphabet\!: utbwz \in \mathcal{L} \implies ubfrz \in \mathcal{L},\;     & \text{if $\delta(f,r) = (w,\text{L},t)$} \label{eq:hnl}
    \\
    \forall u,z\in\balphabet^*,\,\forall b \in \balphabet\!: u w t z \in \mathcal{L} \implies u f r z \in \mathcal{L},\; & \text{if $\delta(f,r) = (w,\text{R},t)$} \label{eq:hnr}
\end{align}

With $f,t \in \{\stateA,\stateB,\stateC,\stateD,\stateE\}$ the ``from'' and ``to'' states in a transition, $r,w,b \in \balphabet$ the bit ``read'', the bit ``written'', and just a bit, and $\delta$ the transition table (see Section~\ref{sec:TMs}) of $\mathcal{M}$.

We now transform Conditions~\eqref{eq:lzignore}–\eqref{eq:hnr} into, sometimes stronger, conditions on the structure of NFAs -- using the usual linear-algebraic description of NFAs, which we first recall. Let $\mathbf{2}$ denote the Boolean semiring $\{0,1\}$ with operations $+$ and $\cdot$ respectively implemented by $\operatorname{OR}$ and $\operatorname{AND}$ \cite{CUNINGHAMEGREEN1991251}. Let $\M_{m,n}$ be the set of matrices with $m$ rows and $n$ columns over $\mathbf{2}$. We may define a Nondeterministic Finite Automaton (NFA) with $n$ states and alphabet $\mathcal{A}$ as a tuple $(q_0, \{T_\gamma\}_{\gamma \in \mathcal{A}}, a)$ where $q_0 \in \M_{1,n}$ and $a \in \M_{1,n}$ respectively represent the initial states and accepting states of the NFA. (i.e. if the $i^\text{th}$ state of the NFA is an initial state then the $i^\text{th}$ entry of $q_0$ is set to 1 and the rest are 0, and the $i^\text{th}$ entry of $a$ is set to 1 if and only if the $i^\text{th}$ state of the NFA is accepting), and where transitions are matrices $T_\gamma\in \M_{n,n}$ for each $\gamma\in\mathcal{A}$ (i.e. the entry $(i,j)$ of matrix $T_\gamma$ is set to 1 iff the NFA transitions from state $i$ to state $j$ when reading $\gamma$). Furthermore, for any word $u=\gamma_1\dots\gamma_\ell \in \mathcal{A}^*$, let $T_u = T_{\gamma_1} T_{\gamma_2} \dots T_{\gamma_\ell}$ be the state transformation resulting from reading word $u$ (note, $T_\epsilon = I$). A word $u=\gamma_1\dots\gamma_\ell \in \mathcal{A}^*$ is accepted by the NFA iff there exists a path from an initial state to an accepting state that is labelled by the symbols of $u$, which algebraically translates to $q_0 T_u a\T = 1$ with $a\T \in \M_{n,1}$ the transposition of $a$.

Using this algebraic framework\footnote{In the following, we limit ourselves to the binary tape alphabet $\balphabet$, but the results generalise transparently to arbitrary alphabets $\alphabet$.}, Conditions~\eqref{eq:lzignore}~and~\eqref{eq:tzignore} are implied by the following stronger conditions on transition matrix $T_0 \in \M_{n,n}$:
\begin{align}
    q_0 T_0 & = q_0
    \label{far-cond-leading-0}
    \\
    T_0 a\T & = a\T
    \label{far-cond-trailing-0}
\end{align}

Indeed, Condition~\eqref{far-cond-leading-0} transparently ignores leading zeros, Condition~\eqref{far-cond-trailing-0} means that for all accepting states of the NFA, reading a \szero is possible and leads to an accepting state since $T_0 a\T$ describes the set of NFA states that reach the set of accepting states $a$ after reading a \szero.

Then, Conditions~\eqref{eq:h0}--\eqref{eq:hnr} algebraically translate to:
{\small
\begin{align*}
    \forall u,z\in\balphabet^*\!: \; q_0 T_u T_f T_r T_z a\T = 1, \;                                                                             & \text{if $\delta(f,r)$ is undefined (\ie halting)}
    \\
    \forall u,z\in\balphabet^*,\,\forall b \in \balphabet\!: q_0 T_{u} T_t T_b T_w T_{z} a\T = 1 \implies q_0 T_{u} T_b T_f T_r T_{z} a\T = 1,\; & \text{if $\delta(f,r) = (w,\text{L},t)$}
    \\
    \forall u,z\in\balphabet^*,\,\forall b \in \balphabet\!: q_0 T_{u} T_w T_t T_{z} a\T = 1 \implies q_0 T_{u} T_f T_r T_{z} a\T = 1,\;         & \text{if $\delta(f,r) = (w,\text{R},t)$}
\end{align*}
}

These conditions are unwieldy. We seek stronger (thus still sufficient) conditions which are simpler:

\begin{itemize}

    \item For machine transitions going left or right, simply require $T_t T_b T_w\preceq T_b T_f T_r$ and $T_w T_t\preceq T_f T_r$, respectively, where $\preceq$ is the following relation on same-size matrices: $M\preceq M'$ if $M_{ij}\le M'_{ij}$ element-wise, that is, if the second matrix has at least the same 1-entries as the first matrix.

    \item To simplify the condition for halting machine transitions: define an \emph{accepted steady state-set} $s$ to be a row vector such that $sa\T = 1$, $s T_0\succeq s$, and $s T_1\succeq s$. Given such $s$, we have that: $\forall q\in\M_{1,n}\; q \succeq s\implies \forall z\in\balphabet^*\!: qT_{z}a\T = 1$. Assuming that such $s$ exists we can simply require: $\forall u\in\balphabet^*\!: q_0T_u T_f T_r \succeq s$ which is stronger than $\forall u,z\in\balphabet^*\!: \; q_0 T_u T_f T_r T_z a\T = 1$ where $\delta(f,r)$ is undefined.

\end{itemize}

Combining the above, we get FAR:

\begin{theorem}[\CoqBB: \texttt{Lemma dfa\_nfa\_verifier\_spec}\footnote{\CoqBB's lemma is slightly different, as it builds the NFA satisfying this theorem using a given ``precursor'' Deterministic Finite Automaton (DFA) -- as initially developed in \cite{FAR} -- see Section~\ref{sec:FAR:results}.}]
    \label{far-main-theorem}
    Machine $\mathcal{M}$, with transition table $\delta$ (see Section~\ref{sec:TMs}), does not halt from the initial all-0 configuration if there is a Nondeterministic Finite Automaton $(q_0, \{T_\gamma\}, a)$ and row vector $s$ satisfying the below:
    \begin{align}
        \label{far-cond-first}
        q_0 T_0                                    & = q_0
                                                   &                     & \text{(leading zeros ignored)}
        \tag{\ref{far-cond-leading-0}}
        \\
        T_0a\T                                     & = a\T
                                                   &                     & \text{(trailing zeros ignored)}
        \tag{\ref{far-cond-trailing-0}}
        \\
        sa\T                                       & = 1
                                                   &                     & \text{($s$ is accepted)}
        \label{far-cond-ass-accepted}
        \\
        sT_0,sT_1                                  & \succeq s
                                                   &                     & \text{($s$ is a steady state)}
        \label{far-cond-ass-steady}
        \\
        \forall u\in\balphabet^*\!: q_0T_u T_f T_r & \succeq s
                                                   &                     & \text{if $\delta(f,r)$ is undefined (\ie halting)}
        \label{far-cond-halt}
        \\
        \forall b\in\balphabet\!: T_b T_f T_r      & \succeq T_t T_b T_w
                                                   &                     & \text{if $\delta(f,r) = (w,\text{L},t)$}
        \label{far-cond-left}
        \\
        T_f T_r                                    & \succeq T_w T_t
                                                   &                     & \text{if $\delta(f,r) = (w,\text{R},t)$}
        \label{far-cond-last}
        \\
        q_0 T_A a\T                                & = 0
                                                   &                     & \text{(initial configuration rejected)}
        \label{far-cond-reject-start}
    \end{align}
\end{theorem}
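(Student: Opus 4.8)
The plan is to run the co-CTL principle of Section~\ref{sec:deciders-overview} in its dual form: I exhibit the regular language $\mathcal{L} = \{\, u : q_0 T_u a\T = 1 \,\}$ accepted by the given NFA and show that the eight algebraic hypotheses force $\mathcal{L}$ to be an over-approximation of the set of eventually-halting configurations that rejects the initial configuration. Concretely, I will verify that the hypotheses imply the co-CTL conditions \eqref{eq:lzignore}–\eqref{eq:hnr}, namely that $\mathcal{L}$ is (i) insensitive to leading and trailing zeros, so that ``$\hat c \in \mathcal{L}$'' is well defined across the many word-representations of a single configuration; (ii) contains every word-representation of every configuration halting in one step; and (iii) is closed under taking $\TMstep$-predecessors. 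Together with \eqref{far-cond-reject-start}, rejection of the initial configuration, these give the result.

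First I would record the single structural fact used throughout: Boolean matrix multiplication is monotone for $\preceq$, i.e. $M \preceq M'$ implies $AMB \preceq AM'B$ for conformable $A,B$, since every product entry is an $\operatorname{OR}$ of $\operatorname{AND}$s. With this, Conditions~\eqref{far-cond-leading-0} and \eqref{far-cond-trailing-0} yield (i) at once, as $q_0 T_{0u} a\T = q_0 T_0 T_u a\T = q_0 T_u a\T$ and symmetrically with $T_0 a\T = a\T$ on the right, which are \eqref{eq:lzignore} and \eqref{eq:tzignore}. The two transition closures are then mechanical. For a right move $\delta(f,r) = (w,\mathrm{R},t)$, monotonicity applied to \eqref{far-cond-last} gives $q_0 T_u T_f T_r T_z a\T \ge q_0 T_u T_w T_t T_z a\T$, so $uwtz \in \mathcal{L} \Rightarrow ufrz \in \mathcal{L}$, which is \eqref{eq:hnr}; a left move uses \eqref{far-cond-left} with the carried symbol $b$ in exactly the same way, turning $T_b T_f T_r \succeq T_t T_b T_w$ into $utbwz \in \mathcal{L} \Rightarrow ubfrz \in \mathcal{L}$, which is \eqref{eq:hnl}.

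The heart of the argument is the halting base case \eqref{eq:h0}, which is universally quantified over the right context $z$: I must produce $q_0 T_u T_f T_r T_z a\T = 1$ for \emph{all} $z$ whenever $\delta(f,r)$ is undefined. This is where the steady-state vector $s$ earns its keep, and I would isolate it as a lemma: any $q \succeq s$ is ``$z$-robustly accepting'', i.e. $q T_z a\T = 1$ for every $z \in \balphabet^*$. The proof is an induction on $|z|$; the base case is $q a\T \ge s a\T = 1$ from \eqref{far-cond-ass-accepted} and monotonicity, and the inductive step propagates the hypothesis via $q T_\gamma \succeq s T_\gamma \succeq s$, using $sT_0 \succeq s$ and $sT_1 \succeq s$ from \eqref{far-cond-ass-steady}. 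Instantiating the lemma with $q = q_0 T_u T_f T_r$, which satisfies $q \succeq s$ by \eqref{far-cond-halt}, delivers \eqref{eq:h0}.

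Finally I would close the loop. By (ii) and (iii), an induction on the number of steps remaining to halt shows that every eventually-halting configuration lies in $\mathcal{L}$. The initial all-$0$ configuration is represented by the word \texttt{A} (equivalently \texttt{A0}, by (i)), and Condition~\eqref{far-cond-reject-start}, $q_0 T_A a\T = 0$, states that this word is rejected; hence the initial configuration is not eventually halting, so $\mathcal{M}$ does not halt from the all-$0$ tape. I expect the main obstacle to be precisely the steady-state lemma and its interaction with the unbounded suffix $z$ in the halting condition --- this is the only place where a genuine induction (rather than a one-line monotonicity calculation) is needed, and it is the reason the auxiliary vector $s$ must be carried through the hypotheses at all.
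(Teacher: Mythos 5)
Your proposal is correct and follows essentially the same route as the paper: the paper's two-sentence proof defers to the derivation immediately preceding the theorem statement, which is exactly the chain you reconstruct --- reduce the hypotheses to the co-CTL conditions \eqref{eq:lzignore}--\eqref{eq:hnr} via monotonicity of Boolean matrix products, handle the universally quantified suffix $z$ in the halting case with the accepted-steady-state lemma, and conclude from \eqref{far-cond-reject-start}. You have merely made explicit the two inductions (on $|z|$ and on the number of steps to halt) that the paper leaves implicit.
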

\begin{proof}
    Conditions \eqref{far-cond-leading-0}--\eqref{far-cond-last} ensure that the NFA's language includes at least all eventually halting configurations of $\mathcal{M}$, see above. Condition~\eqref{far-cond-reject-start} ensures that the initial all-0 configuration of the machine is rejected, hence not eventually halting. Hence, if conditions \eqref{far-cond-leading-0}--\eqref{far-cond-reject-start} are satisfied, we can conclude that $\mathcal{M}$ does not halt from the initial all-0 configuration.
\end{proof}

\paragraph{Verifier.} Theorem~\ref{far-main-theorem} has the nice property of being easy to verify: given a Turing machine, an NFA and a vector $s$, the task of verifying that equations \eqref{far-cond-first}--\eqref{far-cond-reject-start} hold and thus that the machine does not halt, is computationally simple.\footnote{Note that although equation~\eqref{far-cond-halt} has a $\forall u\in\balphabet^*$ quantification, the set of NFA states reachable after reading an arbitrary $u \in \balphabet^*$ is computable, and we just have to consider one instance of equation~\eqref{far-cond-halt} replacing $q_0 T_u$ per such state.} For instance, it is easy to check that the NFA given in Figure~\ref{fig:finite-automata-reduction} satisfies Theorem~\ref{far-main-theorem}, using accepted steady state-set $\{\bot\}$, for machine \tm{1RB0LD_1LC1RA_0RB0LC_---1LA} and hence, the NFA provides a certificate that the machine does not halt.

\subsubsection{Implementations and results}\label{sec:FAR:results}

\CoqBB implements Theorem~\ref{far-main-theorem} in the special case where the FAR NFA is computed from a \textit{precursor} Deterministic Finite State Automaton, as described in \cite{bbchallenge_part1} (``direct FAR algorithm''). Certificates, consisting of such DFAs are hardcoded in the proof for 23 machines (in file \texttt{Verifier\_FAR\_Hardcoded\_Certificates.v}) and then verified using \texttt{dfa\_nfa\_verifier} (see file \texttt{Verifier\_FAR.v}).

These certificates were either found using extensive compute (\eg several weeks of searching DFAs essentially by brute force) or translated from other, undocumented, regular co-CTL methods (see Section~\ref{sec:deciders-overview}); indeed, in \cite{bbchallenge_part1} we show that FAR is an \textit{universal} regular co-CTL method: any regular co-CTL proof can be shoehorned into the framework of Theorem~\ref{far-main-theorem}.

\paragraph{Other implementations.} FAR has several other implementations: in Rust, C++ and Python~\cite{FAR,FAR_Tony,FAR_cosmo}.

\newpage
\subsection{Weighted FAR (WFAR)}\label{sec:WFAR}

\usetikzlibrary{automata, positioning, arrows.meta}
\begin{figure}[h!]
    \centering
    \begin{minipage}[t]{0.23\textwidth}
        \raggedright
        (a) Turing machine \\
        \centering
        \vspace{0.6em}
        \begin{tabular}{ccc}
            \toprule
                    & \textbf{0} & \textbf{1} \\
            \midrule
            \stateA & 1R\stateB  & ---        \\
            \stateB & 0R\stateC  & 1L\stateC  \\
            \stateC & 1R\stateD  & 1R\stateC  \\
            \stateD & 1L\stateE  & 1L\stateD  \\
            \stateE & 0R\stateA  & 0L\stateE  \\
            \bottomrule
        \end{tabular}

        \vspace{0.8em}
        \raggedright
        (a') Space-time diagram \\
        \vspace{0.3em}
        \centering
        \includegraphics[width=1.1\linewidth]{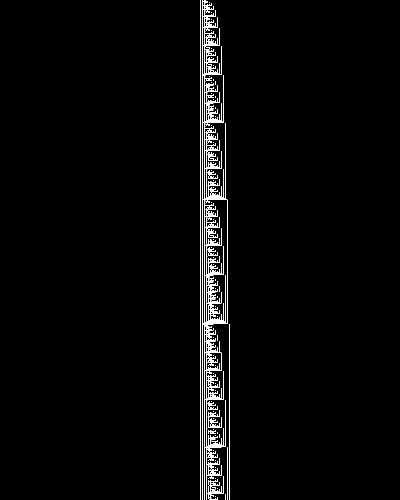}

    \end{minipage}
    \hfill
    \vrule
    \hfill
    \begin{minipage}[t]{0.71\textwidth}

        \begin{minipage}[t]{1\linewidth}

            \begin{minipage}[t]{0.49\textwidth}
                \raggedright
                (b) Left Weighted Automaton \\
                \vspace{0.5em}
                \centering
                \begin{tikzpicture}[->, >=Stealth, auto, node distance=1.8cm, every node/.style={scale=1}]
                    \tikzset{
                        state/.style={
                                circle, draw, minimum size=0.9cm, inner sep=1pt
                            }
                    }

                    \node[state] (0) {$p_0$};
                    \node[state] (2) [right of=0] {$p_2$};
                    \node[state] (3) [right of=2] {$p_3$};
                    \node[state] (1) [above of=3] {$p_1$};

                    \draw[->] (-1,0) -- (0);

                    \draw[->, black] (0) edge[loop above] node{\szero} (0);
                    \draw[->, black] (0) edge node{\sone} (2);

                    \draw[->, black] (2) edge[loop above] node{\sone} (2);
                    \draw[->, black, bend left=20] (3) to node{\sone} (2);
                    \draw[->, green!70!black, thick, bend left=20] (2) to node{\szero} (3);

                    \draw[->, black] (3) edge node{\szero} (1);
                    \draw[->, black] (1) edge[loop left] node{\szero, \sone} (1);

                \end{tikzpicture}
            \end{minipage}
            \hfill
            \begin{minipage}[t]{0.49\textwidth}
                \raggedright
                (c) Right Weighted Automaton \\
                \vspace{0.3em}
                \centering
                \begin{tikzpicture}[->, >=Stealth, auto, node distance=2cm, every node/.style={scale=1}]
                    \tikzset{
                        state/.style={
                                circle, draw, minimum size=1cm, inner sep=1pt
                            }
                    }

                    \node[state] (0) at (0,0) {$q_0$};
                    \node[state] (2) at (3,0) {
                        $q_1$};

                    \draw[->] (0) edge[loop above] node{\szero} (0);
                    \draw[->] (2) edge[loop above] node{\sone} (2);

                    \draw[->] (0) edge[bend left=10] node{\sone} (2);
                    \draw[->, red, bend left=20, thick] (2) to node[below]{\szero} (0);

                    \draw[->] (-1,0) -- (0);

                    \node[draw, below=0.5cm of 2, inner sep=3pt, rounded corners] (legend) {
                        \scalebox{0.7}{
                            \begin{tabular}{@{}rl@{}}
                                \tikz[baseline=-0.5ex]\draw[->, black, thick] (0,0) -- +(0.6,0);          & \;Weight 0    \\
                                \tikz[baseline=-0.5ex]\draw[->, green!70!black, thick] (0,0) -- +(0.6,0); & \;Weight 1    \\
                                \tikz[baseline=-0.5ex]\draw[->, red, thick] (0,0) -- +(0.6,0);            & \;Weight $-1$
                            \end{tabular}
                        }
                    };
                \end{tikzpicture}
            \end{minipage}

        \end{minipage}

        \vspace{1em}

        \raggedright
        (d) Example: configuration is accepted, hence nonhalting \\
        \vspace{0.3em}
        \centering

        \newcommand{\underarrowleft}[1]{%
            \tikz[baseline=(X.base)]{
                \node (X) {$#1$};
                \draw[->, thick] ([yshift=-1.3ex]X.east) -- ([yshift=-1.3ex]X.west);
            }
        }

        \newcommand{\underarrowright}[1]{%
            \tikz[baseline=(X.base)]{
                \node (X) {$#1$};
                \draw[->, thick] ([yshift=-1.3ex]X.west) -- ([yshift=-1.3ex]X.east);
            }
        }

        \scalebox{0.8}{

            \begin{minipage}{\textwidth}
                \centering
                {\LARGE
                $\underarrowright{\texttt{10101}}\; \texttt{\stateC}\sone\; \underarrowleft{\texttt{01}}$ \\
                \vspace{-0.3em}
                {\small $\quad\quad$Left WA reads$\quad\quad\quad\quad\quad\quad\quad\quad$Right WA reads} \\
                $\quad  \; \;$\fcolorbox{red}{yellow!30}{$[p_2] \; \texttt{\stateC}\sone\; [q_0]$} \\
                \vspace{-2em}
                \[
                    \begin{tikzpicture}[baseline=(base)]
                        \node (base) {$\quad \quad W = W_l + W_r = 1$};

                        \coordinate (Wl) at ([xshift=5.33em]base.base west);
                        \coordinate (Wr) at ([xshift=7.85em]base.base west);

                        \node[below=1.8ex of Wl] (lval) {\normalsize 2};
                        \node[below=1.8ex of Wr] (rval) {\normalsize $-1$};

                        \draw[->, thick] ([yshift=-0.5ex]Wl) -- (lval.north);
                        \draw[->, thick] ([yshift=-0.5ex]Wr) -- (rval.north);
                    \end{tikzpicture}
                \] \\
                \vspace{-0.5em}
                {\large Configuration accepted, see (e), hence machine does not halt starting from $\texttt{10101}\; \texttt{\stateC}\sone\; \texttt{01}$, see Theorem~\ref{th:WFAR}. }
                }

            \end{minipage}
        }

        \vspace{0.5em}
        \raggedright
        (e) Accepted weighted configurations \\
        \centering

        \usetikzlibrary{positioning, shapes.multipart, fit}
        \scalebox{0.8}{
            \begin{tikzpicture}[node distance=0.2cm and 0.3cm, every node/.style={font=\normalsize}, anchor=north]

                \coordinate (topref) at (0,0);

                \node[draw, rectangle, rounded corners, inner sep=3pt] (wmin1box)
                {\begin{tabular}{l}
                        $[p_2]\; \texttt{\stateE}\szero\; [q_0]$
                    \end{tabular}};
                \node[above=0cm of wmin1box] {$W \geq$ -1};

                \node[draw, rectangle, rounded corners, inner sep=3pt, right=of wmin1box] (w0box)
                {\begin{tabular}{l}
                        $[p_0]\; \texttt{\stateA}\szero\; [q_0]$ \\
                        $[p_0]\; \texttt{\stateE}\szero\; [q_0]$ \\
                        $[p_0]\; \texttt{\stateE}\sone\; [q_0]$
                    \end{tabular}};
                \node[above=0cm of w0box] {$W = 0$};

                \node[draw, rectangle, rounded corners, inner sep=3pt, right=of w0box] (w0plusbox)
                {\begin{tabular}{l}
                        $[p_2]\; \texttt{\stateB}\szero\; [q_0]$ \\
                        $[p_2]\; \texttt{\stateE}\sone\; [q_0]$  \\
                        $[p_3]\; \texttt{\stateE}\sone\; [q_0]$
                    \end{tabular}};
                \node[above=0cm of w0plusbox] {$W\geq0$};
                \node[draw, rectangle, rounded corners, inner sep=3pt, below=0.6cm of w0box, xshift=1.4cm] (w1box)
                {\begin{tabular}{ll}
                        $[p_3]\; \texttt{\stateA}\szero\; [q_1]$                                             & $[p_2]\; \texttt{\stateC}\sone\; [q_1]$  \\
                        $[p_2]\; \texttt{\stateB}\szero\; [q_1]$                                             & $[p_3]\; \texttt{\stateC}\sone\; [q_1]$  \\
                        $[p_2]\; \texttt{\stateB}\sone\; [q_0]$                                              & $[p_2]\; \texttt{\stateD}\szero\; [q_0]$ \\
                        $[p_2]\; \texttt{\stateB}\sone\; [q_1]$                                              & $[p_2]\; \texttt{\stateD}\szero\; [q_1]$ \\
                        $[p_2]\; \texttt{\stateC}\szero\; [q_0]$                                             & $[p_2]\; \texttt{\stateE}\sone\; [q_1]$  \\
                        $[p_3]\; \texttt{\stateC}\szero\; [q_0]$                                             & $[p_3]\; \texttt{\stateE}\sone\; [q_1]$  \\
                        \rule{0pt}{3.3ex}\fcolorbox{red}{yellow!30}{$[p_2]\; \texttt{\stateC}\sone\; [q_0]$} &                                          \\
                    \end{tabular}};
                \node[above=0cm of w1box] {$W \geq 1$};

                \node[draw, rectangle, rounded corners, inner sep=6pt, below=0.7cm of wmin1box] (w2box)
                {\begin{tabular}{l}
                        $[p_3]\; \texttt{\stateA}\szero\; [q_0]$ \\
                        $[p_2]\; \texttt{\stateC}\szero\; [q_1]$ \\
                        $[p_3]\; \texttt{\stateC}\szero\; [q_1]$ \\
                        $[p_3]\; \texttt{\stateC}\sone\; [q_0]$  \\
                        $[p_2]\; \texttt{\stateD}\sone\; [q_0]$  \\
                        $[p_2]\; \texttt{\stateD}\sone\; [q_1]$  \\
                        $[p_3]\; \texttt{\stateD}\sone\; [q_1]$  \\
                    \end{tabular}};
                \node[above=0cm of w2box] {$W \geq 2$};

            \end{tikzpicture}
        }
    \end{minipage}

    \caption{{\small WFAR certificate of nonhalting for machine \tm{1RB---_0RC1LC_1RD1RC_1LE1LD_0RA0LE}: (a) transition table and 20,000-step space-time diagram, (b) left weighted automaton: processes symbols to the left of the head in the left-to-right direction, which results in a left end-state -- \eg state $p_2$ when processing \texttt{10101} -- and a left weight obtained by summing the weights of each transition -- \eg $W_l = 2$ when processing \texttt{10101} (c) right weighted automaton: processes symbols to the right of the head (excluding the symbol read by the head) in the right-to-left direction, indicated with arrow, which results in a right end-state -- \eg state $q_0$ when processing \texttt{01} right-to-left -- and a right weight -- \eg $W_r = -1$ when processing \texttt{01} right-to-left (d) example, the total weight of configuration $\texttt{10101}\; \texttt{\stateC}\sone\; \texttt{01}$ is $W=W_l+W_r = 1$, using same word-encoding of configurations as in Section~\ref{sec:FAR}, and the right and left end-states are $p_2$ and $q_0$. Weighted automaton configuration $[p_2]\; \texttt{\stateC}\texttt{1}\; [q_0]$ with $W = 1$ is in the set of accepted weighted configurations (under more general bound $W \geq 1$), see (e). Therefore we know that the machine does not halt from configuration $\texttt{10101}\; \texttt{\stateC}\sone\; \texttt{01}$, Theorem~\ref{th:WFAR}. Similarly, Turing machine configuration $\texttt{\stateA}\texttt{0}$, which results in weighted configuration $[p_0]\;\texttt{\stateA}\texttt{0}\; [q_0]$ with $W=0$ is accepted, ensuring that the machine does not halt from the all-0 initial tape, Theorem~\ref{th:WFAR}.}}\label{fig:WFAR}
\end{figure}

\subsubsection{Overview}\label{sec:WFAR:overview}

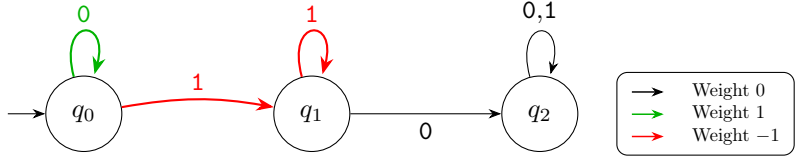
\begin{figure}
    \centering
    \begin{tikzpicture}[->, >=Stealth, auto, node distance=2cm, every node/.style={scale=1}]
        \tikzset{
            state/.style={
                    circle, draw, minimum size=1cm, inner sep=1pt
                }
        }

        \node[state] (0) at (0,0) {$q_0$};
        \node[state] (2) at (3,0) {$q_1$};
        \node[state] (3) at (6,0) {$q_2$};

        \draw[->, green!70!black, thick] (0) edge[loop above] node{\szero} (0);
        \draw[->, red, thick] (2) edge[loop above] node{\sone} (2);
        \draw[->] (3) edge[loop above] node{\szero,\sone} (3);

        \draw[->, red, thick] (0) edge[bend left=10] node{\sone} (2);
        \draw[->] (2) to node[below]{\szero} (3);

        \draw[->] (-1,0) -- (0);

        \node[draw, right=0.5cm of 3, inner sep=3pt, rounded corners] (legend) {
            \scalebox{0.7}{
                \begin{tabular}{@{}rl@{}}
                    \tikz[baseline=-0.5ex]\draw[->, black, thick] (0,0) -- +(0.6,0);          & \;Weight 0    \\
                    \tikz[baseline=-0.5ex]\draw[->, green!70!black, thick] (0,0) -- +(0.6,0); & \;Weight 1    \\
                    \tikz[baseline=-0.5ex]\draw[->, red, thick] (0,0) -- +(0.6,0);            & \;Weight $-1$
                \end{tabular}
            }
        };
    \end{tikzpicture}
    \caption{Weighted automaton recognising nonregular language $\texttt{0}^n \texttt{1}^n$, using accept set $\{(q_1,W=0)\}$ or $\{(q_1,W=0), (q_0,W=0)\}$ if we include the empty word.}\label{fig:ex_wa}
\end{figure}

Weighted automata are an extension of usual finite state automata where each transition is given a weight in $\Z$: when a word is processed, total weight $W \in \Z$ is computed by summing the weights of all the encountered transitions. Accepted words are described by a set of pairs of final-state and weight lower and upper bounds (potentially infinite) to satisfy: for instance, the archetypal nonregular language $\texttt{0}^n \texttt{1}^n$ is recognised by the weighted automaton of Figure~\ref{fig:ex_wa} using accept set $\{(q_1,0 \leq W \leq 0)\}$ which we can simplify as $\{(q_1,W=0)\}$ and we may add $(q_0,W=0)$ to the set if we want to include the empty word.

Weighted Finite Automata Reduction (WFAR) is an extension of FAR (Section~\ref{sec:FAR}) using deterministic weighted finite automata. Figure~\ref{fig:WFAR} gives a \textit{WFAR automaton}, which is a certificate of nonhalting the machine given in Figure~\ref{fig:WFAR}~(a). A WFAR automaton consists of (i) a \textit{left deterministic weighted automaton} (ii) a \textit{right deterministic weighted automaton} and (iii) a set of accepted \textit{weighted configurations}, see Figure~\ref{fig:WFAR}~(b), (c), and (e).
A WFAR automaton processes word-representations (as defined in Section~\ref{sec:FAR}) of Turing machine configurations\footnote{With finitely many $\sone$s, which we always assume from now on.} in the way described below, and, if a configuration is accepted by the WFAR automaton, we know that the associated Turing machine does not halt from that configuration, Theorem~\ref{th:WFAR}. That way, WFAR is a CTL method (instead of co-CTL for FAR), see Section~\ref{sec:deciders-overview}. The WFAR automaton of Figure~\ref{fig:WFAR} accepts (see below for what it means) the initial configuration $\texttt{\stateA}\szero$, giving a certificate of nonhalting for the machine of Figure~\ref{fig:WFAR}~(a) from the all-0 tape.

The method was initially developed as a decider \cite{iijil1_2025_14914502} and integrated to \CoqBB as a verifier: similarly to FAR (Section~\ref{sec:FAR}), 17 WFAR certificates are directly hardcoded in the \Coq proof, see file \texttt{Verifier\_WFAR\_Hardcoded\_Certificates.v}, see Section~\ref{sec:WFAR:results} for results.

\paragraph{WFAR processing.} Let us describe how a WFAR automaton processes a word-represented Turing machine configuration in order to decide whether it is accepted or not, as illustrated in Figure~\ref{fig:WFAR}. WFAR is an extension of the ``Meet-in-the-middle''\footnote{See Section 6.6 in \cite{bbchallenge_part1}.} instance of FAR \cite{bbchallenge_part1}: word-representations of configurations are split into three segments, (i) word to the left of the head, (ii) head state and symbol, (iii) word to the right of the head; \eg $\texttt{10101}\; \texttt{\stateC}\sone\; \texttt{01}$, Figure~\ref{fig:WFAR}~(d). The left word -- here $\texttt{10101}$ -- is processed left-to-right by the left weighted automaton, Figure~\ref{fig:WFAR}~(b), and the right word -- here $\texttt{01}$ -- is processed right-to-left, by the right weighted automaton, Figure~\ref{fig:WFAR}~(c). In this case, this results in final left state $p_2$, final right state $q_0$, final left weight $W_l = 2$ and final right weight $W_r = -1$; the final total weight is $W = W_l + W_r = 1$, Figure~\ref{fig:WFAR}~(d). We denote this final \textit{weighted configuration} as $[p_2]\; \texttt{\stateC}\sone\; [q_0]$ with $W=1$. This final weighted configuration belongs to the set of accepted weighted configurations, Figure~\ref{fig:WFAR}~(e), which means that configuration $\texttt{10101}\; \texttt{\stateC}\sone\; \texttt{01}$ is \textit{accepted} by this WFAR automaton.

\subsubsection{WFAR theorem}

WFAR is a CTL technique -- see Section~\ref{sec:deciders-overview}: a WFAR automaton for a given Turing machine is meant to recognise a language of configurations $\mathcal{L}$ that includes the initial all-0 configuration, closed under Turing machine steps and that does not contain any halting configuration. Hence, we get the following CTL formalism, plus leading/trailing zeros conditions similarly to FAR:
\begin{align}
    u \in \mathcal{L}                               & \iff 0u \in \mathcal{L}               &  & \text{(leading zeros ignored)}
    \tag{\ref{eq:lzignore}}
    \\
    u \in \mathcal{L}                               & \iff u0 \in \mathcal{L}               &  & \text{(trailing zeros ignored)}
    \tag{\ref{eq:tzignore}}
    \\
    c\TMstep\bot                                    & \implies \hat{c} \not \in \mathcal{L} &  & \text{(reject halt)} \label{eq:rejecthalt}     \\
    (c_1\TMstep c_2)\land \hat{c}_1 \in \mathcal{L} & \implies\hat{c}_2 \in \mathcal{L}     &  & \text{(forward closure)} \label{eq:forwardclo}
\end{align}

Let us now show how to verify that a given WFAR automaton for a given Turing machine $\mathcal{M}$ accepts such $\mathcal{L}$, hence providing a certificate that $\mathcal{M}$ does not halt from the all-0 tape.

In the following, $\delta_L: Q_L \times \balphabet \to Q_L$ and $\delta_R: Q_R \times \balphabet \to Q_R$ respectively refer to the transition functions of the deterministic left and right weighted automaton of a WFAR automaton, \eg Figure~\ref{fig:WFAR}~(b) and (c), with $Q_L = \{p_0, \, \dots, \, p_{n_L-1}\}$ and $Q_R = \{q_0, \, \dots, \, q_{n_R-1}\}$ their respective set of states with $n_L$ and $n_R$ the number of left/right states and $p_0$ and $q_0$ are the respective initial states of the left and right weighted automaton. Weights are given by $w_L: Q_L \times \balphabet \to \Z$ and $w_R: Q_R \times \balphabet \to \Z$. We write $\delta_\mathcal{M}:\states \times \balphabet \partialto \balphabet \times \{\text{L},\text{R}\} \times \states$ for the transition function of $\mathcal{M}$.\footnote{In the following, we limit ourselves to the binary tape alphabet $\balphabet$, but the results generalise transparently to arbitrary alphabets $\alphabet$.}

\paragraph{Leading/trailing zeros.} Checking Conditions~\eqref{eq:lzignore} and \eqref{eq:tzignore} for a WFAR automaton is simple: thanks to the left-to-right and right-to-left respective read directions for the left and right weighted automaton, we simply have to check that $\delta_L(p_0,\szero) = p_0$ and $\delta_R(q_0,\szero) = q_0$ as well as $w_L(p_0,\szero) = w_R(q_0,\szero) = 0$ to ensure the convention that the weight of all word-representations of the initial all-0 configuration is 0.

\paragraph{Forward closure, without weights: back to FAR.} First, let us reformulate forward closure for a WFAR automaton, ignoring weight computations. Forward closure concerns the WFAR automaton's accept state, let us consider an example first. The WFAR automaton of Figure~\ref{fig:WFAR} accepts the initial Turing machine configuration $\texttt{\stateA}\szero$: the WFAR configuration $[p_0] \; \texttt{\stateA}\szero\; [q_0]$ (ignoring $W=0$) is in the accept set given in Figure~\ref{fig:WFAR}~(e). To ensure forward closure, Condition~\ref{eq:forwardclo}, let us consider how $\delta_\mathcal{M}(\texttt{\stateA},\szero) = \texttt{1R\stateB}$ affects $[p_0] \; \texttt{\stateA}\szero\; [q_0]$; we get $[p_0] \; \texttt{1}\; \texttt{\stateB}\texttt{?}\; [\texttt{?}]$, which is $[p_2] \; \texttt{\stateB}\texttt{?}\; [\texttt{?}]$ given that $\delta_L(p_0,\sone) = p_2$, see Figure~\ref{fig:WFAR}~(b). In order to resolve $\texttt{?}$, we look at all the transitions in the right weighted automaton that lead to $q_0$, see Figure~\ref{fig:WFAR}~(c): there are two, both reading a \texttt{0}, giving $[p_2] \; \texttt{\stateB}\texttt{0}\; [q_0]$ and $[p_2] \; \texttt{\stateB}\texttt{0}\; [q_1]$. Ignoring weights, we want both in the accept set\footnote{Which is the case here, with $W\geq 0$ and $W \geq 1$ in Figure~\ref{fig:WFAR}~(e).}: that ensures that for any Turing machine configuration $c_1$ yielding WFAR configuration $[p_0] \; \texttt{\stateA}\szero\; [q_0]$, then $c_2$ is also accepted by the WFAR automaton with $c_1 \TMstep c_2$. Note that $c_1$ and $c_2$ are not necessarily reachable from the initial all-0 tape: CTL methods provide languages that overestimate the language generated by Turing machines from the all-0 tape.

In general, ignoring weights, forward closure means the following for WFAR automaton accept set $\mathfrak{A}$:
\begin{align}
     & \forall q', r' \in Q_R \times \balphabet \text{ s.t. } \delta_R(q', r') = q,
     &                                                                              & [p]\, fr\, [q] \in \mathfrak{A} \Rightarrow [\delta_L(p,b)]\, tr'\, [q'] \in \mathfrak{A}
     &                                                                              & \text{if } \delta_{\mathcal{M}}(f,r) = (b, \text{R}, t) \label{eq:forwardcloR}
    \\[0.5em]
     & \forall p', r' \in Q_L\times \balphabet \text{ s.t. } \delta_L(p', r') = p,
     &                                                                              & [p]\, fr\, [q] \in \mathfrak{A} \Rightarrow [p']\, tr'\, [\delta_R(q,b)] \in \mathfrak{A}
     &                                                                              & \text{if } \delta_{\mathcal{M}}(f,r) = (b, \text{L}, t) \label{eq:forwardcloL}
\end{align}

For all left/right weighted automata states $p,q \in Q_L \times Q_R$ and notations $f,t \in \{\stateA,\stateB,\stateC,\stateD,\stateE\}$ the ``from'' and ``to'' states in a transition, $r,b \in \balphabet$ respectively the bit read and the bit written in a transition. If, ignoring weights, a WFAR accept set is forward-closed in the above sense, contains no halting configuration, and contains $[p_0] \; \texttt{\stateA}\szero\; [q_0]$, then we are in a particular case of FAR, as shown in \cite{bbchallenge_part1} \ie Theorem~\ref{far-main-theorem} can be applied: the Turing machine does not halt from the initial all-0 configuration and is regular in the sense of Section~\ref{sec:deciders-overview}.

\paragraph{Forward closure, with weights: beyond FAR.} Weights allow to further restrict the accept set in cases where the above, \textit{weightless}, forward closure does include halting configurations. For instance, in the case of Figure~\ref{fig:WFAR}, we have $[p_2] \; \texttt{\stateD}\sone \; [q_1]$ (with $W \geq 2$) in the accept set $\mathfrak{A}$, given in Figure~\ref{fig:WFAR}~(e), and, computing weightless forward closure from this WFAR configuration, ignoring weights, yields, using \eqref{eq:forwardcloR} and \eqref{eq:forwardcloL}, $[p_0] \; \texttt{\stateD}\sone \; [q_1]$, then $[p_0] \; \texttt{\stateD}\szero \; [q_1]$, then $[p_0] \; \texttt{\stateE}\szero \; [q_1]$ and finally, $[p_0] \; \texttt{\stateA}\sone \; [q_0] $, which is a halting configuration, meaning that we cannot conclude that $\mathcal{M}$ does not halt from the initial all-0 configuration. However, looking at Figure~\ref{fig:WFAR}~(e) we see that $[p_0] \; \texttt{\stateD}\sone \; [q_1]$ is not in $\mathfrak{A}$ and hence none of the successors either. This \textit{refinement} of $\mathfrak{A}$ is due to discarding \textit{impossible} weighted configurations, which we explain now.

With weights, WFAR configurations are expressed as follows: $[p] \; fr \; [q];\; W \geq m; \; W \leq M;$ with $m \in \Z \cup\{-\infty\}$ and $M \in \Z \cup \{+\infty\}$ weight bounds. For instance, considering the accept state $\mathfrak{A}$ of Figure~\ref{fig:WFAR}~(e) with weights, we have that the initial weighted configuration, $c'_1 = [p_0] \; \texttt{\stateA}\szero \; [q_0];\; W \geq 0;\; W \leq 0;$ is in $\mathfrak{A}$. When computing closure, bounds are updated by the \textit{total weight change} incurred when processing weighted transitions: consider $c'_2 = [p_2] \; \texttt{\stateB}\szero \; [q_1];\; W \geq \texttt{?};\; W \leq \texttt{?}$ obtained from the initial weighted configuration by \eqref{eq:forwardcloR}; we have $W(c_2) = W(c_1) + w_L(p_0,\sone) - w_R(q_1, \szero)$ with $c_1 \TMstep c_2$ Turing machine configurations such that $c_1$ yields WFAR configuration $c'_1$ and $c_2$ yields $c'_2$. Hence, for $c'_2$ to be accepted, we must update its weight bounds by weight change $w_L(p_0,\sone) - w_R(q_1, \szero) = 0 - (- 1) = +1$, giving $c'_2 = [p_2] \; \texttt{\stateB}\szero \; [q_1];\; W \geq 1;\; W \leq 1$, which is implied by more general $c'_2 = [p_2] \; \texttt{\stateB}\szero \; [q_1];\; W \geq 1;\; W < +\infty$ in $\mathfrak{A}$ of Figure~\ref{fig:WFAR}~(e). In general, in the case of \eqref{eq:forwardcloR}, using same notations, weight bounds $m$ and $M$ are added to weight change $w_L(p,b) - w_R(q', r')$ and weight change $w_R(q,b) - w_L(p',r')$ in the case of \eqref{eq:forwardcloL}; infinite bounds remain the same under any weight change.

Coming back to $[p_2] \; \texttt{\stateD}\sone \; [q_1];\; W \geq 2;\; W \leq + \infty;$ which we have shown above to lead to a halting configuration, we can now compute the bounds of the weighted configuration we obtained using \eqref{eq:forwardcloL}: $[p_0] \; \texttt{\stateD}\sone \; [q_1]\; W \geq 2; \; W < +\infty;$ as there is no weight changes. However, note that any left word reaching $q_0$ has left weight $W_l = 0$ and any right word reaching $q_1$ has right weight $W_r \leq 0$, hence total weight $W \leq 0$, which is incompatible with the constraint $W \geq 2$; hence we can discard $[p_0] \; \texttt{\stateD}\sone \; [q_1]\; W \geq 2; \; W < +\infty$ from the accept set $\mathfrak{A}$ as it is an impossible weighted configuration. Doing this also discards from $\mathfrak{A}$ all the weighted configurations we computed from $[p_0] \; \texttt{\stateD}\sone \; [q_1]\; W \geq 2; \; W < +\infty$, including the halting one, $[p_0] \; \texttt{\stateA}\sone \; [q_0] $.

In this case, in order to conclude, we needed to know that, in the left weighted automaton of Figure~\ref{fig:WFAR}~(b), terminating in state $q_0$ implies $W_l = 0$. In general, the exact feasible weight bounds of any state in a weighted automaton can be computed using the Bellman-Ford algorithm\footnote{Both the original and the \CoqBB implementations do not need the Bellman-Ford algorithm as they use restricted weighted automata on which it is easy to check whether the feasible weights for each state are nonpositive or nonnegative \cite{iijil1_2025_14914502}.}: the Bellman-Ford algorithm is able to compute the minimum weighted path from initial state to any state, thereby giving the lower weight bound for each state; the algorithm is also able to detect negative cycles, leading to $-\infty$ lower bounds; similarly, upper bounds are computed using Bellman-Ford on the automaton with negated weights.

For instance, in the left weighted automaton of Figure~\ref{fig:WFAR}~(b), at state $p_2$, we have $0 \leq W_l < +\infty$. Hence we can use these feasible weight bounds to automatically discard impossible weighted configurations from $\mathfrak{A}$ and hopefully, end up with no halting configuration in $\mathfrak{A}$.

We say that $\mathfrak{A}$ is \textit{weighted forward closed} for machine $\mathcal{M}$ if for all weighted configurations $c$ in $\mathfrak{A}$ and for any weighted configuration $c'$ obtained by closure from $c$ using \eqref{eq:forwardcloL} or \eqref{eq:forwardcloR} together with the weight bounds update rules stated above, either (i) there is $c'' \in \mathfrak{A}$ with bounds $m''$ and $M''$ such that $m'' \leq m'$ and $M'' \geq M'$ with $m'$ and $M'$ the bounds of $c'$, or (ii) $c'$ is an impossible weighted configuration as defined above, \ie incompatible with the feasible weight bounds computed from the left and right weighted automata.

We finally get the WFAR theorem:

\begin{theorem}[\CoqBB: \texttt{Lemma MITM\_WDFA\_verifier\_spec}]\label{th:WFAR}
    Let $\mathcal{M}$ be a Turing machine and $\mathcal{W}$ be a WFAR automaton with accept set $\mathfrak{A}$ such that:
    \begin{enumerate}
        \item Leading and trailing zeros are ignored: $\delta_L(p_0,\szero) = p_0$ and $\delta_R(q_0,\szero) = q_0$ with $w_L(p_0,\szero) = w_R(q_0,\szero) = 0$ with $\delta_L$ and $\delta_R$ the transition functions of the left and right weighted automata of $\mathcal{W}$ and $w_L$ and $w_R$ their weight functions.\label{pt:wfar:top}
        \item The initial configuration is accepted: \ie $[p_0]\; \texttt{\stateA}\szero\; [q_0];\; W=0;$ is in $\mathfrak{A}$.\label{pt:wfar:ttop}
        \item $\mathfrak{A}$ is weighted forward closed for $\mathcal{M}$.\label{pt:wfar:tttop}
        \item $\mathfrak{A}$ contains no halting configurations.\label{pt:wfar:ttttop}
    \end{enumerate}

    Then $\mathcal{M}$ does not halt for any configuration accepted by $\mathcal{W}$, which includes the initial all-0 configuration.
\end{theorem}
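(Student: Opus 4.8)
The plan is to realise WFAR as a concrete instance of the Closed Tape Language (CTL) framework of Section~\ref{sec:deciders-overview}. I would define $C$ to be the set of finite-support configurations $c$ whose associated weighted configuration, obtained by running $\mathcal{W}$ on any word-representation $\hat{c}$, lies in $\mathfrak{A}$. The bulk of the work is then to check the three CTL hypotheses for $C$: that it contains the all-$0$ configuration, that it contains no halting configuration, and that it is closed under the one-step relation $\TMstep$. Once these are verified, the CTL principle immediately yields that $\mathcal{M}$ does not halt from any configuration in $C$, which is exactly the set accepted by $\mathcal{W}$.

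First I would establish that $C$ is well-defined, i.e. that the weighted configuration $[p]\,fr\,[q];\,W$ assigned to $c$ does not depend on the chosen word-representation. Two word-representations of the same $c$ differ only in leading and trailing zeros, so by Hypothesis~\ref{pt:wfar:top} ($\delta_L(p_0,\szero)=p_0$, $w_L(p_0,\szero)=0$, and symmetrically for the right automaton) reading extra leading zeros from $p_0$ leaves both the left end-state and the left weight unchanged, and dually for trailing zeros and the right automaton; this also handles the boundary case where the head reads into the implicit all-$0$ region. This is precisely Conditions~\eqref{eq:lzignore}--\eqref{eq:tzignore}, and it makes the left state/weight and right state/weight, hence the whole weighted configuration and $W=W_l+W_r$, genuine invariants of $c$. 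With that in hand, membership of the initial all-$0$ configuration is Hypothesis~\ref{pt:wfar:ttop} (its representation $\texttt{\stateA}\szero$ gives $[p_0]\,\texttt{\stateA}\szero\,[q_0];\,W=0$, which is in $\mathfrak{A}$), and absence of halting configurations is Hypothesis~\ref{pt:wfar:ttttop}: a halting $c$ reads some $fr$ with $\delta_\mathcal{M}(f,r)$ undefined, so its weighted configuration has ``halting'' shape and is therefore not in $\mathfrak{A}$.

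The crux is forward closure. Take $c_1\in C$ with weighted configuration $[p]\,fr\,[q];\,W(c_1)\in\mathfrak{A}$ and a step $c_1\TMstep c_2$; since $c_1$ is not halting, $\delta_\mathcal{M}(f,r)=(b,D,t)$ is defined. Consider $D=\text{R}$ (the $D=\text{L}$ case is symmetric via \eqref{eq:forwardcloL}). Writing $b$ and moving right appends $b$ to the left word, giving new left state $\delta_L(p,b)$, and consumes the first symbol $r'$ of the right word, leaving a right word whose end-state $q'$ satisfies $\delta_R(q',r')=q$; thus $c_2$ has weighted configuration $[\delta_L(p,b)]\,tr'\,[q']$, exactly the shape produced by the weightless rule~\eqref{eq:forwardcloR}. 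I would then verify the weight accounting: the left weight gains $w_L(p,b)$ while the right weight loses the contribution $w_R(q',r')$ of the now-consumed transition, so $W(c_2)=W(c_1)+w_L(p,b)-w_R(q',r')$, matching the stated weight-bound update. Weighted forward closure (Hypothesis~\ref{pt:wfar:tttop}) then says the weighted configuration of $c_2$ is either dominated by an element of $\mathfrak{A}$ (so $c_2\in C$) or is ``impossible''. Here lies the main obstacle and the key idea: because $c_2$ is a genuine configuration, its total weight $W(c_2)$ is actually realised by reading real words, hence compatible with the feasible weight bounds computed (via Bellman--Ford) from the two automata; therefore the impossible case cannot occur for a real configuration, and we conclude $c_2\in C$. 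The pruning of impossible weighted configurations thus only ever discards ``phantom'' states that no actual configuration can reach, which is exactly why it can shrink $\mathfrak{A}$ below the weightless (FAR) closure without endangering closure for genuine configurations.

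Finally, assembling the three verified CTL conditions and invoking the CTL principle of Section~\ref{sec:deciders-overview} gives that every configuration in $C$ --- in particular the all-$0$ configuration, which is in $C$ by Hypothesis~\ref{pt:wfar:ttop} --- begins an infinite, non-halting run. The hardest part, as noted, is the weight bookkeeping together with the soundness of ``impossible'' pruning for real configurations; the leading/trailing-zero invariance and the structural matching with rules~\eqref{eq:forwardcloR}--\eqref{eq:forwardcloL} are comparatively routine once the directional head conventions are pinned down.
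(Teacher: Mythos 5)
Your proof is correct and takes essentially the same route as the paper's: Hypothesis~\ref{pt:wfar:top} gives well-definedness of the weighted configuration across word-representations, and Hypotheses~\ref{pt:wfar:ttop}--\ref{pt:wfar:ttttop} instantiate the CTL argument of Section~\ref{sec:deciders-overview}. The paper's own proof is a much terser statement of exactly this, leaving implicit in the preceding exposition the one genuinely delicate point you spell out --- that ``impossible''-pruning is sound because a real configuration's total weight is realised by actual words and hence always lies within the feasible bounds.
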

\begin{proof}
    Point~\ref{pt:wfar:top} guarantees that all the word-representations (see Section~\ref{sec:FAR}) of the same Turing machine configuration result in the same weighted WFAR configuration when processed by $\mathcal{W}$ (see Section~\ref{sec:WFAR:overview}). Points \ref{pt:wfar:ttop}-\ref{pt:wfar:ttttop} are the WFAR reformulations of the CTL argument (Section~\ref{sec:deciders-overview}). Hence, using the CTL argument, any Turing machine configuration accepted by $\mathcal{W}$ is nonhalting, and, in particular, the initial all-0 configuration.
\end{proof}

\subsubsection{Implementations and results}\label{sec:WFAR:results}

\CoqBB implements Theorem~\ref{th:WFAR}, see file \texttt{Verifier\_WFAR.v}. Certificates consist of left and right weighted automaton: accept sets are constructed by the \Coq verifier which computes the closure from $[p_0]\; \texttt{\stateA} \szero\; [q_0]; W=0$ and uses an integer parameter $P$ given in the certificate such that a bound $W \geq P$ is replaced by $W \leq +\infty$. See the 17 certificates in \texttt{Verifier\_WFAR\_Hardcoded\_Certificates.v}.

These certificates were mainly found by the original WFAR decider implementation \cite{iijil1_2025_14914502} which searches the space of WFAs using brute force. Certificates for ``Helices'' (see Section~\ref{sec:WFAR}) were handcrafted and are significantly bigger than the other certificates: about 50 states in the left and right weighted automata each where other certificates have less than 10 in each.

\newpage
\section{5-state Sporadic Machines}\label{sec:sporadic}

\begin{figure}[h!]
    \centering

    \begin{minipage}{\textwidth}
        \centering
        \begin{subfigure}{0.3\textwidth}
            \centering
            \includegraphics[width=\linewidth]{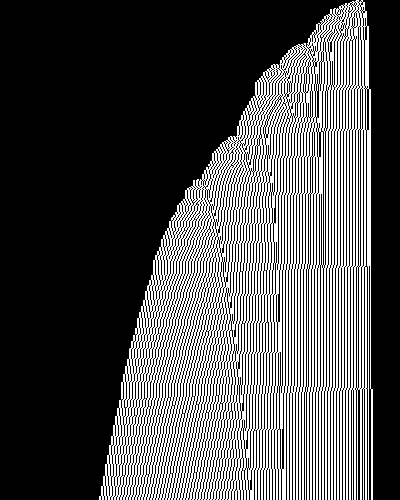}
            \caption*{\href{https://bbchallenge.org/1RB1RD_1LC0RC_1RA1LD_0RE0LB_---1RC}{Skelet \#1}}
        \end{subfigure}
        \hfill
        \raisebox{8.5em}[0pt][0pt]{%
            \begin{minipage}{0.3\textwidth}
                \centering
                \includegraphics[width=\linewidth]{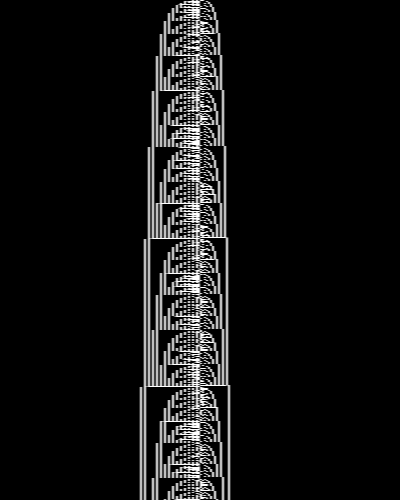}

                \caption*{\href{https://bbchallenge.org/1RB0RA_0LC1RA_1RE1LD_1LC0LD_---0RB}{Skelet \#10}}
                {\small\emph{Double Fibonacci Counter}}
            \end{minipage}
        }
        \hfill
        \begin{subfigure}{0.3\textwidth}
            \centering
            \includegraphics[width=\linewidth]{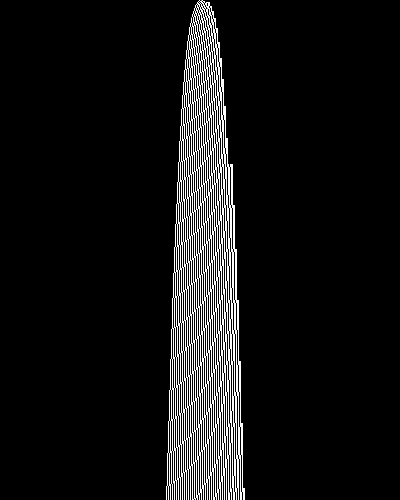}
            \caption*{\href{https://bbchallenge.org/1RB---_0LC1RE_0LD1LC_1RA1LB_0RB0RA}{Skelet \#17}}
        \end{subfigure}
    \end{minipage}

    \vspace{2.5em}

    \begin{tikzpicture}
        \node[draw=magenta, thick, rounded corners, inner sep=8pt] (box1) {
            \begin{minipage}{0.95\textwidth}
                \centering
                \textbf{\textcolor{magenta}{Shift Overflow Counters}}\\[0.8em]
                \begin{subfigure}{0.17\textwidth}
                    \centering
                    \includegraphics[width=\linewidth]{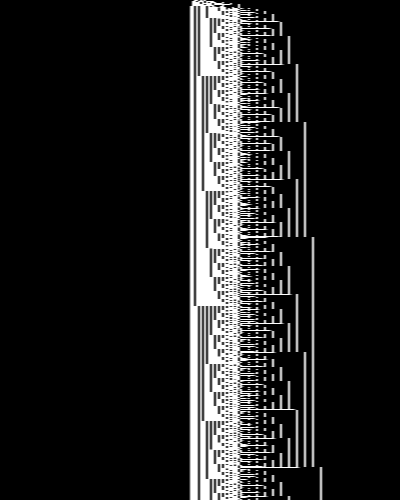}
                    \caption*{\href{https://bbchallenge.org/1RB---_1RC1LB_1LD1RE_1LB0LD_1RA0RC}{Skelet \#15}}
                \end{subfigure}
                \hfill
                \begin{subfigure}{0.17\textwidth}
                    \centering
                    \includegraphics[width=\linewidth]{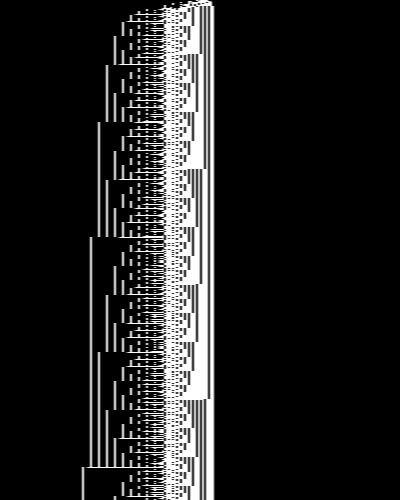}
                    \caption*{\href{https://bbchallenge.org/1RB1LD_1RC0RB_1LA1RC_1LE0LA_1LC---}{Skelet \#26}}
                \end{subfigure}
                \hfill
                \begin{subfigure}{0.17\textwidth}
                    \centering
                    \includegraphics[width=\linewidth]{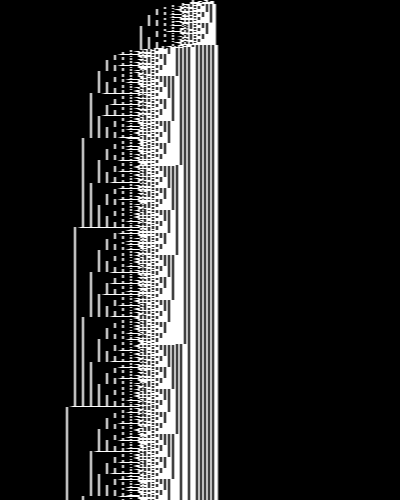}
                    \caption*{\href{https://bbchallenge.org/1RB1LC_0RC0RB_1LD0LA_1LE---_1LA1RE}{Skelet \#33}}
                \end{subfigure}
                \hfill
                \begin{subfigure}{0.17\textwidth}
                    \centering
                    \includegraphics[width=\linewidth]{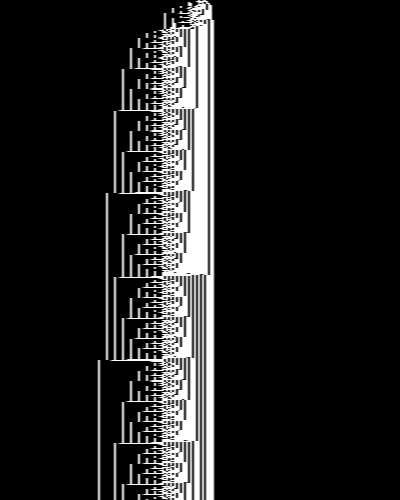}
                    \caption*{\href{https://bbchallenge.org/1RB1LC_0RC0RB_1LD0LA_1LE---_1LA1RA}{Skelet \#34}}
                \end{subfigure}
                \hfill
                \begin{subfigure}{0.17\textwidth}
                    \centering
                    \includegraphics[width=\linewidth]{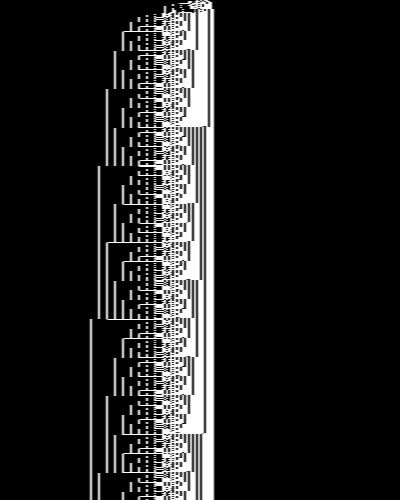}
                    \caption*{\href{https://bbchallenge.org/1RB1LC_0RC0RB_1LD0LA_1LE---_1LA0LA}{Skelet \#35}}
                \end{subfigure}
            \end{minipage}
        };
    \end{tikzpicture}

    \vspace{1.5em}

    \begin{tikzpicture}
        \node[draw=magenta, thick, rounded corners, inner sep=8pt] (box2) {
            \begin{minipage}{0.95\textwidth}
                \centering
                \textbf{\textcolor{magenta}{Finned Machines}}\\[0.8em]
                \begin{subfigure}{0.17\textwidth}
                    \centering
                    \includegraphics[width=\linewidth]{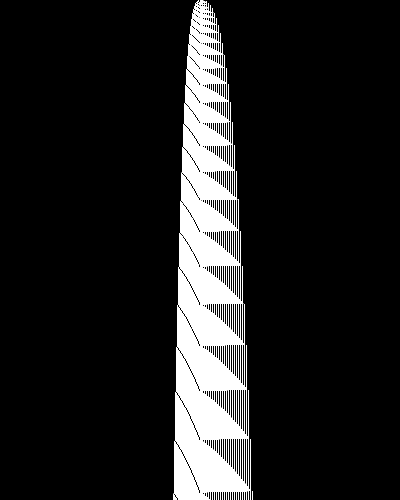}
                    \caption*{\href{https://bbchallenge.org/1RB0LE_1RC1RB_1RD1LC_0LE0RB_---1LA}{Finned \#1}}
                \end{subfigure}
                \hfill
                \begin{subfigure}{0.17\textwidth}
                    \centering
                    \includegraphics[width=\linewidth]{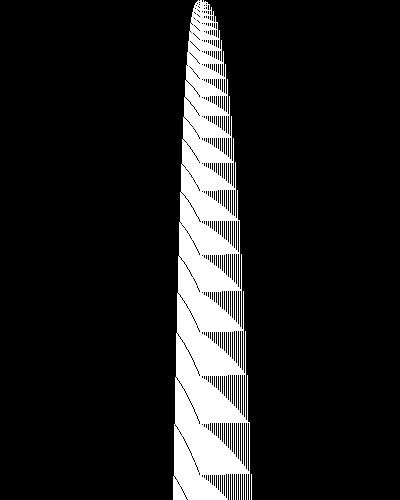}
                    \caption*{\href{https://bbchallenge.org/1RB1RA_1RC1LB_0LD0RA_1RA1LE_---0LD}{Finned \#2}}
                \end{subfigure}
                \hfill
                \begin{subfigure}{0.17\textwidth}
                    \centering
                    \includegraphics[width=\linewidth]{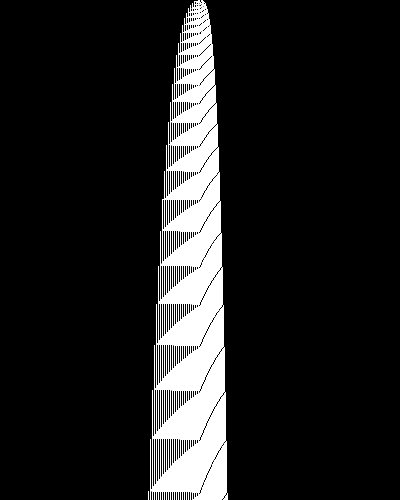}
                    \caption*{\href{https://bbchallenge.org/1RB1RE_1LC1RB_0RA0LD_1LB1LD_---0RA}{Finned \#3}}
                \end{subfigure}
                \hfill
                \begin{subfigure}{0.17\textwidth}
                    \centering
                    \includegraphics[width=\linewidth]{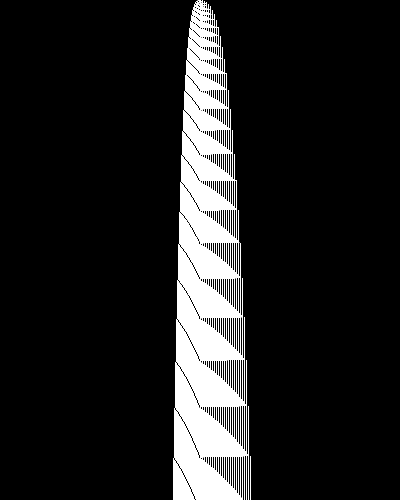}
                    \caption*{\href{https://bbchallenge.org/1RB1LA_0LC0RE_---1LD_1RA0LC_1RA1RE}{Finned \#4}}
                \end{subfigure}
                \hfill
                \begin{subfigure}{0.17\textwidth}
                    \centering
                    \includegraphics[width=\linewidth]{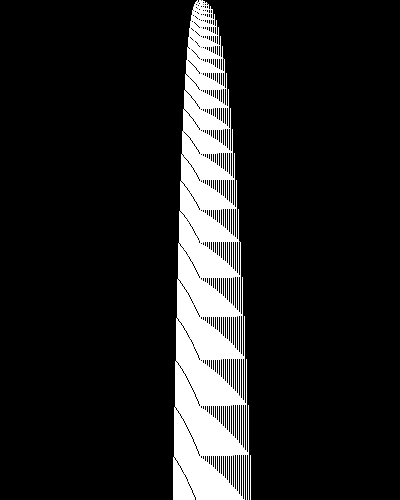}
                    \caption*{\href{https://bbchallenge.org/1RB1LA_0LC0RE_---1LD_1LA0LC_1RA1RE}{Finned \#5}}
                \end{subfigure}
            \end{minipage}
        };
    \end{tikzpicture}

    \caption{{\small Family picture of the 5-state Sporadic Machines (20,000-step space-time diagrams) which required individual \Coq nonhalting proofs; machine names in the Figure are clickable URLs giving the TNF-normalised transition table of each machine (see Section~\ref{sec:enum}). All Sporadic Machines were also identified by Skelet \cite{Skelet_bbfind} as holdouts of his \texttt{bbfind} program. For better visibility, diagrams of counters (Skelet \#10 and Shift Overflow Counters) have been represented using a tape of length 200 instead of 400, giving a \textit{zoomed-in} effect.}}
    \label{fig:sporadic}
\end{figure}

Sporadic Machines are 13 nonhalting 5-state Turing machines that were not captured by deciders (Section~\ref{sec:deciders}) and required individual \Coq proofs of nonhalting; their space-time diagrams are given in Figure~\ref{fig:sporadic} where each name is a clickable URL leading to the machine's transition table and space-time diagram. Twelve of these machines, \ie all but ``Skelet \#17'' (see below), were proved nonhalting in \texttt{busycoq} \cite{busycoq}, and then integrated\footnote{For convenience, the relevant parts of \texttt{busycoq} have been added \href{https://github.com/ccz181078/Coq-BB5/tree/main/BusyCoq}{to the root} of \CoqBB. \CoqBB translates \texttt{busycoq} proofs using \href{https://github.com/ccz181078/Coq-BB5/blob/main/CoqBB5/BB5/BusyCoq_Translation.v}{\texttt{BusyCoq\_Translation.v}}.} into \CoqBB. Machine ``Skelet \#17'' was the last 5-state machine to be formally proven nonhalting in Coq, as part of \CoqBB, achieving the proof of $S(5) = \BBtheFifth$ -- a different proof also had been released as a standalone paper, \cite{xu2024skelet17fifthbusy}.

Interestingly all Sporadic Machines had been identified by Georgi Georgiev (also known as ``Skelet'', see Section~\ref{sec:intro:mainresults}) in 2003: either as part of his 43 unsolved machines\footnote{Apart from \cite{Skelet_bbfind_list}, these 43 machines are also listed here: \url{https://bbchallenge.org/skelet}.} which are named after him, \eg ``Skelet \#1'' (see Figure~\ref{fig:sporadic}), or, in the case of what we call ``Finned Machines'', marked by him as ``easily provable by hand'' \cite{Skelet_bbfind_list}. Sporadic Machines can be arranged in three buckets:
\begin{itemize}
    \item \textbf{Finned Machines.} These are five similar machines that hold 3 unary numbers on the tape (and can merge the middle one into its neighbour), vary them while maintaining a linear relation, and in the process ensure any deviation from this linear relation would be detected and cause a halt. These machines were solved by handcrafting nonhalting certificates similar in flavor to WFAR certificates (Section~\ref{sec:WFAR}). The certificates were crafted by Blanchard, translated in \Coq by Kądziołka, see \texttt{busycoq} files \texttt{Finned\{1-5\}.v}. An argument of irregularity (Section~\ref{sec:deciders-overview}) was given for machine ``Finned \#3'' \cite{irregularFinned3}. A later-developed irregular extension of RepWL (Section~\ref{sec:RepWL}) has been reported to solve these machines.\footnote{\url{https://discuss.bbchallenge.org/t/bb5s-finned-machines-summary/234}}
    \item \textbf{Shift Overflow Counters.} This family concerns Skelet's machines \href{https://bbchallenge.org/1RB---_1RC1LB_1LD1RE_1LB0LD_1RA0RC}{15}, \href{https://bbchallenge.org/1RB1LD_1RC0RB_1LA1RC_1LE0LA_1LC---}{26}, \href{https://bbchallenge.org/1RB1LC_0RC0RB_1LD0LA_1LE---_1LA1RE}{33}, \href{https://bbchallenge.org/1RB1LC_0RC0RB_1LD0LA_1LE---_1LA1RA}{34} and \href{https://bbchallenge.org/1RB1LC_0RC0RB_1LD0LA_1LE---_1LA0LA}{35}; Figure~\ref{fig:sporadic}. These machines are similar: they implement two independent binary counters, one to the left of the tape and the other to the right. Their behaviour can be described by two distinct phases: an orderly ``Counter Phase'' where each counter is simply incremented and a more complex ``Reset Phase'' triggered by one of the counters overflowing. If the counter were to overflow again during a ``Reset Phase'', the machines would halt. Therefore, the proof of nonhalting depends upon demonstrating that the machine maintains a ``Reset Invariant'' throughout the ``Reset Phase'' which does not allow another overflow. These machines were first analysed and described by Ligocki, who provided an informal proof of ``Skelet \#34'' as well as conjectures about the others \cite{ShawnSOC}. They were then proved in \Coq by Yuen and Kądziołka as part of \texttt{busycoq}, see files \texttt{Skelet\{15,26,33,34,35\}.v} \cite{busycoq}.
    \item \textbf{\href{https://bbchallenge.org/1RB1RD_1LC0RC_1RA1LD_0RE0LB_---1RC}{Skelet \#1}, \href{https://bbchallenge.org/1RB0RA_0LC1RA_1RE1LD_1LC0LD_---0RB}{Skelet \#10}, and, \href{https://bbchallenge.org/1RB---_0LC1RE_0LD1LC_1RA1LB_0RB0RA}{Skelet \#17}.} These three machines each have unique behaviours which we detail below.
\end{itemize}
\vspace{-1.5em}
\paragraph{\href{https://bbchallenge.org/1RB1RD_1LC0RC_1RA1LD_0RE0LB_---1RC}{Skelet \#1}.} This machine is a Translated Cycler, \ie a machine that eventually repeats the same pattern translated in space (see Section~\ref{sec:loops}), but with enormous parameters: its pre-period (number of steps to wait before the pattern first appears) is about $5.42 \times 10^{51}$ and its period (number of steps taken by the repeated pattern) is $8,468,569,863$. This was discovered by means of accelerated simulation by Kropitz and Ligocki \cite{uniSk1} and thorough analysis by Ligocki \cite{ShawnSkelet1Before, ShawnSkelet1}. The result was confirmed correct after Kądziołka formalised it in \Coq as part of \texttt{busycoq}, see file \texttt{Skelet1.v} \cite{busycoq}. The $10^{51}$ pre-period was computed later by Huang \cite{hipparcosSk1}.
\vspace{-0.5em}
\paragraph{\href{https://bbchallenge.org/1RB0RA_0LC1RA_1RE1LD_1LC0LD_---0RB}{Skelet \#10 (Double Fibonacci Counter)}.} This machine implements two independent \textit{base Fibonacci} counters, one to the left of the tape and the other to the right. Counting in base Fibonacci means exploiting Zeckendorf's theorem \cite{wiki:Zeckendorf's_theorem}: any natural number can be expressed as a sum of Fibonacci numbers in exactly one way, excluding using numbers immediately adjacent in the Fibonacci sequence, where the Fibonacci sequence is $F = 1,2,3,5,8,13,21,34\dots$ -- each number in the sequence is the sum of the two previous ones. For instance, $17 = 1 + 3 + 13$ and this decomposition would be represented as \texttt{100101} in big-endian binary: the $i^\text{th}$ bit from the right is $1$ if we use $F_i$ in the sum. Each of the two counters of Skelet \#10 enumerate natural numbers in base Fibonacci, using slightly different encodings and the machine halts iff the counters ever get out of sync -- which, does not happen. The machine was analysed independently by Briggs and Ligocki \cite{DanBriggs,ShawnSkelet10} and Ligocki's proof \cite{ShawnSkelet10} was formalised in \Coq by Kądziołka as part of \texttt{busycoq}, see file \texttt{Skelet10.v} \cite{busycoq}. Skelet \#10 is the only known 5-state \textit{double} Fibonacci counter, but there are several known \textit{single} Fibonacci counters, such as \tm{1RB0RA_0LC1RA_1LD0LC_1RE1LC_---0RB}, solved by \CoqBB's NGramCPS (Section~\ref{sec:n-gramCPS}).

\vspace{-0.5em}
\paragraph{\href{https://bbchallenge.org/1RB---_0LC1RE_0LD1LC_1RA1LB_0RB0RA}{Skelet \#17}.} \textit{The final boss.} This machine manages a list of integers $n_1, \dots, n_k \in \mathbb{N}$ represented in unary on the tape using encoding: $(\texttt{10})^{n_1} \texttt{1} (\texttt{10})^{n_2} \texttt{1} \dots \texttt{1} (\texttt{10})^{n_k}$. The list can only increase and undergoes a set of complex transformations related to Gray code, and, the machine halts iff $n_1 = n_2 = 0$ and $n_3,\, \dots,\, n_k$ are all even, which, never happens. This description was first drafted by savask \cite{savaskSk17}, proven in a standalone paper by Xu \cite{xu2024skelet17fifthbusy} and, finally, formalised (using a different proof) in \Coq by mxdys as part of \CoqBB. Skelet \#17 was the last 5-state machine to be solved.

\newpage
\section{Results}\label{sec:results}

\CoqBB is available at \url{https://github.com/ccz181078/Coq-BB5} \cite{mxdys_2025_17061968} and contains extensive instructions for how to compile the proof. The proof compiles in about 45 minutes using 13 cores on a standard laptop. The proof only relies on \Coq's standard library axiom \texttt{functional\_extensionality\_dep}\footnote{See \url{https://rocq-prover.org/doc/v8.9/stdlib/Coq.Logic.FunctionalExtensionality.html}.}, which claims that two functions are equal if they are equal at all points.

\thBBTheFifth*
\begin{proof}
    The \Coq proof enumerates 5-state Turing machines in Tree Normal Form, Section~\ref{sec:enum} and Table~\ref{tab:TNF-numbers}. Each enumerated machine goes through the $S(5)$ pipeline, Table~\ref{tab:pipelineBB5}, where the halting problem from all-zero tape of each machine is solved using a decider (or verifier), Section~\ref{sec:deciders}, unless the machine is one of the \numSporadic Sporadic Machines, Section~\ref{sec:sporadic}, for which individual \Coq proofs of nonhalting are provided.
    When encountering a halting machine, the proof checks that it halts before $47{,}176{,}870$ steps, giving $S(5) \leq 47{,}176{,}870$, see \texttt{Lemma BB5\_upperbound}, and, using the 5-state champion (Figure~\ref{fig:bb5win}), see \texttt{Lemma BB5\_lowerbound}, the proof concludes $S(5) = 47{,}176{,}870$. Thanks to this proof, the 5-state champion becomes the winner of the 5th Busy Beaver competition, Figure~\ref{fig:bb5win}.
\end{proof}

\paragraph{Extracting machines.} The essential output of The Busy Beaver Challenge is the list of all $\BBtheFifthTNF$ TNF-enumerated 5-state machines together with their halting status and method used to determine it. Find the list at \url{https://docs.bbchallenge.org/CoqBB5_release_v1.0.0/}. This list was computed by \textit{extracting} the \Coq proof to OCaml, which means that all the \Coq-implemented algorithms were automatically transcribed  in a trusted way to OCaml by the \Coq engine. From there, print statements were added to the OCaml code which, once ran, produced the list. If given an arbitrary 5-state Turing machine, computing TNF normalisation (see Section~\ref{sec:enum}) and then performing lookup in the list allows to determine the halting status of the machine from all-zero tape.

Using this Coq-verified list, any ``observable'' metric on 5-state Turing machines such as \rado's $\Sigma$ (see Section~\ref{sec:intro}) can be computed:

\begin{theorem}\label{th:Sigma5}
    $\Sigma(5) = 4{,}098$.
\end{theorem}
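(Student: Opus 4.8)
The plan is to reduce the computation of $\Sigma(5)$ to a finite search over exactly the machines already classified by the proof of Theorem~\ref{th:BB5}. The key leverage is that the $S(5)$ proof delivers two things at once: a complete, Coq-verified partition of the $\BBtheFifthTNF$ TNF-enumerated machines into halting and nonhalting, and the guarantee that every one of the $48{,}379{,}894$ halting machines halts within $\BBtheFifth$ steps. Since $\Sigma$ depends only on the final tape of halting machines, and nonhalting machines contribute nothing, $\Sigma(5)$ becomes the maximum, over the finitely many halting TNF machines, of the number of \sone symbols present on the tape at the halting step --- a quantity that is now straightforward to compute by direct simulation, exactly as the $S(5)$ proof already simulates each halting machine to confirm its step count.

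First I would argue that this maximum over TNF machines equals $\Sigma(5)$ over all $n$-state machines. The three reductions underlying TNF enumeration (Section~\ref{sec:enum}) each leave the number of \sone symbols on any halting configuration invariant: relabelling non-\stateA states is a pure renaming that never touches the tape; the first-step left/right symmetrisation reflects the tape about the origin, which preserves symbol counts; and deleting unreachable transitions does not change a machine's behaviour. Hence no machine outside the TNF enumeration can achieve more \sone symbols than its TNF representative, and the TNF maximum is the true $\Sigma(5)$. The lower bound $\Sigma(5) \ge \SigmaTheFifth$ then follows by exhibiting the 5-state champion of Figure~\ref{fig:bb5win}, which halts with exactly $\SigmaTheFifth$ ones on its tape; the matching upper bound $\Sigma(5) \le \SigmaTheFifth$ follows from the exhaustive simulation showing that no halting TNF machine leaves more than $\SigmaTheFifth$ ones behind.

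Concretely, I would either reuse the Coq-verified list extracted from the proof --- which already records the halting status of every TNF machine, so one simply simulates each halting entry to completion and tallies its \sone count --- or, more self-containedly, augment the halting verifier used inside \CoqBB so that, at the moment it confirms a machine halts within the $\BBtheFifth$ bound, it additionally returns the number of \sone symbols on the final tape, threading a running maximum through the enumeration. The mathematics here is entirely routine given Theorem~\ref{th:BB5}; the hard part will be purely a matter of formalisation and trust, namely ensuring that the symbol-counting computation is carried out correctly on all $48{,}379{,}894$ halting machines (whether inside Coq via reflection or on the extracted list) and that the reflection-invariance of the \sone count is made rigorous rather than merely asserted.
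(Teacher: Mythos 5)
Your proposal matches the paper's approach: the paper computes $\Sigma(5)$ directly from the Coq-extracted list of all TNF-enumerated machines and their halting statuses, simulating the halting ones and taking the maximum \sone count (with three independent reproductions rather than an in-Coq formalisation). Your explicit argument that the TNF reductions preserve the final \sone count is a point the paper leaves implicit, but it does not change the method.
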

\begin{proof}
    The winning machine for $\Sigma(5)$ is the same as for $S(5)$, Figure~\ref{fig:bb5win}.
    This metric was computed from the Coq-extracted list of all TNF-enumerated 5-state Turing machines, see above. Three agreeing independent reproductions of the computation were asked to ensure there was no mistake made.
\end{proof}

Another similar observable, called $\text{space}(n)$ \cite{Ben-Amram1996} (also called $\text{BB}_{\text{SPACE}}(n)$ \cite{sterin_2022_14955828}) is the maximum number of tape cells that an $n$-state Turing machine may scan before it halts. This observable is similar to $S$ in the sense that, if a machine visits $\text{space}(n)+1$ tape cells, we know it will never halt (from the all-zero tape). We get:

\begin{theorem}
    $\text{space}(5) = 12{,}289$.
\end{theorem}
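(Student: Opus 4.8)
The plan is to compute $\text{space}(5)$ as an \emph{observable} metric over the \Coq-verified list of all $\BBtheFifthTNF$ TNF-enumerated 5-state machines, in exactly the same manner as $\Sigma(5)$ in Theorem~\ref{th:Sigma5}. The starting observation is that, because a Turing machine head moves exactly one cell per step (Section~\ref{sec:TMs}), the set of tape positions visited during any computation is a contiguous interval of $\Z$: starting from position $0$, every integer strictly between the leftmost and rightmost positions ever reached must itself have been reached. Consequently, for a single halting machine $\mathcal{M}$, the number of cells it scans before halting equals $\max_t g_\mathcal{M}(t) - \min_t g_\mathcal{M}(t) + 1$, where $g_\mathcal{M}$ is the head-position function and $t$ ranges over the finitely many steps of the halting computation. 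By Theorem~\ref{th:BB5}, every halting 5-state machine halts within $\BBtheFifth$ steps, so this quantity is computable by direct simulation.

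First I would argue that the per-machine space count is invariant under TNF normalization (Section~\ref{sec:enum}), so that the maximum over the enumerated list equals the maximum over \emph{all} halting 5-state machines. Removing unreachable transitions and permuting the non-\stateA states leave the head-position function unchanged, since states do not influence positions; and the mirror symmetry that swaps every $\text{L}$ with $\text{R}$ negates all head positions, sending $(\min,\max)$ to $(-\max,-\min)$ and hence preserving $\max - \min + 1$. Since only halting machines contribute to $\text{space}(5)$---a nonhalting machine never halts and so is excluded by definition---and since every halting machine is TNF-equivalent to one of the machines marked \HALT in the list (the $48{,}379{,}894$ halting machines of Table~\ref{tab:pipelineBB5}), taking the maximum of $\max - \min + 1$ over these enumerated halting machines yields the upper bound $\text{space}(5) \le 12{,}289$.

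For the lower bound I would exhibit a specific 5-state halting machine whose computation scans exactly $12{,}289$ cells, certifying $\text{space}(5) \ge 12{,}289$; combined with the upper bound this gives equality. As with Theorem~\ref{th:Sigma5}, the computation itself is routine---it reuses the extracted, \Coq-verified halting data and the $\BBtheFifth$-step simulation bound---and would be cross-checked by independent reproductions. The main point requiring care is the invariance argument above: one must verify that the mirror and state-permutation symmetries folded into TNF genuinely preserve the \emph{count} of distinct cells scanned, not merely the halting status, which is what licenses reading $\text{space}(5)$ off the reduced enumerated list rather than off the full, unreduced space of roughly $16$-trillion syntactic machines. A secondary subtlety is that, unlike $\Sigma(5)$ and $S(5)$, the space-maximizing machine need not coincide with the \BBfull champion of Figure~\ref{fig:bb5win}, so the witness must be located by searching the list rather than assumed.
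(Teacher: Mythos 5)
Your proposal is correct and follows essentially the same route as the paper: $\text{space}(5)$ is computed as an observable over the \Coq-extracted list of TNF-enumerated machines, exactly as for $\Sigma(5)$ in Theorem~\ref{th:Sigma5}, with your added care about TNF-invariance of the cell count being a sound (if unstated in the paper) justification for reading the value off the reduced list. The only factual addendum is that the search does in fact locate the witness at the $S(5)$ champion of Figure~\ref{fig:bb5win}, so your caution that the space-maximiser ``need not coincide'' with it is well taken a priori but turns out not to bite.
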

\begin{proof}
    The winning machine for $\text{space}(5)$ is the same as for $S(5)$, Figure~\ref{fig:bb5win}.
    Same method as for $\Sigma$, Theorem~\ref{th:Sigma5}.
\end{proof}

\CoqBB also computes $S$ for 2-state 4-symbol Turing machines:

\thBBTxF*

\begin{proof}
    Similarly to Theorem~\ref{th:BB5}, the \Coq proof enumerates 2-state 4-symbol machines in (almost) Tree Normal Form, see Section~\ref{sec:enum} and Table~\ref{tab:TNF-numbers}. Then the $S(2,4)$ pipeline, Table~\ref{tab:pipelineBB2x4}, which consists only of deciders (Section~\ref{sec:deciders}) is applied to solve the halting problem from all-zero tape of all the enumerated machines. The proof keeps track of the maximum number of steps reached by halting machines and eventually concludes $S(2,4) = \BBTxF$. Thanks to this proof, the 2-state 4-symbol champion (Figure~\ref{fig:bb2x4}) becomes the winner among all 2-state 4-symbol machines.
\end{proof}

Additionally, as illustrated in Table~\ref{table:landscape}, \CoqBB also provides \Coq proofs for previously known values of $S$, including $S(4)$ of which original proof \cite{Brady83} had slight uncertainties -- see Section~\ref{sec:intro}:

\begin{theorem}[Confirmation of Brady's result \cite{Brady83}]\label{th:BB4}
    $S(4) = \BBtheFourth$.
\end{theorem}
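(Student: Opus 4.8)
The plan is to mirror exactly the proof of Theorem~\ref{th:BB5}, specialised to four states and exploiting that the $S(4)$ case is strictly simpler: neither Sporadic Machines nor irregular deciders are required (see Table~\ref{tab:pipelineBB4}). First I would enumerate all 4-state 2-symbol Turing machines in Tree Normal Form (Section~\ref{sec:enum}); by Table~\ref{tab:TNF-numbers} this yields exactly $\BBtheFourthTNF$ machines, and the \CoqBB lemma \texttt{SearchQueue\_upd\_spec} guarantees that reasoning about these TNF representatives suffices to recover $S(4)$ over the full syntactic space of $(4n+1)^{2n}$ machines.

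Next I would run each enumerated machine through the $S(4)$ pipeline of Table~\ref{tab:pipelineBB4}, applying the deciders in sequence. The Loops decider (Algorithm~\ref{alg:loops}, correct by Theorem~\ref{th:loops}) settles $838{,}066$ machines, including all $249{,}693$ halting ones; NGramCPS (Algorithm~\ref{alg:NGramCPS}, correct by Theorem~\ref{th:ngramcps}) settles a further $20{,}841$ nonhalting machines via the regular-CTL argument of Section~\ref{sec:deciders-overview}; and RepWL (Algorithm~\ref{alg:RepWL}, correct by Theorem~\ref{th:repwl}) with parameters $l=4$, $T=3$ settles the last two nonhalting holdouts (the machines of Figure~\ref{fig:RepWLBB4}). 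The key structural fact to verify is that this pipeline leaves \emph{no} \UNKNOWN output: all $609{,}216$ nonhalting machines are proved nonhalting by one of these regular-CTL deciders, and every halting machine is detected by simulation within the pipeline.

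For the two bounds I would proceed as in the $S(5)$ proof. Whenever a machine is reported halting, I would check that it halts within $\BBtheFourth$ steps, yielding the upper bound $S(4) \le \BBtheFourth$; combined with Brady's 4-state champion, which halts after exactly $\BBtheFourth$ steps (and achieves $\Sigma(4) = \SigmaTheFourth$), this gives the matching lower bound, and hence $S(4) = \BBtheFourth$.

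The conceptually notable point—rather than a genuine obstacle—is that, in contrast with $S(5)$, no machine here resists the regular-CTL deciders: there are neither Sporadic Machines requiring bespoke nonhalting proofs (Section~\ref{sec:sporadic}) nor any certificates suspected of irregularity (cf.\ Section~\ref{sec:deciders-overview}), which is precisely the conceptual line separating $S(4)$ from $S(5)$. Consequently the only load-bearing work is computational: verifying in \Coq, by reflection, that the pipeline terminates with each of the $\BBtheFourthTNF$ machines decided. Since each decider's correctness is already established by the theorems cited above, trusting the result reduces to trusting the TNF enumeration and the \Coq kernel, exactly as for Theorem~\ref{th:BB5}. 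This also removes the uncertainties of the original hand-checked argument of \cite{Brady83} (the ``voluminous printouts'' of holdouts), providing full formal confirmation that $S(4) = \BBtheFourth$.
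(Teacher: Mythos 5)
Your proposal is correct and matches the paper's own argument, which simply instantiates the proof of Theorem~\ref{th:BB5} with the $S(4)$ pipeline of Table~\ref{tab:pipelineBB4}: TNF enumeration of the $\BBtheFourthTNF$ machines, the Loops/NGramCPS/RepWL deciders resolving every machine with no Sporadic Machines or hardcoded certificates needed, the upper bound from checking each halting machine halts within $\BBtheFourth$ steps, and the lower bound from Brady's champion. No differences worth noting.
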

\begin{proof}
    Same as Theorem~\ref{th:BB5}, using the $S(4)$ pipeline, Table~\ref{tab:pipelineBB4}, which consists only of deciders (Section~\ref{sec:deciders}), \ie no individual proofs of nonhalting.
\end{proof}

\newpage
\section{Zoology}\label{sec:zoo}
\vspace{-1em}
\begin{figure}[htbp]
    \centering

    \begin{subfigure}{0.3\textwidth}
        \centering
        \includegraphics[width=\linewidth]{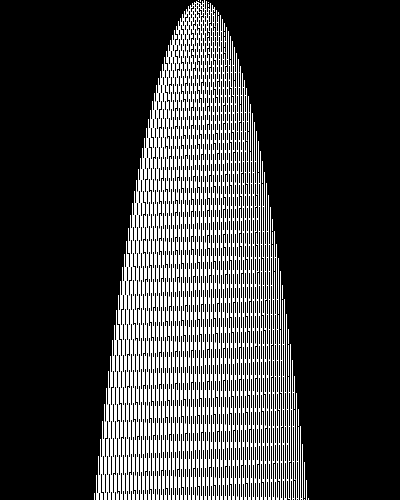}
        \caption*{Bouncers}
        {\scriptsize \tm{1RB0LE_0RC1RE_1LD1RA_0LA---_0RC0LB}}
    \end{subfigure}
    \hfill
    \begin{subfigure}{0.3\textwidth}
        \centering
        \includegraphics[width=\linewidth]{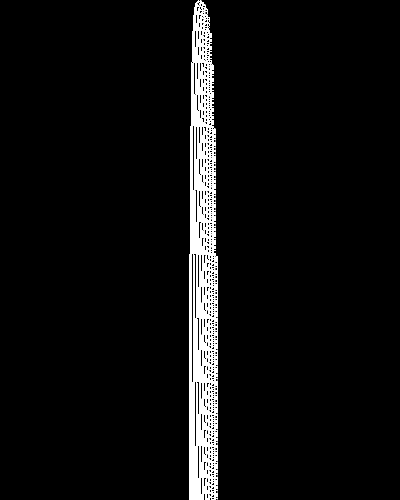}
        \caption*{Counters}
        {\scriptsize \tm{1RB1LA_0LA0RC_0LC1RD_1RE0LA_0RB---}}
    \end{subfigure}
    \hfill
    \begin{subfigure}{0.3\textwidth}
        \centering
        \includegraphics[width=\linewidth]{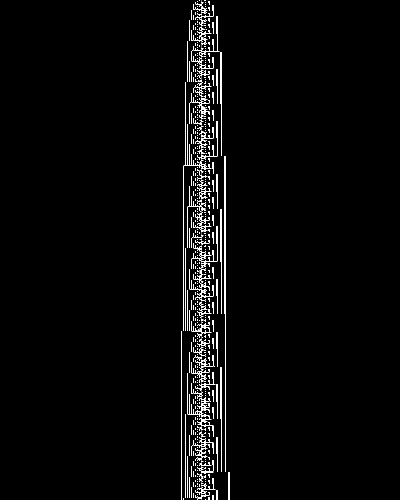}
        \caption*{Double Counters}
        {\scriptsize \tm{1RB1LE_1LC0RB_1RD0LC_1LA1RB_1LB---}}
    \end{subfigure}

    \vspace{0.5cm}

    \begin{subfigure}{0.3\textwidth}
        \centering
        \includegraphics[width=\linewidth]{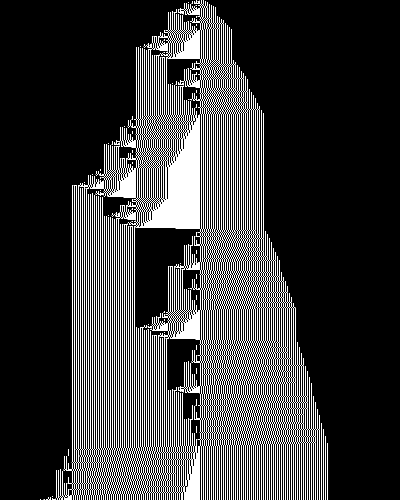}
        \caption*{Fractals}
        {\scriptsize \tm{1RB0RB_1LC0RA_1LD0LB_0LE---_1RE1LB}}
    \end{subfigure}
    \hfill
    \begin{subfigure}{0.3\textwidth}
        \centering
        \includegraphics[width=\linewidth]{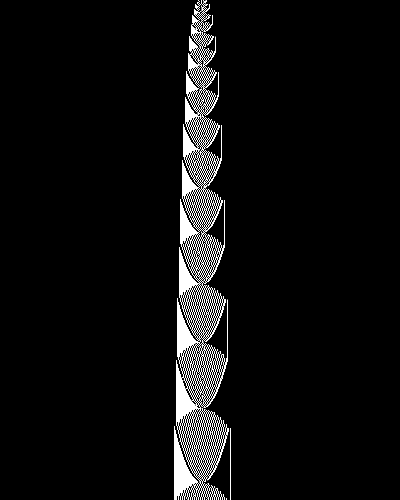}
        \caption*{Helices}
        {\scriptsize \tm{1RB0RE_0RC0RA_1LD---_1LA0LB_1RA0LC}}
    \end{subfigure}
    \hfill
    \begin{subfigure}{0.3\textwidth}
        \centering
        \includegraphics[width=\linewidth]{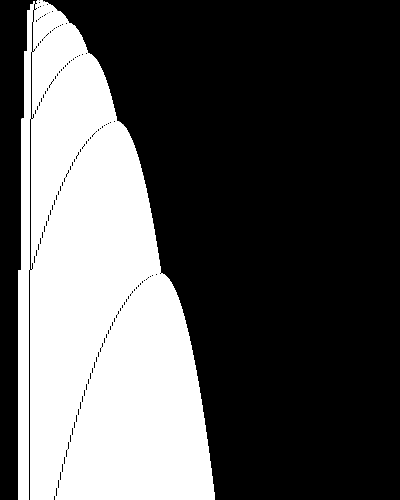}
        \caption*{Bells}
        {\scriptsize \href{https://bbchallenge.org/antihydra}{Antihydra, 6-state Cryptid}}
    \end{subfigure}

    \caption{Main zoological families that were identified among 5-state Turing machines, together with Cyclers and Translated Cyclers which are not illustrated here (see Section~\ref{sec:loops}). }
    \label{fig:zoology}
\end{figure}

As we were \textit{computing in the wild}, we identified several \textit{zoological families} among 5-state Turing machines:

\begin{enumerate}
    \item \textbf{Cyclers and Translated Cyclers}; Section~\ref{sec:loops}. Translated Cyclers are the most common \textit{species} among 5-state machines -- estimated to be about 80\% of TNF-enumerated machines.
    \item \textbf{Bouncers}; Figure~\ref{fig:zoology}. This family consists of machines that populate the tape with linearly-expanding patterns while bouncing back and forth from tape extremities. Bouncers have been formally defined and efficient deciders (not used in \CoqBB) have been crafted to detect them \cite{bbchallenge_part1}.
    \item \textbf{Counters}\footnote{Also referred to as \textit{exponential counters}. Since they count in a base > 1, it takes exponential time to add further digits to the tape.}; Figure~\ref{fig:zoology}. Counters are machines that enumerate numbers in a basis bigger than $1$. There is a rich variety of 5-state counters, counting in all sorts of bases: base 2 (Figure~\ref{fig:finite-automata-reduction}), \href{https://bbchallenge.org/1RB1RB_1RC0LD_1LD1RA_---1LE_0RA0LE }{base 3}, \href{https://bbchallenge.org/1LB1RC_0LE0RA_1LD1RA_0RA1LB_0RD0LB}{base 3/2}, base Fibonacci (Section~\ref{sec:sporadic}) and \href{https://bbchallenge.org/1RB1RA_0LC1LC_0RD1LD_0RA0LB}{Fibonacci variants} $A(n) = A(n-1) + A(n-3)$ etc.\footnote{Classifying 5-state counters would be a beautiful project.}
    \item \textbf{Double Counters}; Figure~\ref{fig:zoology}. Double Counters implement two independent counters, Skelet \#10 is an example (Section~\ref{sec:sporadic}). The example of Figure~\ref{fig:zoology} is a base-2 counter on the left-side of the tape and base-3 counter on the right-side.
    \item \textbf{Fractals}; Figure~\ref{fig:zoology}. Fractal machines are loosely defined as machines whose space-time diagrams draw a fractal, self-similar, pattern when ``zooming-out'' (\ie simulating more steps). There exist some hybrids, such as this \href{https://bbchallenge.org/1RB1RC_1RC1RB_1LD0RA_---1LE_0LD0LA&s=20000}{Sierpiński triangle} growing off the side of a bouncer.
    \item \textbf{Helices}; Figure~\ref{fig:zoology}. Helices are loosely defined as Turing machines whose space-time diagrams resemble a double helix. Helices require big nonregular certificates of nonhalting, see Section~\ref{sec:WFAR}.
    \item \textbf{Bells}; Figure~\ref{fig:zoology}. Bells are loosely defined as Turing machines whose space-time diagrams resemble a succession of bells. The 5-state winner (Figure~\ref{fig:bb5win}) and the 2-state 4-symbol winner (Figure~\ref{fig:bb2x4}) and the 6-state Cryptid Antihydra (Section~\ref{sec:intro:discuss} and Appendix~\ref{app:cryptids}) fit this category.
\end{enumerate}

Although useful for talking about Turing machines, these families, especially when they are loosely defined, are not to be taken too seriously: (i) widely different behaviours can be implemented while maintaining a similar space-time diagram \textit{silhouette} and (ii) the zoological effort can quickly become vain given all the possible hybrids (\eg \href{https://bbchallenge.org/1RB---_1LC1RE_0RD0LC_1LB1LA_0RA1RE}{bouncer-counter}) and variations within the same family, especially as the number of states increases. Attesting to the limits of a zoological effort, here are eccentric 5-state machines: \href{https://bbchallenge.org/1RB0RA_1LC0LD_1RE1RD_1LA1LB_---1RC}{translated counter}, \href{https://bbchallenge.org/1RB0RD_1LC1LB_1RA0LB_0RE1RD_---1RA}{``fountain''}, and, \href{https://bbchallenge.org/1RB0RC_0LC---_1RD1RC_0LE1RA_1RD1LE}{``toboggan''}.

\addcontentsline{toc}{section}{References}
\bibliographystyle{abbrv}
\bibliography{bbchallenge-paper}
\begin{figure}[h!]
    \centering
    \includegraphics[scale=0.27]{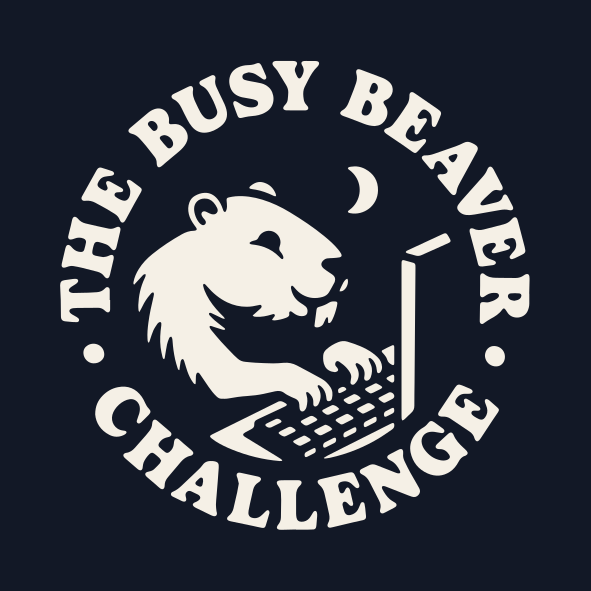}
    \caption{\texttt{bbchallenge} logo; credits: Léo Ramaën (\url{https://leoramaen.com/}).}
\end{figure}

\appendix
\addtocontents{toc}{\protect\setcounter{tocdepth}{1}}
\newpage

\section{Author Contributions}\label{app:contribs}

\paragraph{The bbchallenge Collaboration: $S(5)$ credits.} The following contributions resulted in the determination of the fifth Busy Beaver value and in the better understanding of the landscape of small Busy Beaver values (Table~\ref{table:landscape}): mxdys (\CoqBB, Loops, RepWL); Nathan Fenner, Georgi Georgiev a.k.a~Skelet, savask, mxdys (NGramCPS); Justin Blanchard, Mateusz Naściszewski, Konrad Deka (FAR); Iijil (WFAR); Maja Kądziołka (\texttt{busycoq}); Shawn Ligocki, Jason Yuen, Maja Kądziołka (Sporadic Machines ``Shift Overflow Counters''); Shawn Ligocki, Pavel Kropitz, Maja Kądziołka (Sporadic Machine ``Skelet \#1''); savask, Chris Xu, mxdys (Sporadic Machine ``Skelet \#17''); Shawn Ligocki, Dan Briggs, Maja Kądziołka (Sporadic Machine ``Skelet \#10''); Justin Blanchard, Maja Kądziołka (Sporadic Machines ``Finned Machines''); Shawn Ligocki, Daniel Yuan, mxdys, Matthew L. House, Rachel Hunter, Jason Yuen (``Cryptids''); Yannick Forster, Théo Zimmermann (Coq review); Yannick Forster (Coq optimisation); Tristan Stérin (bbchallenge.org); Tristan Stérin, Justin Blanchard (paper writing).
\vspace{-1ex}
{\small\raggedright
    \begin{multicols}{2}
        \begin{itemize}\setlength{\itemsep}{1pt}\setlength{\parskip}{0pt}
            \item The bbchallenge Collaboration,\\ \url{bbchallenge.org},\\ \texttt{bbchallenge@bbchallenge.org}
            \item Justin Blanchard,\\ \texttt{UncombedCoconut@gmail.com}
            \item Daniel Briggs, \texttt{dbriggs@alum.mit.edu}
            \item Konrad Deka, \texttt{deka.konrad@gmail.com}
            \item Nathan Fenner,\\ \texttt{nfenneremail@gmail.com}
            \item Yannick Forster, INRIA Paris,\\ \texttt{yannick.forster@inria.fr}
            \item Georgi Georgiev (Skelet),\\ Sofia University, Faculty of Mathematics and Informatics,\\ \texttt{skeleta@gmail.com}
            \item Rachel Hunter,\\ \texttt{racheline@bbchallenge.org}
            \item Matthew L. House,\\ \texttt{mattlloydhouse@gmail.com}
            \item Iijil, \texttt{hheussen@web.de}
            \item Maja Kądziołka, University of Warsaw,\\ \texttt{bb@compilercrim.es}
            \item Pavel Kropitz, \texttt{uni@bbchallenge.org}
            \item Shawn Ligocki, \texttt{sligocki@gmail.com}
            \item mxdys, \texttt{mxdys@bbchallenge.org}
            \item Mateusz Na\'{s}ciszewski,\\ \texttt{mateusz.nasciszewski@gmail.com}
            \item savask
            \item Tristan Stérin$^{\ast}$, PRGM DEV,\\ \texttt{tristan@prgm.dev}
            \item Chris Xu, UC San Diego,\\ \texttt{chx007@ucsd.edu}
            \item Jason Yuen,\\ \texttt{jason\_yuen2007@hotmail.com}
            \item Théo Zimmermann, LTCI, Télécom Paris, Institut Polytechnique de Paris, France,\\ \texttt{theo.zimmermann@telecom-paris.fr}
        \end{itemize}
    \end{multicols}
}
\vspace{-4ex}
\paragraph{Acknowledgement.} The bbchallenge Collaboration is not limited to the above contributors but regroups all those who participated in the bbchallenge's discussions through all our channels (Discord chat, forum, wiki, GitHub, emails) and in particular we thank:  Nick Drozd, Andrew Ducharme, Frans Faase, Tony Guilfoyle, Johannes Hostert, Nick Howell, Jeffrey Huang, Alexandre Jouandin, Carl Kadie, Frank S. Lin, Dawid Loranc, Terry J. Ligocki, Heiner Marxen, Pascal Michel, Milo Mighdoll (milomg), Seraphina Nix, Sébastien Ohleyer, Peacemaker II, Andrés Sancho, tomtom2357, Valentin, Daniel Yuan, Polygon.

We'd like to thank \texttt{bbchallenge.org} sponsor \texttt{prgm.dev}. We'd also like to thank those who gave feedback on early versions of this paper, including the following GitHub and Discord users: desseim, frank-s-lin, Lysxia, Nitrome, RobinCodes, XnoobSpeakable, ZhiqiuCao.

We'd also like to thank the following people who helped disseminate the bbchallenge project: Damien Woods, Dave Doty, Eric E. Severson, Scott Aaronson, Timothy Gowers, Jean-Claude Bermond, Luc Albert, Ben Brubaker, and Léo Ramaën.

\begingroup
\renewcommand\thefootnote{$\ast$}
\footnotetext{Funding: \texttt{prgm.dev} through the French state (\textit{Crédit Impôt Recherche} and \textit{Jeune Entreprise Innovante} programs).}
\endgroup

\newpage

\vspace*{-5em}
\section{Busy Beaver winners and champions}\label{app:lowerbounds}

\paragraph{Winners.} Below we give the Busy Beaver winners for known values of $S$ and $\Sigma$. We give all the machines that are \textit{ex aequo} for these metrics -- only considering machines in TNF; see Section~\ref{sec:enum}. Note that the use of \texttt{1RZ} in the Turing machines below instead of \verb|---| means that \textit{we know} that machines halt; see Section~\ref{sec:TMs}.
\begin{itemize}
    \item $S(2) = 6$ is achieved by 5 winners: \tm{1RB1LB_1LA1RZ}, \tm{1RB0LB_1LA1RZ}, \tm{1RB1RZ_1LB1LA}, \tm{1RB1RZ_0LB1LA}, and \tm{0RB1RZ_1LA1RB}; $\Sigma(2) = 4$ is only achieved by the first $S$ winner of the list.
    \item $S(3) = 21$ is achieved by 1 winner: \tm{1RB1RZ_1LB0RC_1LC1LA}; $\Sigma(3) = 6$ is achieved by 5 winners: \tm{1RB1RZ_0RC1RB_1LC1LA}, \tm{1RB1RC_1LC1RZ_1RA0LB}, \tm{1RB1LC_1LA1RB_1LB1RZ}, \tm{1RB1RA_1LC1RZ_1RA1LB}, and \tm{1RB1LC_1RC1RZ_1LA0LB}. Note that this is the only known case where no $S$ winner is a $\Sigma$ winner. Originally proved in \cite{Lin1963}.
    \item $S(4) = 107$ is achieved by 1 winner: \tm{1RB1LB_1LA0LC_1RZ1LD_1RD0RA}; $\Sigma(4) = 13$ is achieved by 2 winners: the winner for $S$ and \tm{1RB0RC_1LA1RA_1RZ1RD_1LD0LB}. Originally proved in \cite{Brady83}.
    \item $S(5) = \BBtheFifth$ is achieved by 1 winner: \tm{1RB1LC_1RC1RB_1RD0LE_1LA1LD_1RZ0LA}; $\Sigma(5) = 4{,}098$ is achieved by 2 winners: the winner for $S$ and \tm{1RB1RA_1LC1LB_1RA1LD_1RA1LE_1RZ0LC}.
    \item $S(2,3) = 38$ is achieved by 1 winner, \tm{1RB2LB1RZ_2LA2RB1LB}, which is also the only winner for $\Sigma(2,3) = 9$. Originally proved in \cite{LafittePapazian2007}.
    \item $S(2,4) = \BBTxF$ is achieved by 1 winner, \tm{1RB2LA1RA1RA_1LB1LA3RB1RZ}, which is also the only winner for $\Sigma(2,4)=2{,}050$.
\end{itemize}

All these values have been verified in \CoqBB, see Section~\ref{sec:intro:discuss} and Section~\ref{sec:results}.

\paragraph{Champions.} Below are the current Busy Beaver champions for some of the next Busy Beaver values that are still unknown, \ie machines achieving higher $S$ scores could be found in the future. Note that, for most of these champions, determining the exact number of steps is harder than proving that the machine halts; consequently, most of these bounds are not \textit{exact} step counts but strict lower bounds of these counts.\footnote{Also, at these scales, $S$ and $\Sigma$ are believed to be roughly the same number since, experimentally, $\Sigma$ is always ``only'' quadratically smaller than $S$.}
\begin{itemize}
    \item $S(6) > 2 \uparrow \uparrow \uparrow 5$; this bound comes from \tm{1RB1RA_1RC1RZ_1LD0RF_1RA0LE_0LD1RC_1RA0RE} which has been proved to halt in Coq\footnote{\url{https://github.com/ccz181078/busycoq/blob/3f302b87f5fb933c46e97672ffbb6907f373fb6e/verify/SOBCv5.v\#L10210-L11283}} and whose number of steps was then manually estimated using the machine's \Coq-verified abstracted rules of evolution.\footnote{\url{https://wiki.bbchallenge.org/wiki/1RB1RA_1RC1RZ_1LD0RF_1RA0LE_0LD1RC_1RA0RE}}
    \item $S(7) > 2 \uparrow^{11} 2 \uparrow^{11} 3$; this bound comes from \tm{1RB0RA_1LC1LF_1RD0LB_1RA1LE_1RZ0LC_1RG1LD_0RG0RF}, see analysis.\footnote{\url{https://wiki.bbchallenge.org/wiki/1RB0RA_1LC1LF_1RD0LB_1RA1LE_1RZ0LC_1RG1LD_0RG0RF}}
    \item $S(3,3) \geq 119{,}112{,}334{,}170{,}342{,}541$; this bound is the exact number of steps performed by \tm{0RB2LA1RA_1LA2RB1RC_1RZ1LB1LC}, yielding the simplified bound $S(3,3) > 10^{17}$. Only 6 unsolved 3-state 3-symbol Turing machines currently remain, including \tm{1RB2LC1RC_2LC---2RB_2LA0LB0RA} which is believed to halt and which number of steps would significantly surpass the current bound, see Section~\ref{sec:fw}.\footnote{\url{https://wiki.bbchallenge.org/wiki/BB(3,3)}}
    \item $S(4,3) > 10 \uparrow^4 4$; this bound comes from \tm{1RB1RD1LC_2LB1RB1LC_1RZ1LA1LD_0RB2RA2RD}, see analysis.\footnote{\url{https://wiki.bbchallenge.org/wiki/1RB1RD1LC_2LB1RB1LC_1RZ1LA1LD_0RB2RA2RD}}
    \item $S(3,4) > 2 \uparrow^{15} 5$; this bound comes from \tm{1RB3LB1RZ2RA_2LC3RB1LC2RA_3RB1LB3LC2RC}. For this machine it is possible to prove that the number of \sone symbols on the final tape is \textit{exactly} $(2 \uparrow^{15} 5) + 14$, see \cite{ligocki2024bb34ack14}.
    \item $S(2,5) > 10 \uparrow \uparrow 4$; this bound comes from \tm{1RB3LA4RB0RB2LA_1LB2LA3LA1RA1RZ}, see analysis.\footnote{\url{https://wiki.bbchallenge.org/wiki/1RB3LA4RB0RB2LA_1LB2LA3LA1RA1RZ}}
\end{itemize}

These bounds are subject to improvement as new champions are discovered. For the most up-to-date information, please refer to our wiki\footnote{\url{https://wiki.bbchallenge.org/wiki/Main_Page}} or Michel’s website \cite{PMichel_website}.

\section{Cryptids}\label{app:cryptids}

Cryptids are Turing machines whose halting problem from all-zero tape is believed to be mathematically hard; see Section~\ref{sec:intro:discuss}. Here we give Cryptids that were found for $S(6)$, $S(3,3)$, and $S(2,5)$; see Table~\ref{table:landscape} (bright orange cells).

\begin{itemize}
    \item $S(6)$: \tm{1RB1RA_0LC1LE_1LD1LC_1LA0LB_1LF1RE_---0RA}; \textbf{Antihydra}.\footnote{\label{note:antihydra}\url{https://wiki.bbchallenge.org/wiki/Antihydra}} Take the following parametrised configuration:
          $$ A(a,b) = \szero^\infty \; \sone^a \; \szero \; \sone^b \; \texttt{E>} \; \szero^\infty$$

          The following Collatz-like rules can be proved:
          \[
              \begin{array}{llcll}
                  A(a, & 2k)   & \longrightarrow & A(a+2, 3k+2)                         \\
                  A(0, & 2k+1) & \longrightarrow & \texttt{Halt}                        \\
                  A(a, & 2k+1) & \longrightarrow & A(a-1, 3k+3)  & \text{ if } a \geq 1 \\
              \end{array}
          \]

          The machine reaches $A(0,4)$ after 11 steps and halts if and only if it reaches a config $A(0,2k+1)$. This can be reformulated as whether repeatedly applying the Collatz-like \textit{Hydra map}\footnote{\url{https://wiki.bbchallenge.org/wiki/Hydra_function}} $H(n) = \lfloor \frac{3}{2}n\rfloor$ starting from $n=8$ will ever have reached more odd values than twice the number of reached even values \cite{ligocki2024bb6antis}. A probabilistic biased random walk model\footnotemark[\getrefnumber{note:antihydra}] suggests that the chance of Antihydra ever halting are minuscule (less than $10^{-200\,000\,000}$). The rules of a 6-state machine very similar to Antihydra have been proven in \Coq.\footnote{\url{https://github.com/ccz181078/busycoq/blob/BB6/verify/AntiHydra2.v}} A dozen other 6-state Cryptids have been identified to date.\footnote{\url{https://wiki.bbchallenge.org/wiki/BB(6)\#Cryptids}}

    \item $S(3,3)$: \tm{1RB2RA1LC_2LC1RB2RB_---2LA1LA}; \textbf{Bigfoot}.\footnote{\label{note:bigfoot}\url{https://wiki.bbchallenge.org/wiki/Bigfoot}} Take the following parametrised configuration:
          $$B(a,b,c) = \szero^\infty\; (\sone \texttt{2})^a \; (\sone \sone)^b \; \texttt{<A} \; (\sone \sone)^c \; \szero^\infty$$

          The following Collatz-like rules can be proved:
          \[
              \begin{array}{lllcllll}
                  B(a, & 6k,   & c) & \to & B(a,   & 8k + c - 1, & 2)                        \\
                  B(a, & 6k+1, & c) & \to & B(a+1, & 8k + c - 1, & 3)                        \\
                  B(a, & 6k+2, & c) & \to & B(a-1, & 8k + c + 3, & 2) & \text{ if } a \geq 1 \\
                  B(a, & 6k+3, & c) & \to & B(a,   & 8k + c + 1, & 5)                        \\
                  B(a, & 6k+4, & c) & \to & B(a+1, & 8k + c + 3, & 2)                        \\
                  B(a, & 6k+5, & c) & \to & B(a,   & 8k + c + 5, & 3)                        \\
              \end{array}
          \]

          The machine enters configuration $B(2,1,2)$ at step 69 and halts if and only if it reaches a config $B(0,6k+2,c)$ \cite{ligocki2023bb33}. A probabilistic model\footnotemark[\getrefnumber{note:bigfoot}] suggests that the chance of Bigfoot ever halting are minuscule (less than $10^{-1\,000\,000}$).

    \item $S(2,5)$: \tm{1RB3RB---3LA1RA_2LA3RA4LB0LB0LA}; \textbf{Hydra}.\footnote{\label{note:hydra}\url{https://wiki.bbchallenge.org/wiki/Hydra}} Take the following parametrised configuration:
          $$C(a,b) = \szero^\infty \; \texttt{<B} \; \szero^a \; \texttt{3}^b \; \texttt{2} \; \szero^\infty$$

          The following Collatz-like rules can be proved:
          \[
              \begin{array}{llcll}
                  C(2k,   & 0) & \longrightarrow & \texttt{Halt}                        \\
                  C(2k,   & b) & \longrightarrow & C(3k+3, b-1)  & \text{ if } b \geq 1 \\
                  C(2k+1, & b) & \longrightarrow & C(3k+3, b+2)                         \\
              \end{array}
          \]

          The machine reaches $C(3,0)$ after 19 steps and halts if and only if it reaches a config $C(2k,0)$. This can be reformulated as whether repeatedly applying the \textit{Hydra map} $H(n) = \lfloor \frac{3}{2}n\rfloor$ from $n=3$ will ever have reached more even values than twice the number of reached odd values \cite{ligocki2024bb25}. This is the same Hydra map used for Antihydra above, but with the "opposite" condition (switching the role of even and odd values). A probabilistic model\footnotemark[\getrefnumber{note:hydra}] suggests that the chance of Hydra ever halting are minuscule (less than $10^{-400\,000}$).

\end{itemize}

\newpage
\section{Exact \CoqBB pipelines}\label{app:pipelines}

Tables~\ref{tab:pipelineBB5}, \ref{tab:pipelineBB2x4}, and~\ref{tab:pipelineBB4} give simplified descriptions of the proof technique (mainly, deciders) pipelines (see Section \ref{sec:pipelines}) implemented in \CoqBB.\footnote{\url{https://github.com/ccz181078/Coq-BB5/} \cite{mxdys_2025_17061968}} Here we give the exact pipelines that were used. The identifiers we use for identifying deciders are the same as used in the released\footnote{\url{https://docs.bbchallenge.org/CoqBB5_release_v1.0.0/}} \Coq-extracted lists of enumerated machines (in TNF, see Section~\ref{sec:enum}).

In the following tables, formal decider IDs are built as follows: \texttt{LOOP1\_params\_107} corresponds to Loops (Section~\ref{sec:loops}) with step-limit parameter $L=107$; \texttt{NGRAM\_CPS\_IMPL2\_params\_1\_1\_100} corresponds to NGramCPS (Section~\ref{sec:n-gramCPS}) with no augmentation, left n-gram size $1$, right n-gram size $1$, and an additional parameter, here set to $100$, limiting the size of the set of local configurations; \texttt{NGRAM\_CPS\_IMPL1\_params\_4\_2\_2\_600} corresponds to NGramCPS with fixed-length history, the first parameter is history length (set to $4$ in this example), then same as \texttt{IMPL2}; \texttt{NGRAM\_CPS\_LRU\_params\_2\_2\_10000} corresponds to NGramCPS with Least Recent Usage history, parameters are the same as from \texttt{IMPL2}; \texttt{REPWL\_params\_4\_3\_320\_10000} corresponds to RepWL (Section~\ref{sec:RepWL}) with block length $l=4$, repeat threshold $T=3$, block simulation parameter set to $320$ and maximum number of graph nodes set to $10{,}000$; in the case of $S(5)$ \textit{table based} machines (see Section~\ref{app:pipeS5exact}), the use of $\texttt{params\_custom}$ indicates hardcoded parameters that vary for each decided machine.

\subsection{Exact $S(2,4)$ and $S(4)$ pipelines}

Pipelines for $S(2,4)$ and $S(4)$ only use deciders, \ie there are no hardcoded verifier certificates, individual proofs of nonhalting or reduction argument used; see Section~\ref{sec:deciders-overview}. Both $S(4)$ and $S(2,4)$ only use the following deciders: Loops (Section~\ref{sec:loops}), NGramCPS (Section~\ref{sec:n-gramCPS}), and RepWL (Section~\ref{sec:RepWL}); which are all regular deciders (see Section~\ref{sec:deciders-overview}). The pipeline of $S(2,4)$ simulates 24 machines (including the $S(2,4)$ champion) up to $3{,}932{,}964$ steps in order to prove that they halt.

\begin{table}[h!]
    \centering
    \scriptsize
    \begin{tabular}{|lrrr|}
        \hline
        Exact $S(4)$ pipeline                               & Nonhalt   & Halt      & Total decided \\
        \hline
        \texttt{LOOP1\_params\_107}                         & 588{,}373 & 249{,}693 & 838{,}066     \\
        \texttt{NGRAM\_CPS\_IMPL2\_params\_1\_1\_100}       & 11{,}644  & 0         & 11{,}644      \\
        \texttt{NGRAM\_CPS\_IMPL2\_params\_2\_2\_200}       & 4{,}759   & 0         & 4{,}759       \\
        \texttt{NGRAM\_CPS\_IMPL2\_params\_3\_3\_400}       & 1{,}731   & 0         & 1{,}731       \\
        \texttt{NGRAM\_CPS\_IMPL1\_params\_2\_2\_2\_1600}   & 2{,}296   & 0         & 2{,}296       \\
        \texttt{NGRAM\_CPS\_IMPL1\_params\_2\_3\_3\_1600}   & 161       & 0         & 161           \\
        \texttt{NGRAM\_CPS\_IMPL1\_params\_4\_2\_2\_600}    & 174       & 0         & 174           \\
        \texttt{NGRAM\_CPS\_IMPL1\_params\_4\_3\_3\_1600}   & 29        & 0         & 29            \\
        \texttt{NGRAM\_CPS\_IMPL1\_params\_6\_2\_2\_3200}   & 14        & 0         & 14            \\
        \texttt{NGRAM\_CPS\_IMPL1\_params\_6\_3\_3\_3200}   & 10        & 0         & 10            \\
        \texttt{NGRAM\_CPS\_IMPL1\_params\_8\_2\_2\_1600}   & 8         & 0         & 8             \\
        \texttt{NGRAM\_CPS\_IMPL1\_params\_8\_3\_3\_1600}   & 3         & 0         & 3             \\
        \texttt{NGRAM\_CPS\_LRU\_params\_2\_2\_10000}       & 8         & 0         & 8             \\
        \texttt{NGRAM\_CPS\_IMPL1\_params\_10\_4\_4\_10000} & 4         & 0         & 4             \\
        \texttt{REPWL\_params\_4\_3\_320\_10000}            & 2         & 0         & 2             \\
        \hline
        Total                                               & 609{,}216 & 249{,}693 & 858{,}909     \\
        \hline
    \end{tabular}
\end{table}

\begin{table}[h!]
    \centering
    \scriptsize
    \begin{tabular}{|lrrr|}
        \hline
        Exact $S(2,4)$ pipeline                           & Nonhalt       & Halt      & Total decided \\
        \hline
        \texttt{LOOP1\_params\_107}                       & 1{,}262{,}432 & 720{,}959 & 1{,}983{,}391 \\
        \texttt{NGRAM\_CPS\_IMPL2\_params\_1\_1\_400}     & 102{,}018     & 0         & 102{,}018     \\
        \texttt{NGRAM\_CPS\_IMPL2\_params\_2\_2\_800}     & 49{,}224      & 0         & 49{,}224      \\
        \texttt{NGRAM\_CPS\_IMPL2\_params\_3\_3\_400}     & 7{,}518       & 0         & 7{,}518       \\
        \texttt{NGRAM\_CPS\_IMPL2\_params\_4\_4\_800}     & 2{,}286       & 0         & 2{,}286       \\
        \texttt{LOOP1\_params\_4100}                      & 870           & 354       & 1{,}224       \\
        \texttt{REPWL\_params\_2\_3\_320\_400}            & 6{,}012       & 0         & 6{,}012       \\
        \texttt{NGRAM\_CPS\_LRU\_params\_2\_2\_1000}      & 1{,}206       & 0         & 1{,}206       \\
        \texttt{NGRAM\_CPS\_IMPL1\_params\_2\_2\_2\_3000} & 894           & 0         & 894           \\
        \texttt{NGRAM\_CPS\_IMPL1\_params\_2\_3\_3\_1600} & 120           & 0         & 120           \\
        \texttt{NGRAM\_CPS\_IMPL1\_params\_4\_2\_2\_600}  & 12            & 0         & 12            \\
        \texttt{NGRAM\_CPS\_IMPL1\_params\_4\_3\_3\_1600} & 90            & 0         & 90            \\
        \texttt{NGRAM\_CPS\_IMPL1\_params\_6\_2\_2\_3200} & 48            & 0         & 48            \\
        \texttt{NGRAM\_CPS\_IMPL1\_params\_6\_3\_3\_3200} & 36            & 0         & 36            \\
        \texttt{NGRAM\_CPS\_IMPL1\_params\_8\_3\_3\_1600} & 6             & 0         & 6             \\
        \texttt{NGRAM\_CPS\_LRU\_params\_3\_3\_20000}     & 24            & 0         & 24            \\
        \texttt{REPWL\_params\_4\_2\_320\_2000}           & 54            & 0         & 54            \\
        \texttt{REPWL\_params\_6\_2\_320\_2000}           & 12            & 0         & 12            \\
        \texttt{NGRAM\_CPS\_IMPL2\_params\_4\_4\_20000}   & 18            & 0         & 18            \\
        \texttt{HALT\_MAX\_params\_3932964}               & 0             & 24        & 24            \\
        \hline
        Total                                             & 1{,}432{,}880 & 721{,}337 & 2{,}154{,}217 \\
        \hline
    \end{tabular}
\end{table}

\subsection{Exact $S(5)$ pipeline}\label{app:pipeS5exact}

\begin{table}[H]
    \centering

    \begin{tabular}{|lrrr|}
        \hline
        Exact $S(5)$ pipeline                             & Nonhalt         & Halt           & Total decided   \\
        \hline
        \texttt{LOOP1\_params\_130}                       & 126{,}950{,}828 & 48{,}367{,}435 & 175{,}318{,}263 \\
        \texttt{NGRAM\_CPS\_IMPL2\_params\_1\_1\_100}     & 3{,}291{,}498   & 0              & 3{,}291{,}498   \\
        \texttt{NGRAM\_CPS\_IMPL2\_params\_2\_2\_200}     & 1{,}328{,}432   & 0              & 1{,}328{,}432   \\
        \texttt{NGRAM\_CPS\_IMPL2\_params\_3\_3\_400}     & 497{,}142       & 0              & 497{,}142       \\
        \texttt{NGRAM\_CPS\_IMPL1\_params\_2\_2\_2\_1600} & 681{,}789       & 0              & 681{,}789       \\
        \texttt{NGRAM\_CPS\_IMPL1\_params\_2\_3\_3\_1600} & 91{,}101        & 0              & 91{,}101        \\
        \texttt{LOOP1\_params\_4100}                      & 43{,}269        & 12{,}276       & 55{,}545        \\
        \texttt{NGRAM\_CPS\_IMPL1\_params\_4\_2\_2\_600}  & 60{,}468        & 0              & 60{,}468        \\
        \texttt{NGRAM\_CPS\_IMPL1\_params\_4\_3\_3\_1600} & 28{,}868        & 0              & 28{,}868        \\
        \texttt{NGRAM\_CPS\_IMPL1\_params\_6\_2\_2\_3200} & 16{,}084        & 0              & 16{,}084        \\
        \texttt{NGRAM\_CPS\_IMPL1\_params\_6\_3\_3\_3200} & 5{,}213         & 0              & 5{,}213         \\
        \texttt{NGRAM\_CPS\_IMPL1\_params\_8\_2\_2\_1600} & 2{,}279         & 0              & 2{,}279         \\
        \texttt{NGRAM\_CPS\_IMPL1\_params\_8\_3\_3\_1600} & 855             & 0              & 855             \\
        \texttt{TABLE\_BASED}                             & 8{,}045         & 183            & 8{,}228         \\
        \texttt{NORMAL\_FORM\_TABLE\_BASED}               & 24              & 0              & 24              \\
        \hline
        Total                                             & 133{,}005{,}895 & 48{,}379{,}894 & 181{,}385{,}789 \\
        \hline
    \end{tabular}
\end{table}

The $S(5)$ pipeline differs from $S(2,4)$ and $S(4)$ in the following ways: for $8{,}045$ machines (marked as $\texttt{TABLE\_BASED}$), \CoqBB hardcodes either the parameters with which deciders should be called to solve them or their verifier certificates (see Section~\ref{sec:deciders-overview}); also, the 13 Sporadic Machines are proven using individual \Coq proofs, see Section~\ref{sec:sporadic}. Finally, machines marked \texttt{NORMAL\_FORM\_TABLE\_BASED} above are 24 machines which are proved using \texttt{1RB}-reduction (see Section~\ref{sec:deciders-overview}); we give the methods used to solve the \texttt{1RB} machines they reduce to.
\begin{table}[H]
    \centering
    \begin{minipage}{0.55\linewidth}
        \centering
        \begin{tabular}{|lrr|}
            \hline
            \texttt{TABLE\_BASED} machines             & Nonhalt & Halt \\
            \hline
            \texttt{REP\_WL\_params\_custom}           & 6{,}576 & 0    \\
            \texttt{NGRAM\_CPS\_IMPL2\_params\_custom} & 795     & 0    \\
            \texttt{NGRAM\_CPS\_IMPL1\_params\_custom} & 436     & 0    \\
            \texttt{HALT\_DECIDER\_47176870}           & 0       & 183  \\
            \texttt{LOOP1\_params\_1050000}            & 2       & 0    \\
            \texttt{NGRAM\_CPS\_LRU\_params\_custom}   & 182     & 0    \\
            \texttt{REPWL\_params\_20\_2}              & 1       & 0    \\
            \texttt{FAR\_certificates}                 & 23      & 0    \\
            \texttt{WFAR\_certificates}                & 17      & 0    \\
            \texttt{SPORADIC\_MACHINES}                & 13      & 0    \\
            \hline
            Total                                      & 8{,}045 & 183  \\
            \hline
        \end{tabular}
    \end{minipage}%
    \hfill
    \begin{minipage}{0.35\linewidth}
        \centering
        \begin{tabular}{|lr|}
            \hline
            \texttt{NORMAL\_FORM\_TABLE\_BASED} & Count \\
            \hline
            FAR                                 & 9     \\
            WFAR                                & 14    \\
            Sporadic Machine (Finned \#3)       & 1     \\
            \hline
            Total                               & 24    \\
            \hline
        \end{tabular}
    \end{minipage}
\end{table}

\vspace{-1em}
\section{Use of AI}

The $S(5)$ proof (2022--2024) is roughly concomitant with the striking progress of Large Language Models in AI. We disclose their use -- or lack thereof -- in our project:

\begin{itemize}
    \item \textbf{Code co-pilot.} AI-based code completion \textbf{was not} used in \CoqBB. AI-based code completion is unlikely to have been used for deciders written before \CoqBB (see \url{https://wiki.bbchallenge.org/wiki/Code_repositories}) as most of them were developed before AI-based code completion was mature. AI-based code completion was used to study \CoqBB in preparation of this paper (for instance: accelerating the reproduction of some algorithms, or translating \Coq code to Python in order to understand it better).
    \item \textbf{Writing co-pilot.} AI-based copy editing was used (i) to verify the spelling and grammar of the text and, (ii) to compress three lengthy paragraphs of human-written text (marked with a \texttt{\%SIA} comment in the LaTeX source, original paragraphs are kept commented). AI was \textbf{extensively used} to make and improve figures, mainly through LaTeX and \texttt{tikz} code generation, often prompting the AI with a hand-drawn layout of what was wanted and then iterating together -- \eg Figure~\ref{fig:TNF}, Figure~\ref{fig:WFAR}, Figure~\ref{fig:sporadic}, Figure~\ref{fig:zoology}.

\end{itemize}

\end{document}